\newcommand{\LongVersion}[1]{#1}
\newcommand{\ShortVersion}[1]{}
\newcommand{\VSpace}[1]{}
\newcommand{\V}{\forall}
\newcommand{\E}{\exists}
\newcommand{\Dmd}{\Diamond}
\newcommand{\lDmd}[1]{\langle #1 \rangle}
\newcommand{\rBotz}{(\bot_0)}
\newcommand{\rBot}{(\bot)}
\newcommand{\rAnd}{(\land)}
\newcommand{\rOr}{(\lor)}
\newcommand{\rBoxSc}{(\Box_{;})}
\newcommand{\rDmdSc}{(\Dmd_{;})}
\newcommand{\rBoxQm}{(\Box_{?})}
\newcommand{\rDmdQm}{(\Dmd_{?})}
\newcommand{\rBoxCp}{(\Box_{\cup})}
\newcommand{\rDmdCp}{(\Dmd_{\cup})}
\newcommand{\rBoxSt}{(\Box_{*})}
\newcommand{\rDmdSt}{(\Dmd_{*})}
\newcommand{\rTrans}{(trans)}
\newcommand{\rBotzP}{(\bot'_0)}
\newcommand{\rBotP}{(\bot')}
\newcommand{\rAndP}{(\land')}
\newcommand{\rOrP}{(\lor')}
\newcommand{\rBoxScP}{(\Box'_{;})}
\newcommand{\rDmdScP}{(\Dmd'_{;})}
\newcommand{\rBoxQmP}{(\Box'_{?})}
\newcommand{\rDmdQmP}{(\Dmd'_{?})}
\newcommand{\rBoxCpP}{(\Box'_{\cup})}
\newcommand{\rDmdCpP}{(\Dmd'_{\cup})}
\newcommand{\rBoxStP}{(\Box'_{*})}
\newcommand{\rDmdStP}{(\Dmd'_{*})}
\newcommand{\rBoxP}{(\Box')}
\newcommand{\rTransP}{(trans')}
\newcommand{\cPDL}{$\mathcal{C}_{\mathrm{PDL}}$}
\newcommand{\cPDLA}{$\mathcal{C}_{\mathrm{PDL+ABox}}$}
\newcommand{\mM}{\mathcal{M}}
\newcommand{\mL}{\mathcal{L}}
\newcommand{\mG}{\mathcal{G}}
\newcommand{\mA}{\mathcal{A}}
\newcommand{\props}{\Phi_0}
\newcommand{\mindices}{\Pi_0}
\newcommand{\rfs}{\mathit{rfs}} 
\newcommand{\fracc}[2]{\displaystyle{\frac{\;#1\;}{\;#2\;}}}
\newtheorem{Algorithm}{Algorithm}
\newenvironment{algorithm}{\begin{Algorithm}\begin{em}}{\end{em}\end{Algorithm}}
\newcommand{\comment}[1]{}
\newcommand{\Unexpanded}{\textsf{unexpanded}}
\newcommand{\Expanded}{\textsf{expanded}}
\newcommand{\Unsat}{\textsf{unsat}}
\newcommand{\Sat}{\textsf{sat}}
\newcommand{\EXPTIME}{{\footnotesize\textsc{ExpTime}}}
\newcommand{\NEXPTIME}{{\footnotesize\textsc{NExpTime}}}
\newcommand{\ALC}{\mathcal{ALC}}
\def\ramka#1{\begin{center}
\fbox{\parbox{11cm}{\begin{center}#1\end{center}}}
\end{center}}
\def\trojkat{\mbox{{\scriptsize$\!\vartriangleleft$}}}
\newcommand{\koniec}{\mbox{}\hfill\trojkat}
\title{Optimal Tableau Decision Procedures for {PDL}}
\titlerunning{Optimal Tableau Decision Procedures for {PDL}}
\title{\VSpace{-3em}Checking Consistency of an ABox\\ w.r.t. Global Assumptions in PDL\thanks{Supported by the MNiSW grant N~N206~399334.}}
\author{Linh Anh Nguyen \and Andrzej Sza\l{}as}
\institute{
Institute of Informatics, University of Warsaw\\
Banacha 2, 02-097 Warsaw, Poland\\
\email{\{nguyen,andsz\}@mimuw.edu.pl}\\
\LongVersion{
\bigskip
February 2, 2009 (last revised: \today)
}
}
\authorrunning{L.A. Nguyen and A. Sza{\l}as}
\begin{document}
\maketitle \sloppy

\begin{abstract}\VSpace{-2em}
We reformulate Pratt's tableau decision procedure of checking satisfiability of a~set of formulas in PDL. Our formulation is simpler and more direct for implementation. Extending the method we give the first \EXPTIME\ (optimal) tableau decision procedure not based on transformation for checking consistency of an ABox w.r.t.\ a~TBox in PDL (here, PDL is treated as a~description logic). We also prove the new result that the data complexity of the instance checking problem in PDL is coNP-complete.
\end{abstract}

\VSpace{-2em}
\section{Introduction}\VSpace{-0.5em}

Propositional dynamic logic (PDL) is a~multi-modal logic introduced by Fischer and Ladner \cite{FischerLadner79} for reasoning about programs. It is useful not only for program verification but also for other fields of computer science like knowledge representation and artificial intelligence (e.g., \cite{HalpernMoses92,HKT00,handbook-AR-2001,ddks2007}). For example, the description logic $\ALC_{reg}$, a~notational variant of PDL, can be used for reasoning about structured knowledge.

The problem of checking satisfiability of a~set of formulas in PDL is \EXPTIME-complete. This result was established by Fischer and Ladner~\cite{FischerLadner79}, but their decision procedure for PDL is via filtration and canonical model and therefore is not really practical. The first practical and optimal (\EXPTIME) decision procedure for PDL was given by Pratt~\cite{Pratt80}. The essence of his procedure is based on constructing an ``and-or'' graph for the considered set of formulas by using tableau rules and global caching, and then checking whether a~model for the set can be extracted from the graph. However, the formulation of his procedure is a~bit too indirect: it goes via a~labeled tableau calculus, tree-like labeled tableaux, tree-like traditional (``lean'') tableaux, and ``and-or'' graphs.

De Giacomo and Massacci~\cite{de-giacomo-massacci-converse-pdl} gave a~\NEXPTIME\ algorithm for checking satisfiability in CPDL (i.e., PDL with converse) and \ShortVersion{informally} described how to transform the algorithm to an \EXPTIME\ version. 
\LongVersion{However, the description is informal and unclear: the transformation is based on Pratt's global caching method formulated for PDL~\cite{Pratt80}, but no global caching method has been formalized and proved sound for labeled tableaux that allow modifying labels of ancestor nodes in order to deal with converse.\footnote{Gor{\'e} and Nguyen have recently formalized sound global caching
\LongVersion{\cite{GoreNguyen05tab,GoreNguyenDL07,GoreNguyenTab07,GoreNguyen07clima,GoreNguyen08CSP}}
\ShortVersion{\cite{GoreNguyenTab07,GoreNguyen08CSP}}
for
\LongVersion{{\em traditional} (unlabeled)}
\ShortVersion{{\em unlabeled}}
tableaux in a~number of modal logics without the $*$ operator, which {\em never modify ancestor nodes}.}
} 
Abate et al.~\cite{AbateGW07} gave a~``single-pass'' tableau decision procedure for checking satisfiability in PDL.
Their algorithm does not exploit global caching~\cite{Pratt80,GoreNguyen08CSP} and has complexity 2\EXPTIME\ in the
worst cases. There are a~few prototype implementations for checking satisfiability in
PDL~\cite{SchmidtPDLtab,LoTREC,AbateGW07}.

There is a~tight relationship between multi-modal logics and description logics which will often be exploited in this paper. Two basic components of description logic theories are ABoxes and TBoxes. An ABox ({\em assertion box}) consists of facts and a~TBox ({\em terminological box}) consists of formulas expressing relationships between concepts. Two basic reasoning problems considered in description logics, amongst others, are:\VSpace{-0.5em}
 \begin{enumerate}
 \item the problem of checking consistency of an ABox w.r.t.\ a~TBox,
 \item the instance checking problem.
 \end{enumerate}\VSpace{-0.5em}


The first tableau-based procedure for $\ALC_{reg}$ (PDL) in the description logic context was proposed by Baader~\cite{ijcai-Baader91} (the correspondence between $\ALC_{reg}$ and PDL had not yet been known). His procedure, however, has non-optimal complexity 2\EXPTIME. The correspondence between description logics like $\ALC_{reg}$ and PDL was first described in Schild's paper~\cite{Schild91}. 
In~\cite{DeGiacomoThesis}, encoding the ABox by ``nominals'' and ``internalizing'' the TBox, De Giacomo showed that the complexity of checking consistency of an ABox w.r.t.\ a~TBox in CPDL is \EXPTIME-complete. In~\cite{GiacomoL96}, using a transformation that encodes the ABox by a concept assertion plus terminology axioms, De Giacomo and Lenzerini showed that the mentioned problem is also \EXPTIME-complete for the description logic $\mathcal{CIQ}$ (an extension of CPDL). 


In this paper, we reformulate Pratt's algorithm of checking satisfiability of a~set of formulas in PDL. Our formulation is directly based on building an ``and-or'' graph by using traditional (unlabeled) tableau rules and global caching and is therefore simpler and more direct for implementation. Extending the method we give the {\em first} \EXPTIME\ (optimal) tableau decision procedure not based on transformation (encoding) for checking consistency of an ABox w.r.t.\ a~TBox in~PDL. 

Despite that the upper-bound \EXPTIME\ is known for the complexity of the mentioned satisfiability problem in CPDL, implemented tableau provers for description logics usually have non-optimal complexity 2\EXPTIME.
\LongVersion{In the well-known overview~\cite{BaaderSattler01},}
\ShortVersion{In the overview~\cite{BaaderSattler01},}
Baader and Sattler wrote: {\em ``The point in designing these [non-optimal] algorithms was not to prove worst-case complexity results, but \ldots\ to obtain `practical' algorithms \ldots\ that are easy to implement and optimise, and which behave well on realistic knowledge bases. Nevertheless, the fact that `natural' tableau algorithms for such \EXPTIME-complete logics are usually \NEXPTIME-algorithms is an unpleasant phenomenon. \ldots\ Attempts to design \EXPTIME-tableaux for such logics (De Giacomo et al., 1996; De Giacomo and Massacci, 1996; Donini and Massacci, 1999) usually lead to rather complicated (and thus not easy to implement) algorithms, which (to the best of our knowledge) have not been implemented yet.''}~\cite[page 26]{BaaderSattler01}. 

Our formulation of tableau calculi and decision procedures for PDL is short and clear, which makes the procedures natural and easy to implement. The first author has implemented a tableau prover called TGC for the basic description logic $\ALC$, which is also based on ``and-or'' graphs with global caching. The test results of TGC on the sets T98-sat and T98-kb of DL'98 Systems Comparison are comparable with the test results of the best systems DLP-98 and FaCT-98 that took part in that comparison (see \cite{Nguyen08CSP-FI}). One can say that the mentioned test sets are not representative for practical applications, but the comparison at least shows that optimization techniques can be applied (not only for $\ALC$ but also PDL) to obtain decision procedures that are both efficient in practice and optimal w.r.t.\ complexity. 

We also study the data complexity of the instance checking problem in PDL. For the well-known description logic $\mathcal{SHIQ}$, Hustadt et al.~\cite{HustadtMS05} proved that the data complexity of that problem is coNP-complete. The lower bound for the data complexity of that problem in PDL ($\ALC_{reg}$) is known to be coNP-hard (shown for $\ALC$ by Schaerf in~\cite{Schaerf94}). In this paper, by establishing the upper bound, we prove the new result that the data complexity of the instance checking problem in PDL is coNP-complete. 

The rest of this paper is structured as follows. In Section~\ref{section: defs PDL}, we define syntax and semantics of PDL. In Section~\ref{section:problems} we formulate the problems we deal with. In Section~\ref{section: cal1}, we present a~tableau calculus for checking satisfiability of a~set of formulas w.r.t.\ a~set of global assumptions in PDL. In Section~\ref{section: ABoxes}, we extend that calculus for checking consistency of an ABox w.r.t.\ a~set of global assumptions (i.e.,\ a~TBox) in PDL. In Section~\ref{section: dp-cr}, we give decision procedures based on our tableau calculi for the mentioned problems and derive the data complexity result. \LongVersion{In Section~\ref{section: opts}, we discuss optimizations for our decision procedures.} Conclusions are given in Section~\ref{section: conc}. 
\LongVersion{Proofs of soundness and completeness of our calculi are presented in the appendices.}
\ShortVersion{Due to the lack of space, proofs of soundness and completeness of the calculi are contained in the full version~\cite{pdl-tab-long} of this~paper.}

\VSpace{-0.5em}
\section{Propositional Dynamic Logic}\VSpace{-0.5em}
\label{section: defs PDL}

We use $\mindices$ to denote the set of {\em atomic programs}, and $\props$ to denote the set of {\em propositions}
(i.e.,\ atomic formulas). We denote elements of $\mindices$ by letters like $\sigma$, and elements of $\props$ by
letters like $p$, $q$. {\em Formulas} and {\em programs} of PDL are defined respectively by the following BNF grammar
rules:
\[
\begin{array}{rcl}
\varphi & ::= &
    \top
    \mid \bot
    \mid p
    \mid \lnot \varphi
    \mid \varphi \land \varphi
    \mid \varphi \lor \varphi
    \mid \varphi \to \varphi
    \mid \lDmd{\alpha}\varphi
    \mid [\alpha]\varphi \\[0.5ex]
\alpha & ::= &
    \sigma
    \mid \alpha;\alpha
    \mid \alpha \cup \alpha
    \mid \alpha^*
    \mid \varphi?
\end{array}
\]
We use letters like $\alpha$, $\beta$ to denote programs, and $\varphi$, $\psi$, $\xi$ to denote formulas.

A {\em Kripke model} is a~pair $\mM = \langle \Delta^\mM, \cdot^\mM\rangle$, where $\Delta^\mM$ is a~set of {\em
states}, and $\cdot^\mM$ is an interpretation function that maps each proposition $p$ to a~subset $p^\mM$ of
$\Delta^\mM$, and each atomic program $\sigma$ to a~binary relation $\sigma^\mM$ on $\Delta^\mM$. The interpretation
function is extended to interpret complex formulas and complex programs as follows:\VSpace{-0.5em}
\[
\begin{array}{l}
\top^\mM   =   \Delta^\mM,\ \ \bot^\mM  =  \emptyset,\ \
(\lnot\varphi)^\mM  =  \Delta^\mM \setminus \varphi^\mM\\
(\varphi \land \psi)^\mM   =   \varphi^\mM \cap \psi^\mM,\ \ (\varphi \lor \psi)^\mM   =   \varphi^\mM \cup \psi^\mM, \
\
(\varphi \to \psi)^\mM   =   (\lnot\varphi \lor \psi)^\mM \\
(\lDmd{\alpha}\varphi)^\mM   =   \{ x \in \Delta^\mM \mid
    \E y[\alpha^\mM(x,y) \land \varphi^\mM(y)] \} \\
([\alpha]\varphi)^\mM   =   \{ x \in \Delta^\mM \mid
    \V y[ \alpha^\mM(x,y) \to \varphi^\mM(y)] \} \\[1.5ex]

(\alpha;\beta)^\mM   =   \alpha^\mM \circ \beta^\mM = \{ (x,y) \mid \E z[\alpha^\mM(x,z) \land \beta^\mM(z,y)] \}\\
(\alpha \cup \beta)^\mM   =   \alpha^\mM \cup \beta^\mM, \ \ (\alpha^*)^\mM   =   (\alpha^\mM)^*, \ \ (\varphi?)^\mM
=   \{ (x,x) \mid \varphi^\mM(x) \}\VSpace{-0.5em}
\end{array}
\]

We write $\mM,w \models \varphi$ to denote $w \in \varphi^\mM$. For a~set $X$ of formulas, we write $\mM,w \models X$
to denote that $\mM,w \models \varphi$ for all $\varphi \in X$. If $\mM,w \models \varphi$ (resp.\ $\mM,w \models X$),
then we say that $\mM$ {\em satisfies} $\varphi$ (resp.\ $X$) {\em at} $w$, and that $\varphi$ (resp.\ $X$) is {\em
satisfied at} $w$ in $\mM$. We say that $\mM$ {\em validates} $X$ if $\mM,w \models X$ for all $w \in \Delta^\mM$, and
that $X$ is {\em satisfiable} w.r.t.\ a~set $\Gamma$ of formulas used as {\em global assumptions} if there exists a~
Kripke model that validates $\Gamma$ and satisfies $X$ at some state.

\ShortVersion{
The Fischer-Ladner closure $FL(X)$ for a set $X$ of formulas in negation normal form (NNF) is defined as usual (see~\cite{PDL-long,HKT00}).\footnote{In NNF, the connective $\to$ does not occur and $\lnot$ occurs only immediately before propositions. Every formula can be transformed to an equivalent formula in NNF.}  
}
\LongVersion{
The Fischer-Ladner closure $FL(\varphi)$ and the sets $FL^\Box([\alpha]\varphi)$ and $FL^\Dmd(\lDmd{\alpha}\varphi)$, where $\varphi$ is a formula in negation normal form (NNF), are the sets of formulas defined as follows:\footnote{In NNF, the connective $\to$ does not occur and $\lnot$ occurs only immediately before propositions. Every formula can be transformed to an equivalent formula in NNF.}
\[
\begin{array}{l}
FL(\top)  =  \{\top\},\;\;
FL(\bot)  =  \{\bot\},\;\;
FL(p)  =  \{p\}, \;\;
FL(\lnot p)  =  \{\lnot p\}, \\
FL(\varphi \land \psi)   =   \{\varphi \land \psi\} \cup FL(\varphi) \cup FL(\psi), \\
FL(\varphi \lor \psi)   =   \{\varphi \lor \psi\} \cup FL(\varphi) \cup FL(\psi), \\
FL([\alpha]\varphi)   =   FL^\Box([\alpha]\varphi) \cup FL(\varphi), \;\;
FL(\lDmd{\alpha}\varphi)   =   FL^\Dmd(\lDmd{\alpha}\varphi) \cup FL(\varphi), \\[1ex]
FL^\Box([\sigma]\varphi)   =   \{[\sigma]\varphi\}, \\
FL^\Box([\alpha;\beta]\varphi)   =   \{[\alpha;\beta]\varphi\} \cup FL^\Box([\alpha][\beta]\varphi) \cup FL^\Box([\beta])\varphi, \\
FL^\Box([\alpha \cup \beta]\varphi)   =   \{[\alpha \cup \beta]\varphi\} \cup FL^\Box([\alpha]\varphi) \cup FL^\Box([\beta])\varphi, \\
FL^\Box([\alpha^*]\varphi)   =   \{[\alpha^*]\varphi\} \cup FL^\Box([\alpha][\alpha^*]\varphi), \\
FL^\Box([\psi?]\varphi)   =   \{[\psi?]\varphi\} \cup FL(\overline{\psi}), \\[1ex]
FL^\Dmd(\lDmd{\sigma}\varphi)   =   \{\lDmd{\sigma}\varphi\}, \\
FL^\Dmd(\lDmd{\alpha;\beta}\varphi)   =   \{\lDmd{\alpha;\beta}\varphi\} \cup FL^\Dmd(\lDmd{\alpha}\lDmd{\beta}\varphi) \cup FL^\Dmd(\lDmd{\beta})\varphi, \\
FL^\Dmd(\lDmd{\alpha \cup \beta}\varphi)   =   \{\lDmd{\alpha \cup \beta}\varphi\} \cup FL^\Dmd(\lDmd{\alpha}\varphi) \cup FL^\Dmd(\lDmd{\beta})\varphi, \\
FL^\Dmd(\lDmd{\alpha^*}\varphi)   =   \{\lDmd{\alpha^*}\varphi\} \cup FL^\Dmd(\lDmd{\alpha}\lDmd{\alpha^*}\varphi), \\
FL^\Dmd(\lDmd{\psi?}\varphi)   =   \{\lDmd{\psi?}\varphi\} \cup FL(\psi).
\end{array}
\]

For a~set $X$ of formulas in NNF, define $FL(X) = \bigcup_{\varphi \in X} FL(\varphi)$. 
}

\VSpace{-0.5em}
\section{The Problems We Address}\label{section:problems}\VSpace{-0.5em}

When interpreting PDL as a~description logic, states in a~Kripke model, formulas, and programs are regarded
respectively as ``objects'', ``concepts'', and ``roles''. A finite set $\Gamma$ of global assumptions is treated as
a~``TBox''. As for description logics, we introduce ABoxes and consider the problem of checking whether a~given ABox is
consistent with a~given TBox, which is related to the instance checking problem.

We prefer to use the terminology of PDL instead of that of $\ALC_{reg}$ because this work is related to Pratt's work on PDL.
We use the term {\em state variable} as an equivalent for the term ``individual'' used in description logic, and use
letters like $a$, $b$, $c$ to denote state variables. We extend the notion of Kripke model so that the interpretation
function $\cdot^\mM$ of a~Kripke model $\mM$ maps each state variable $a$ to a~state $a^\mM$ of $\mM$.

An {\em ABox} is a~finite set of {\em assertions} of the form $a\!:\!\varphi$ or $\sigma(a,b)$, where $\varphi$ is a formula in NNF and $a$ is a state variable. The meaning of $a\!:\!\varphi$ is that formula $\varphi$ is satisfied in state $a$. An ABox is {\em extensionally reduced} if it contains only assertions of the form $a\!:\!p$ or $\sigma(a,b)$.
We will refer to ABox assertions also as formulas. When necessary, we refer to formulas that are not ABox assertions as {\em traditional formulas}.

A {\em TBox} is a~finite set of traditional formulas in NNF.

A Kripke model $\mM$ {\em satisfies} an ABox $\mA$ if $a^\mM \in \varphi^\mM$ for all $(a\!:\!\varphi) \in \mA$ and
$(a^\mM,b^\mM) \in \sigma^\mM$ for all $\sigma(a,b) \in \mA$. An ABox $\mA$ is {\em satisfiable w.r.t.} (or {\em
consistent with}) a~TBox $\Gamma$ iff there exists a~Kripke model $\mM$ that satisfies $\mA$ and validates~$\Gamma$.

The first problem we address is the problem of checking satisfiability of an ABox w.r.t.\ a~TBox

Consider the use of PDL as a~description logic. A pair $(\mA,\Gamma)$ of an ABox $\mA$ and a~TBox $\Gamma$ is treated
as a~knowledge base. A Kripke model that satisfies $\mA$ and validates $\Gamma$ is called a~model of $(\mA,\Gamma)$.
Given a~(traditional) formula $\varphi$ (treated as a~``concept'') and a~state variable $a$ (treated as an
``individual''), the problem of checking whether $a^\mM \in \varphi^\mM$ in every model $\mM$ of $(\mA,\Gamma)$ is called
the instance checking problem (in PDL).

The second problem considered in this paper is the instance checking problem. The condition to check is denoted in such
cases by $(\mA,\Gamma) \models \varphi(a)$.

\VSpace{-0.5em}
\section{A Tableau Calculus for PDL}\VSpace{-0.5em}
\label{section: cal1}

In this section, we do not consider ABoxes yet, and by a ``formula'' we mean a ``traditional formula''. Let $X$ and
$\Gamma$ be finite sets of formulas.
Consider the problem of checking whether $X$ is satisfiable in PDL w.r.t.\ the set $\Gamma$ of global assumptions. We
assume that formulas are represented in NNF. We write $\overline{\varphi}$ to denote the NNF of~$\lnot\varphi$.


We will define tableaux as ``and-or'' graphs. The {\em contents} of a {\em node} $v$ of an ``and-or'' graph are a data structure consisting of two sets $\mL(v)$ and $\rfs(v)$ of formulas, where $\mL(v)$ is called the {\em label} of $v$, and $\rfs(v)$ is called ``the set of formulas that have been reduced by a static rule after the last application of the transitional rule''. 

Our calculus \cPDL\ will be specified as a finite set of tableau rules, which are used to expand nodes of ``and-or'' graphs. A {\em tableau rule} is specified with the following informations: 
\begin{itemize}
\item the kind of the rule: an {\em``and''-rule} or an {\em``or''-rule}, 
\item the conditions for applicability of the rule (if any),
\item the priority of the rule, 
\item the number of successors of a node resulting from applying the rule to it, and the way to compute their contents.
\end{itemize}

Usually, a tableau rule is written downwards, with a set of formulas above the line as the {\em premise}, which represents the label of the node to which the rule is applied, and a number of sets of formulas below the line as the {\em (possible) conclusions}, which represent the labels of the successor nodes resulting from the application of the rule.\LongVersion{\footnote{In~\cite{Gore99,GoreNguyenTab07}, ``premise'' and ``possible conclusion'' are called {\em numerator} and {\em denominator}, respectively.}} Possible conclusions of an ``or''-rule are separated by $\mid$, while conclusions of an ``and''-rule are separated/specified using~$\&$. If a~rule is a~unary rule (i.e.\ a rule with only one possible conclusion) or an ``and''-rule then its conclusions are ``firm'' and we ignore the word ``possible''.
An ``or''-rule has the meaning that, if the premise is satisfiable w.r.t.\ $\Gamma$ then some of the possible conclusions is also satisfiable w.r.t.\ $\Gamma$. On the other hand, an ``and''-rule has the meaning that, if the premise is satisfiable w.r.t.\ $\Gamma$ then all of the conclusions are also satisfiable w.r.t.\ $\Gamma$ (possibly in different states of the model under construction).
Note that, apart from the labels, there are also sets $\rfs(\_)$ to be specified for the successor nodes. 

\begin{table}[t]
\ramka{
\(
\begin{array}{ll}
\rBotz\; \fracc{Y, \bot}{\bot} \hspace{16.5ex} &
\rBot\; \fracc{Y, p, \lnot p}{\bot} \\
\ \\
\rAnd\; \fracc{Y, \varphi \land \psi}{Y, \varphi, \psi} &
\rOr\; \fracc{Y, \varphi \lor \psi}{Y, \varphi \mid Y, \psi} \\
\ \\
\rBoxSc\; \fracc{Y, [\alpha;\beta]\varphi}{Y, [\alpha][\beta]\varphi} &
\rDmdSc\; \fracc{Y, \lDmd{\alpha;\beta}\varphi}{Y, \lDmd{\alpha}\lDmd{\beta}\varphi} \\
\ \\
\rBoxCp\; \fracc{Y, [\alpha \cup \beta]\varphi}{Y, [\alpha]\varphi, [\beta]\varphi} &
\rDmdCp\; \fracc{Y, \lDmd{\alpha \cup \beta}\varphi}{Y, \lDmd{\alpha}\varphi \mid Y, \lDmd{\beta}\varphi} \\
\ \\
\rBoxQm\; \fracc{Y, [\psi?]\varphi}{Y, \overline{\psi} \mid Y, \varphi} &
\rDmdQm\; \fracc{Y, \lDmd{\psi?}\varphi}{Y, \psi, \varphi} \\
\ \\
\rBoxSt\; \fracc{Y, [\alpha^*]\varphi}{Y, \varphi, [\alpha][\alpha^*]\varphi} &
\rDmdSt\; \fracc{Y, \lDmd{\alpha^*}\varphi}{Y, \varphi \mid Y, \lDmd{\alpha}\lDmd{\alpha^*}\varphi} \\
\ \\
\multicolumn{2}{l}{\rTrans\; \fracc{Y}{\& \{\; (\{\varphi\} \cup \{\psi \textrm{ s.t. } [\sigma]\psi \in Y\} \cup \Gamma) \textrm{ s.t. } \lDmd{\sigma}\varphi \in Y \;\}}}
\end{array}
\)
} \caption{Rules of the tableau calculus \cPDL}\VSpace{-2.5em} \label{table: cPDL}
\end{table}

We use $Y$ to denote a~set of formulas, and write $Y,\varphi$ for $Y \cup \{\varphi\}$. 

Define tableau calculus \cPDL\ w.r.t.\ a~set $\Gamma$ of global assumptions to be the set of the tableau rules given in
Table~\ref{table: cPDL}. The rule $\rTrans$ is the only ``and''-rule and the only {\em transitional rule}.
Instantiating this rule, for example, to $Y = \{\lDmd{\sigma}p,\lDmd{\sigma}q,[\sigma]r\}$ and $\Gamma = \{s\}$ we get
two conclusions: $\{p,r,s\}$ and $\{q,r,s\}$. The other rules of \cPDL\ are ``or''-rules, which are also called {\em
static rules}.\LongVersion{\footnote{Unary static rules can be treated either as ``and''-rules or as ``or''-rules.
In~\cite{GoreNguyen08CSP}, the rules $\rBotz$ and $\rBot$ are classified as {\em terminal rules}.}} The intuition of the
sorting of static/transitional is that the static rules keep us in the same state of the model under construction,
while each conclusion of the transitional rule takes us to a~new state. 
For any rule of \cPDL\ except $\rTrans$, the distinguished formulas of the premise are called the {\em principal formulas} of the rule. The principal formulas of the rule $\rTrans$ are the formulas of the form $\lDmd{\sigma}\varphi$ of the premise.
We assume that any one of the rules $\rAnd$, $\rOr$, $\rBoxSc$, $\rBoxCp$, $\rBoxQm$, $\rBoxSt$ is applicable to a node $v$ only when the principal formula does not belong to $\rfs(v)$. 
Applying a static rule different from $\rBotz$ and $\rBot$ to a node $v$, for any successor node $w$ of $v$, let $\rfs(w)$ be the set that extends $\rfs(v)$ with the principal formula of the applied rule. 
Applying any other rule to a node $v$, for any successor node $w$ of $v$, let $\rfs(w) = \emptyset$.

Observe that, by using $\rfs(\_)$ and the restriction on applicability of the rules $\rAnd$, $\rOr$, $\rBoxSc$, $\rBoxCp$, $\rBoxQm$, and $\rBoxSt$, in any sequence of applications of static rules a formula of the form $\varphi \land \psi$, $\varphi \lor \psi$, $[\alpha;\beta]\varphi$, $[\alpha\cup\beta]\varphi$, $[\psi?]\varphi$, or $[\alpha^*]\varphi$ is reduced (as a principal formula) at most once. We do not adopt such a restriction for the rules $\rDmdSc$, $\rDmdCp$, $\rDmdQm$, and $\rDmdSt$ because we will require formulas of the form $\lDmd{\alpha}\varphi$ to be ``realized'' (in a finite number of steps).

We assume the following preferences for the rules of \cPDL: the rules $\rBotz$ and $\rBot$ have the highest priority;
unary static rules have a~higher priority than non-unary static rules; all the static rules have a~higher priority than
the transitional rule $\rTrans$.

An {\em ``and-or'' graph for $(X,\Gamma)$}, also called a~{\em tableau for $(X,\Gamma)$}, is an ``and-or'' graph defined as
follows.
The initial node $\nu$ of the graph, called the {\em root} of the graph, is specified by $\mL(\nu) = X \cup \Gamma$ and $\rfs(\nu) = \emptyset$. 
For every node $v$ of the graph, if a tableau rule of \cPDL\ is applicable to the label of $v$ in the sense that an instance of the rule has $\mL(v)$ as the premise and $Z_1$, \ldots, $Z_k$ as the possible conclusions, then choose such a~rule accordingly to the preference\footnote{If there are several applicable rules with the same priority, choose any one of them.} and apply it to $v$ to create $k$ successors $w_1,\ldots,w_k$ of $v$ with $\mL(w_i) = Z_i$ for $1 \leq i \leq k$. If the graph already contains a node $w'_i$ with the same contents as $w_i$ then instead of creating a new node $w_i$ as a~successor of $v$ we just connect $v$ to $w'_i$ and assume $w_i = w'_i$. If the applied rule is $\rTrans$ then we {\em label} the edge $(v,w_i)$ by the principal formula corresponding to the successor $w_i$. If the rule expanding $v$ is an ``or''-rule then $v$ is an {\em ``or''-node}, else $v$ is an {\em ``and''-node}. The information about which rule is applied to $v$ is recorded for later uses. If no rule is applicable to $v$ then $v$ is an {\em end node}. Note that each node is ``expanded'' only once (using one rule). Also note that the graph is constructed using {\em global caching} \cite{Pratt80,GoreNguyenTab07,GoreNguyen08CSP} and the contents of its nodes are unique.

A {\em marking} of an ``and-or'' graph $G$ is a~subgraph $G'$ of $G$ such that:\VSpace{-0.7em}
\begin{itemize}
\item the root of $G$ is the root of $G'$.
\item if $v$ is a~node of $G'$ and is an ``or''-node of $G$ then there exists at least one edge $(v,w)$ of $G$ that is an edge of $G'$.
\item if $v$ is a~node of $G'$ and is an ``and''-node of $G$ then every edge $(v,w)$ of $G$ is an edge of $G'$.
\item if $(v,w)$ is an edge of $G'$ then $v$ and $w$ are nodes of $G'$.
\end{itemize}

Let $G$ be an ``and-or'' graph for $(X,\Gamma)$, $G'$ a~marking of $G$, $v$ a~node of $G'$, and $\lDmd{\alpha}\varphi$ a~formula of the label of $v$. A {\em trace} of $\lDmd{\alpha}\varphi$ in $G'$ starting from $v$ is a~sequence $(v_0,\varphi_0)$, \ldots, $(v_k,\varphi_k)$ such that:\footnote{This definition of trace is inspired by~\cite{NiwinskiW96}.}
\begin{itemize}
\item \VSpace{-0.5em}$v_0 = v$ and $\varphi_0 = \lDmd{\alpha}\varphi$;
\item for every $1 \leq i \leq k$, $(v_{i-1},v_i)$ is an edge of $G'$;
\item for every $1 \leq i \leq k$, $\varphi_i$ is a~formula of the label of $v_i$ such that: if $\varphi_{i-1}$ is not a~principal formula of the tableau rule expanding $v_{i-1}$, then the rule must be a~static rule and $\varphi_i = \varphi_{i-1}$, else
  \begin{itemize}
  \item if the rule is $\rDmdSc$, $\rDmdCp$ or $\rDmdSt$ then $\varphi_i$ is the formula obtained from~$\varphi_{i-1}$,
  \item if the rule is $\rDmdQm$ and $\varphi_{i-1} = \lDmd{\psi?}\xi$ then $\varphi_i = \xi$,
  \item else the rule is $\rTrans$, $\varphi_{i-1}$ is of the form $\lDmd{\sigma}\xi$ and is the label of the edge $(v_{i-1},v_i)$, and $\varphi_i = \xi$.
  \end{itemize}
\end{itemize}\VSpace{-1em}
A trace $(v_0,\varphi_0)$, \ldots, $(v_k,\varphi_k)$ of $\lDmd{\alpha}\varphi$ in $G'$ is called a~{\em $\Dmd$-realization} in $G'$ for $\lDmd{\alpha}\varphi$ at $v_0$ if $\varphi_k = \varphi$.

A marking $G'$ of an ``and-or'' graph $G$ for $(X,\Gamma)$ is {\em consistent} if:
\begin{description}
\item[local consistency:] \VSpace{-0.5em}$G'$ does not contain any node with label $\{\bot\}$;
\item[global consistency:] for every node $v$ of $G'$, every formula of the form $\lDmd{\alpha}\varphi$ of the label of $v$ has a~$\Dmd$-realization (starting at $v$) in $G'$.
\end{description}

\VSpace{-0.5em}
\begin{theorem}[Soundness and Completeness of \cPDL]
\label{theorem: s-c-PDL} Let $X$ and $\Gamma$ be finite sets of formulas in NNF, and $G$ be an ``and-or'' graph for
$(X,\Gamma)$. Then $X$ is satisfiable w.r.t.\ the set $\Gamma$ of global assumptions iff $G$ has a~consistent marking. \koniec
\end{theorem}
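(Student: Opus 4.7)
The proof splits into soundness (a consistent marking yields a model) and completeness (a model yields a consistent marking).

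For soundness, assume $G$ has a consistent marking $G'$. I would build a Kripke model $\mM$ whose states are the nodes of $G'$, with $v \in p^\mM$ iff $p \in \mL(v)$ and $(v,w) \in \sigma^\mM$ iff $(v,w)$ is an edge of $G'$ that was labeled by some $\lDmd{\sigma}\xi$ via the rule $\rTrans$. The key step is a truth lemma: for every node $v$ of $G'$ and every $\varphi \in \mL(v)$, $\mM,v \models \varphi$. This is proved by induction on the Fischer--Ladner rank of $\varphi$. The propositional cases use local consistency; the box cases follow by inspecting how the $\Box$-rules decompose $[\alpha]\varphi$ along the structure of $\alpha$, combined with the fact that every successor of $v$ created by $\rTrans$ inherits the body $\psi$ of each $[\sigma]\psi \in \mL(v)$. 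The subtle case is $\lDmd{\alpha}\varphi$: here global consistency supplies a $\Dmd$-realization in $G'$, and a secondary induction on the structure of $\alpha$, with the $\alpha = \beta^*$ case decomposing the realization into finitely many $\beta^\mM$-segments, assembles the required $\alpha^\mM$-witness.

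For completeness, assume $\mM, w_0 \models X \cup \Gamma$. I would construct a marking $G'$ by traversing $G$ from the root $\nu$ and simultaneously defining a map $\pi$ from the marked nodes to $\Delta^\mM$ with $\mM, \pi(v) \models \mL(v)$, starting from $\pi(\nu) = w_0$. At an ``or''-node the semantic soundness of the corresponding rule guarantees that some possible conclusion is satisfied at $\pi(v)$: I mark that edge and propagate $\pi(v)$ (with the obvious adjustment when the rule is $\rDmdQm$). At an ``and''-node produced by $\rTrans$, for each principal $\lDmd{\sigma}\varphi \in \mL(v)$ I pick as $\pi(w_i)$ a $\sigma^\mM$-successor of $\pi(v)$ at which $\varphi$, all $[\sigma]\psi \in \mL(v)$ and all formulas of $\Gamma$ hold. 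Local consistency of $G'$ is then immediate, since no state of $\mM$ satisfies $\bot$ or both $p$ and $\lnot p$.

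The hard part is ensuring \emph{global} consistency of $G'$, i.e.\ that every $\lDmd{\alpha}\varphi$ in the label of every marked node admits a $\Dmd$-realization. A naive strategy could always choose the ``unwinding'' branch $\lDmd{\alpha}\lDmd{\alpha^*}\varphi$ at $\rDmdSt$, producing an infinite non-realizing trace. I would fix this by a refined selection rule at ``or''-nodes: associate with each $\lDmd{\alpha}\varphi \in \mL(v)$ a semantic rank at $\pi(v)$, namely the length of a shortest $\alpha^\mM$-path from $\pi(v)$ into $\varphi^\mM$; then at $\rDmdSt$, $\rDmdCp$, $\rDmdSc$, $\rDmdQm$ always take the branch that preserves the witnessing path, and at $\rTrans$ follow its first edge. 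A well-founded descent on this rank shows that the selected trace reaches a node containing $\varphi$ in finitely many steps.

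The main obstacle, on both sides, is the careful bookkeeping for the diamond-realization argument, especially in the interaction of $*$ with $;$ and $\cup$ under the rule priorities of \cPDL: a single application of $\rDmdSc$ or $\rDmdCp$ reshapes the principal formula without advancing the underlying $\alpha^\mM$-path, so the induction parameter must combine the $\alpha$-structure with the semantic rank to avoid circularity. Once this invariant is set up the remaining cases reduce to mechanical verifications against the rules of Table~\ref{table: cPDL}. \koniec
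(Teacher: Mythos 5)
The direction ``model $\Rightarrow$ consistent marking'' (which the paper calls soundness of \cPDL, reversing your terminology) is essentially the paper's argument: you carry a state $\pi(v)$ of the model along the marking and steer the diamond-reductions by a witnessing path. Where the paper is more precise is in the termination of the static phases: it fixes a concrete word $\delta\in\mL(\alpha)$ witnessing $\lDmd{\alpha}\varphi$ at $\pi(v)$ and follows a left derivation of $\delta$ in the context-free grammar $\mG(\alpha)$, so that the secondary component of the measure is the remaining derivation length. Your ``combine the $\alpha$-structure with the semantic rank'' gestures at this but never exhibits a well-founded measure; note that a structural measure on the principal formula does not work, since $\rDmdSt$ rewrites $\lDmd{\beta^*}\psi$ into the \emph{larger} formula $\lDmd{\beta}\lDmd{\beta^*}\psi$ without moving in the model (and with test-only $\beta$ this can loop on a single model state). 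Still, this half is repairable along the lines you indicate.

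The other direction has a genuine flaw. You take the states of $\mM$ to be \emph{all} nodes of $G'$ and put $(v,w)\in\sigma^\mM$ only for $\rTrans$-labelled edges. Then every ``or''-node has no outgoing $\sigma^\mM$-edges at all, so the truth lemma ``$\varphi\in\mL(v)$ implies $\mM,v\models\varphi$'' fails already for any ``or''-node whose label contains $\lDmd{\sigma}\xi$ (and the root is typically such a node: take $X=\{p\lor q,\lDmd{\sigma}r\}$). It also fails propositionally: for an ``or''-node $v$ with $\mL(v)=Y,\,p\lor\psi$ and chosen successor $w$ with $\mL(w)=Y,p$, the states $v$ and $w$ are distinct and may disagree on $p$, so $\mM,v\models p\lor\psi$ is not guaranteed. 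The missing idea is the paper's \emph{model graph} construction: collapse each maximal chain of ``or''-nodes ending in an ``and''-node (a ``saturation path'') into a single state whose formula set is $\mL(v_k)\cup\rfs(v_k)$ of the terminal ``and''-node $v_k$ --- the set $\rfs$, which your construction ignores, is exactly what recovers the already-reduced formulas and makes each state saturated --- and put an $R_\sigma$-edge from that state to the state obtained from the saturation path starting at the corresponding $\rTrans$-successor. The truth lemma is then proved for these saturated states (with the $[\alpha]\psi$ case again handled by induction along a left derivation in $\mG(\alpha)$), not node-by-node on $G'$.
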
\VSpace{-0.5em}

\LongVersion{The ``only if'' direction means soundness of \cPDL, while the ``if'' direction means completeness of \cPDL. See Appendix~\ref{section: proof PDL} for the proof of this theorem.} 

\ShortVersion{See the full version \cite{pdl-tab-long} for the proof of this theorem.}

\newcommand{\ExampleF}{

\begin{figure}
\begin{center}
\begin{tabular}{c}
\begin{scriptsize}
\xymatrix{ &
*+[F]{\begin{tabular}{c}
    (1) : ``or''-node, $\rBoxSt$\\[0.5ex]
    $\lDmd{\sigma^*}p, [\sigma^*]q, \lnot p \lor \lnot q$
      \end{tabular}}
\ar@{->}[d]
\\
&
*+[F]{\begin{tabular}{c}
    (2) : ``or''-node, $\rDmdSt$\\[0.5ex]
    $\lDmd{\sigma^*}p, q, [\sigma][\sigma^*]q, \lnot p \lor \lnot q$
      \end{tabular}}
\ar@{->}[d] \ar@{->}[dr] &
*+[F]{\begin{tabular}{c}
    (8) : ``and''-node, $\rTrans$\\[0.5ex]
    $\lDmd{\sigma}\lDmd{\sigma^*}p, q, [\sigma][\sigma^*]q, \lnot p$
      \end{tabular}}
\ar_{\lDmd{\sigma}\lDmd{\sigma^*}p}@{->}[ul]
\\
&
*+[F]{\begin{tabular}{c}
    (3) : ``or''-node, $\rOr$\\[0.5ex]
    $p, q, [\sigma][\sigma^*]q, \lnot p \lor \lnot q$
      \end{tabular}}
\ar@{->}[dl] \ar@{->}[d] &
*+[F]{\begin{tabular}{c}
    (4) : ``or''-node, $\rOr$\\[0.5ex]
    $\lDmd{\sigma}\lDmd{\sigma^*}p, q, [\sigma][\sigma^*]q, \lnot p \lor \lnot q$
      \end{tabular}}
\ar@{->}[u] \ar@{->}[d]
\\
*+[F]{\begin{tabular}{c}
    (5) : ``or''-node, $\rBot$\\[0.5ex]
    $p, q, [\sigma][\sigma^*]q, \lnot p$
      \end{tabular}}
\ar@{->}[dr] &
*+[F]{\begin{tabular}{c}
    (6) : ``or''-node, $\rBot$\\[0.5ex]
    $p, q, [\sigma][\sigma^*]q, \lnot q$
      \end{tabular}}
\ar@{->}[d] &
*+[F]{\begin{tabular}{c}
    (9) : ``or''-node, $\rBot$\\[0.5ex]
    $\lDmd{\sigma}\lDmd{\sigma^*}p, q, [\sigma][\sigma^*]q, \lnot q$
      \end{tabular}}
\ar@{->}[dl]
\\
&
*+[F]{\begin{tabular}{c}
    (7)\\[0.5ex]
    $\bot$
      \end{tabular}}
}
\end{scriptsize}
\end{tabular}
\end{center}
\caption{An ``and-or'' graph for $(\{\lDmd{\sigma^*}p, [\sigma^*]q\}, \{\lnot p \lor \lnot q\})$. In the 2nd line of each node we display the formulas of the label of the node. We do not display the sets $\rfs(\_)$ of the nodes.} \label{fig: example1}
\end{figure}

\begin{example}
In Figure~\ref{fig: example1} we give an ``and-or'' graph for $(\{\lDmd{\sigma^*}p, [\sigma^*]q\}$, $\{\lnot p \lor \lnot q\})$. This graph does not have any consistent marking: the only marking that satisfies the local consistency property consists of the nodes (1), (2), (4), (8) and does not satisfy the global consistency property, because the formula $\lDmd{\sigma^*}p$ of the label of (1) does not have any $\Dmd$-realization in this marking. By Theorem~\ref{theorem: s-c-PDL}, the set $\{\lDmd{\sigma^*}p, [\sigma^*]q\}$ is unsatisfiable w.r.t.\ the global assumption $\lnot p \lor \lnot q$.\koniec
\end{example}
} 

\LongVersion{\ExampleF}
\ShortVersion{See Appendix~\ref{appendix: examples} for an example of ``and-or'' graph.}


\VSpace{-0.5em}
\section{A Tableau Calculus for Dealing with ABoxes}\VSpace{-0.5em}
\label{section: ABoxes}

Define tableau calculus \cPDLA\ w.r.t.\ a~TBox $\Gamma$ to be the extension of \cPDL\ with the following additional rules:
\begin{itemize}
\item a~rule $(\rho')$ obtained from each rule $(\rho) \in \{\rAnd$, $\rOr$, $\rBoxSc$, $\rDmdSc$, $\rBoxCp$, $\rDmdCp$, $\rBoxQm$, $\rDmdQm$, $\rBoxSt$, $\rDmdSt\}$ by labeling the principal formula and the formulas obtained from it by prefix ``$a\!:\ $'' and adding the modified principal formula to each of the possible conclusions; for example:
\[ \rOrP\; \fracc{Y,\; a:\varphi \lor \psi}{Y,\; a:\varphi \lor \psi,\; a:\varphi \mid Y,\; a:\varphi \lor \psi,\; a: \psi} \]

\item and
\[
\rBotzP\; \fracc{Y,\; a:\bot}{\bot} \hspace{8ex} \rBotP\; \fracc{Y,\; a:p,\; a:\lnot p}{\bot}
\]
\[
\rBoxP\; \fracc{Y,\; a:[\sigma]\varphi,\; \sigma(a,b)}{Y,\; a:[\sigma]\varphi,\; \sigma(a,b),\; b:\varphi}
\]
\[
\rTransP\; \fracc{Y}{\& \{\; (\{\varphi\} \cup \{\psi \textrm{ s.t. } (a:[\sigma]\psi) \in Y\} \cup \Gamma) \textrm{
s.t. } (a:\lDmd{\sigma}\varphi) \in Y \;\}}
\]
\end{itemize}

The additional rules of \cPDLA\ work on sets of ABox assertions, except that the conclusions of $\rTransP$ are sets of
traditional formulas. That is, in those rules, $Y$ denotes a~set of ABox assertions. The rule $\rTransP$ is an
``and''-rule and a~transitional rule. The other additional rules of \cPDLA\ are ``or''-rules and static rules. 

Note that, for any additional static rule of \cPDLA\ except $\rBotzP$ and $\rBotP$, the premise is a subset of each of the possible conclusions. Such rules are said to be {\em monotonic}.

We assume that any one of the rules $\rAndP$, $\rOrP$, $\rBoxScP$, $\rDmdScP$, $\rBoxCpP$, $\rDmdCpP$, $\rBoxQmP$, $\rDmdQmP$, $\rBoxStP$, $\rDmdStP$ is applicable to a node $v$ only when the principal formula does not belong to $\rfs(v)$. 
Applying any one of these rules to a node $v$, for any successor node $w$ of $v$, let $\rfs(w)$ be the set that extends $\rfs(v)$ with the principal formula of the applied rule. 
We assume that the rule $\rBoxP$ is applicable only when its conclusion is a proper superset of its premise. Applying this rule to a node $v$, let $\rfs(w) = \rfs(v)$ for the successor $w$ of $v$.
Applying $\rBotzP$, $\rBotP$, or $\rTransP$ a node $v$, for any successor node $w$ of $v$, let $\rfs(w) = \emptyset$.

Similarly as for \cPDL, we assume the following preference for the rules of \cPDLA: the rules $\rBotz$, $\rBot$,
$\rBotzP$, $\rBotP$ have the highest priority; unary static rules have a~higher priority that non-unary static rules;
all the static rules have a~higher priority than the transitional rules.

Consider the problem of checking whether a~given ABox $\mA$ is satisfiable w.r.t.\ a~given TBox $\Gamma$. We construct an {\em ``and-or'' graph for} $(\mA,\Gamma)$ as follows. The graph will contain nodes of two kinds: {\em complex nodes} and {\em simple nodes}. The sets $\mL(v)$ and $\rfs(v)$ of a complex node $v$ consist of ABox assertions, while such sets of a simple node $v$ consist of traditional formulas. The graph will never contain edges from a~simple node to a~complex node. The root of the graph is a~complex node $\nu$ with $\mL(\nu) = \mA \cup \{(a\!:\!\varphi) \mid \varphi \in \Gamma$ and $a$ is a~state variable occurring in $\mA\}$ and $\rfs(\nu) = \emptyset$. Complex nodes are expanded using the additional rules of \cPDLA\ (i.e., the ``prime'' rules), while simple nodes are expanded using the rules of \cPDL. The ``and-or'' graph is expanded in the same way as described in the previous section for checking consistency of a~set $X$ of traditional formulas w.r.t.\ $\Gamma$. 

The notion of {\em marking} remains unchanged. 

Let $G$ be an ``and-or'' graph for $(\mA,\Gamma)$ and let $G'$ be a~marking of $G$. If $v$ is a simple node of $G'$ and $\lDmd{\alpha}\varphi$ is a formula of the label of $v$ then a {\em trace} of $\lDmd{\alpha}\varphi$ in $G'$ starting from $v$ is defined as before. Consider the case when $v$ is a complex ``and''-node and suppose that $a\!:\!\lDmd{\alpha}\varphi \in \mL(v)$. A {\em static trace} of $a\!:\!\lDmd{\alpha}\varphi$ at $v$ is a sequence $\varphi_0,\ldots,\varphi_k$ such that:
\begin{itemize}
\item $\varphi_0 = \lDmd{\alpha}\varphi$;
\item for every $1 \leq i \leq k$, $(a\!:\!\varphi_i) \in \mL(v)$;
\item for every $1 \leq i \leq k$, 
  \begin{itemize}
  \item if $\varphi_{i-1} = \lDmd{\beta;\gamma}\psi$ then $\varphi_i = \lDmd{\beta}\lDmd{\gamma}\psi$, 
  \item if $\varphi_{i-1} = \lDmd{\beta\cup\gamma}\psi$ then $\varphi_i$ is either $\lDmd{\beta}\psi$ or $\lDmd{\gamma}\psi$, 
  \item if $\varphi_{i-1} = \lDmd{\psi?}\xi$ then $\varphi_i = \xi$,
  \item if $\varphi_{i-1} = \lDmd{\beta^*}\psi$ then $\varphi_i$ is either $\psi$ or $\lDmd{\beta}\lDmd{\beta^*}\psi$.
  \end{itemize}
\end{itemize}\VSpace{-1em}
A static trace $\varphi_0, \ldots, \varphi_k$ of $a\!:\!\lDmd{\alpha}\varphi$ at $v$ is called a {\em static realization} for $a\!:\!\lDmd{\alpha}\varphi$ at $v$ if either $\varphi_k = \varphi$ or $\varphi_k$ is of the form $\lDmd{\sigma}\varphi'_k$ for some $\sigma \in \mindices$. 

A marking $G'$ of an ``and-or'' graph $G$ for $(\mA,\Gamma)$ is {\em consistent} if:
\begin{description}
\item[local consistency:] \VSpace{-0.8em}$G'$ does not contain any node with label $\{\bot\}$;
\item[global consistency:] \ 
  \begin{itemize}
  \item for every complex ``and''-node $v$ of $G'$, every formula of the form $a\!:\!\lDmd{\alpha}\varphi$ of the label of $v$ has a static realization (at~$v$),
  \item for every simple node $v$ of $G'$, every formula of the form $\lDmd{\alpha}\varphi$ of the label of $v$ has a~$\Dmd$-realization (starting at $v$) in $G'$.
  \end{itemize}
\end{description}

\VSpace{-0.8em}
\begin{theorem}[Soundness and Completeness of \cPDLA]
\label{theorem: sound-compl cPDLA} Let $\mA$ be an ABox, $\Gamma$ a~TBox, and $G$ an ``and-or'' graph for $(\mA,\Gamma)$.
Then $\mA$ is satisfiable w.r.t.\ $\Gamma$ iff $G$ has a~consistent marking.\koniec
\end{theorem}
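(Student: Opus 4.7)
The plan is to adapt the proof of Theorem~\ref{theorem: s-c-PDL} to the extended calculus \cPDLA, the new ingredient being complex ``and''-nodes expanded by $\rTransP$ that simultaneously play the role of a ``state'' for every named individual of the ABox and of a branching point at which each asserted diamond must be realized. Both directions will proceed by induction on the construction of $G$, and will exploit the fact that the additional static rules (apart from $\rBotzP$ and $\rBotP$) are monotonic, so no ABox assertion is ever lost while a complex node is being expanded.

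For the ``only if'' direction I would suppose a Kripke model $\mM$ satisfies $\mA$ and validates $\Gamma$, and construct a marking $G'$ together with a function $f$ mapping each node $v$ of $G'$ to a ``witness'': the tuple $(a^\mM)_a$ if $v$ is complex, and a single state of $\mM$ if $v$ is simple. Maintaining as invariant that $\mM$ satisfies $\mL(v)$ at $f(v)$, I would select, at each ``or''-node, any successor whose conclusion is still satisfied, and at each ``and''-node all successors (which correspond, in the transitional case, to $\sigma$-successors in $\mM$ of the appropriate state). Local consistency is automatic, and global consistency for each $\lDmd{\alpha}\varphi$ follows from the PDL semantics: a finite $\alpha$-path in $\mM$ witnessing the diamond translates into either a static realization (when the unfolding stays inside a single complex ``and''-node) or a $\Dmd$-realization (once a simple node is reached through $\rTransP$).

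For the ``if'' direction, given a consistent marking $G'$, I would build a Kripke model $\mM$ whose states are the complex ``and''-nodes and the simple nodes of $G'$. Atomic roles are interpreted from the explicit assertions $\sigma(a,b)$ together with the edges generated by $\rTrans$ and $\rTransP$ (each such edge $(v,w)$ labelled by $\lDmd{\sigma}\xi$ contributing a $\sigma$-edge from $v$ to $w$); propositions are read off the labels; each state variable is interpreted by the initial complex ``and''-node reached from the root by static ``prime'' rules. A truth lemma proved by induction on formula size, with the box cases handled by the corresponding $\rBoxP$/$\rBoxSc$/$\rBoxCp$/$\rBoxQm$/$\rBoxSt$ rules and the $\lDmd{\alpha^*}\varphi$ case handled by induction on the length of the realization provided by global consistency (exactly as in Theorem~\ref{theorem: s-c-PDL}), will show that every $\varphi \in \mL(v)$ holds at $v$ and every $a\!:\!\varphi \in \mL(v)$ holds at $a^\mM$; in particular the root witnesses satisfiability of $\mA$ and validity of~$\Gamma$.

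The main obstacle is the interaction between the two notions of realization at the transition from a complex ``and''-node to its simple successors. Concretely, one must show that a static realization of $a\!:\!\lDmd{\alpha}\varphi$ either terminates at $a\!:\!\varphi$ (so $a^\mM$ itself realizes the diamond) or reaches some $a\!:\!\lDmd{\sigma}\psi$ that activates the correct branch of $\rTransP$, whose resulting simple node then carries a $\Dmd$-realization for $\lDmd{\sigma}\psi$ by the global-consistency clause for simple nodes. Monotonicity of the new static rules is what ensures that by the time $\rTransP$ fires, all relevant $a\!:\!\xi$ and $\sigma(a,b)$ assertions have been accumulated in the label, so that the transitions constructed in the ``if'' direction respect every concept and role assertion imposed by $\mA$, and in the ``only if'' direction all conclusions of $\rTransP$ can simultaneously be modelled in $\mM$.
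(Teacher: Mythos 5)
Your overall plan matches the paper's: soundness by tracing a satisfiable path through the complex ``or''-nodes down to the complex ``and''-node and then reusing the $\mathcal{C}_{\mathrm{PDL}}$ argument for the simple part, completeness by turning a consistent marking into a model, and you correctly isolate the new difficulty (the hand-off between static realizations and $\Dmd$-realizations at the complex ``and''-node). However, there are two genuine gaps in the ``if'' direction. First, you interpret \emph{every} state variable by the \emph{same} state, namely ``the initial complex `and'-node reached from the root''. Since the marking contains a single chain of complex ``or''-nodes ending in one complex ``and''-node, this forces $a^\mM = b^\mM$ for all individuals, and the resulting model cannot satisfy an ABox as simple as $\{a\!:\!p,\; b\!:\!\lnot p\}$. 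The paper instead splits the complex ``and''-node $v_k$ into one Kripke state per state variable $a$, with label $\{\varphi \mid (a\!:\!\varphi) \in \mL(v_k)\}$, and seeds $R_\sigma$ with the pairs $(a,b)$ for $\sigma(a,b)\in\mA$. Second, taking all simple nodes of $G'$ (including ``or''-nodes) as states breaks the truth lemma for $\lDmd{\sigma}\xi$: an ``or''-node has no transitional successors, so it has no $\sigma$-successor in your model. The paper avoids this by collapsing each maximal saturation path into a single model-graph node whose formula set is $\mL(u)\cup\rfs(u)$ of the terminal ``and''-node $u$; the $\rfs$ component is what restores saturation for the formulas already reduced along the path, and your construction never uses it.

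A smaller but still real issue is in the ``only if'' direction: choosing, at each complex ``or''-node, ``any successor whose conclusion is still satisfied'' does not by itself yield static realizations at the complex ``and''-node. The choices made at the $\rDmdCpP$ and $\rDmdStP$ nodes for a given $a\!:\!\lDmd{\alpha}\varphi$ must all be driven by one fixed witnessing word $\delta\in\mL(\alpha)$ and its left derivation in the grammar $\mG(\alpha)$; otherwise the accumulated choices need not chain into a finite static trace ending at $\varphi$ or at an atomic diamond. The paper implements this with a partial function $Next$ recording the reduction strategy for each asserted diamond. You clearly see that the semantic witness should guide the unfolding, but the selection rule you actually state does not enforce the required coordination.
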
\VSpace{-0.5em}

\LongVersion{The ``only if'' direction means soundness of \cPDLA, while the ``if'' direction means completeness of \cPDLA. See Appendix~\ref{section: proof PDLA} for the proof of this theorem.} 

\ShortVersion{See the full version \cite{pdl-tab-long} for the proof of this theorem.}

\newcommand{\ExampleS}{
\begin{figure}[t]
\begin{center}
\begin{tabular}{c}
\begin{scriptsize}
\xymatrix{ &
*+[F]{\begin{tabular}{c}
    (1) : ``or''-node, $\rBoxP$\\[0.5ex]
    $a: [\sigma]\lDmd{\sigma^*}p,\, \sigma(a,b),$\\
    $a:\lnot p,\, b:\lnot p$
      \end{tabular}}
\ar@{->}[d]
\\
*+[F]{\begin{tabular}{c}
    (3) : ``or''-node, $\rBotP$\\[0.5ex]
    $a: [\sigma]\lDmd{\sigma^*}p,\, \sigma(a,b),$\\
    $a:\lnot p,\, b:\lnot p,\, b:\lDmd{\sigma^*}p,$\\
    $b:p$
      \end{tabular}}
\ar@{->}[d] &
*+[F]{\begin{tabular}{c}
    (2) : ``or''-node, $\rDmdStP$\\[0.5ex]
    $a: [\sigma]\lDmd{\sigma^*}p,\, \sigma(a,b),$\\
    $a:\lnot p,\, b:\lnot p,\, b:\lDmd{\sigma^*}p,$
      \end{tabular}}
\ar@{->}[l] \ar@{->}[d]
\\
*+[F]{\begin{tabular}{c}
    (4)\\[0.5ex]
    $\bot$
      \end{tabular}}
&
*+[F]{\begin{tabular}{c}
    (5) : ``and''-node, $\rTransP$\\[0.5ex]
    $a: [\sigma]\lDmd{\sigma^*}p,\, \sigma(a,b),$\\
    $a:\lnot p,\, b:\lnot p,\, b:\lDmd{\sigma^*}p,$\\
    $b:\lDmd{\sigma}\lDmd{\sigma^*}p$
      \end{tabular}}
\ar^{b:\lDmd{\sigma}\lDmd{\sigma^*}p}@{->}[d]
\\
*+[F]{\begin{tabular}{c}
    (7) : ``or''-node, $\rBot$\\[0.5ex]
    $p, \lnot p$
      \end{tabular}}
\ar@{->}[u] &
*+[F]{\begin{tabular}{c}
    (6) : ``or''-node, $\rDmdSt$\\[0.5ex]
    $\lDmd{\sigma^*}p, \lnot p$
      \end{tabular}}
\ar@{->}[l] \ar@{->}[d]
\\
&
*+[F]{\begin{tabular}{c}
    (8) : ``and''-node, $\rTrans$\\[0.5ex]
    $\lDmd{\sigma}\lDmd{\sigma^*}p, \lnot p$
      \end{tabular}}
\ar_{\lDmd{\sigma}\lDmd{\sigma^*}p}@/^{-1.5pc}/@{->}[u] }
\end{scriptsize}
\end{tabular}
\end{center}
\caption{An ``and-or'' graph for $(\{a: [\sigma]\lDmd{\sigma^*}p,\, \sigma(a,b)\}, \{\lnot p\})$. The formulas in each node $v$ form the set $\mL(v)$. We do not display formulas of the sets $\rfs(v)$.} \label{fig: example2}
\end{figure}

\begin{example}
In Figure~\ref{fig: example2} we present an ``and-or'' graph for $(\{a: [\sigma]\lDmd{\sigma^*}p$,\ $\sigma(a,b)\}$,
$\{\lnot p\})$. This graph does not have any consistent marking. By Theorem~\ref{theorem: sound-compl cPDLA}, the ABox
$\{a: [\sigma]\lDmd{\sigma^*}p,\, \sigma(a,b)\}$ is unsatisfiable w.r.t.\ the TBox $\{\lnot p\}$.\koniec
\end{example}
} 

\LongVersion{\ExampleS}
\ShortVersion{See Appendix~\ref{appendix: examples} for an example of ``and-or'' graph involved with an ABox.}


\VSpace{-0.5em}
\section{Decision Procedures for PDL}\VSpace{-0.8em}
\label{section: dp-cr}

In this section, we present simple algorithms for checking satisfiability of a~given set $X$ of traditional formulas w.r.t.\ a~given set $\Gamma$ of global assumptions and for checking satisfiability of given ABox $\mA$ w.r.t.\ a~given TBox $\Gamma$. 
\LongVersion{Optimizations for the algorithms will be discussed in the next section.}
\ShortVersion{(Optimizations for the algorithms are discussed in~\cite{pdl-tab-long}.)}
We also prove the mentioned data complexity result for PDL. 

Define the {\em length} of a~formula $\varphi$ to be the number of symbols occurring in~$\varphi$, and the {\em size}
of a~finite set of formulas to be the length of the conjunction of its formulas.

\VSpace{-0.8em}
\subsection{Checking Satisfiability of $X$ w.r.t.\ $\Gamma$}\VSpace{-0.5em}

Let $X$ and $\Gamma$ be finite sets of traditional formulas in NNF, $G$ be an ``and-or'' graph for $(X,\Gamma)$, and $G'$ be a~marking of $G$. The {\em graph $G_t$ of traces of $G'$ in $G$} is defined as follows:\VSpace{-0.5em}
\begin{itemize}
\item Nodes of $G_t$ are pairs $(v,\varphi)$, where $v$ is a~node of $G'$ and $\varphi$ is a~formula of the label of $v$.
\item A~pair $((v,\varphi),(w,\psi))$ is an edge of $G_t$ if $v$ is a~node of $G'$, $\varphi$ is of the form $\lDmd{\alpha}\xi$, and the sequence $(v,\varphi)$, $(w,\psi)$ is a~trace of $\varphi$ in $G'$.
\end{itemize}\VSpace{-0.5em}
A node $(v,\varphi)$ of $G_t$ is an {\em end node} if $\varphi$ is not of the form $\lDmd{\alpha}\xi$. A node of $G_t$
is {\em productive} if there is a~path connecting it to an end node.

Consider now Algorithm~\ref{alg1} (see Figure~\ref{alg:1}) for checking satisfiability of $X$ w.r.t.\ $\Gamma$.
The algorithm starts by constructing an ``and-or'' graph $G$ with root $v_0$ for $(X,\Gamma)$. After that it collects the nodes of $G$ whose labels are unsatisfiable w.r.t.\ $\Gamma$. Such nodes are said to be \Unsat\ and kept in the set
$UnsatNodes$. Initially, if $G$ contains a~node with label $\{\bot\}$ then the node is \Unsat. When a~node or a~number of
nodes become \Unsat, the algorithm propagates the status \Unsat\ backwards through the ``and-or'' graph using the procedure
$updateUnsatNodes$ (see Figure~\ref{alg:1}). This procedure has property that, after calling it, if the root $v_0$ of $G$ does not belong to
$UnsatNodes$ then the maximal subgraph of $G$ without nodes from $UnsatNodes$, denoted by $G'$, is a~marking of $G$.
After each calling of $updateUnsatNodes$, the algorithm finds the nodes of $G'$ that make the marking not satisfying
the global consistency property. Such a~task is done by creating the graph $G_t$ of traces of $G'$ in $G$ and finding nodes $v$ of $G'$ such that the label of $v$ contains a~formula of the form $\lDmd{\alpha}\varphi$ but $(v,\lDmd{\alpha}\varphi)$ is not a~productive node of~$G_t$. If the set $V$ of such nodes is empty then $G'$ is a~consistent marking (provided that $v_0 \notin UnsatNodes$) and the algorithm stops with a~positive answer. Otherwise,
$V$ is used to update $UnsatNodes$ by calling $updateUnsatNodes(G,UnsatNodes,V)$. After that call, if $v_0 \in
UnsatNodes$ then the algorithm stops with a~negative answer, else the algorithm repeats the loop of collecting \Unsat\
nodes. Note that, we can construct $G_t$ only the first time and update it appropriately each time when $UnsatNodes$ is
changed.

\begin{figure}[t!]
\ramka{
\begin{algorithm} \label{alg1} \ \\
Input: finite sets $X$ and $\Gamma$ of traditional formulas in NNF.\\
Output: {\em true} if $X$ is satisfiable w.r.t.\ $\Gamma$, and {\em false} otherwise.
\newcommand{\mIndent}{\mbox{\hspace{1em}}}
\begin{enumerate}
\item construct an ``and-or'' graph $G$ with root $v_0$ for $(X,\Gamma)$;
\item $UnsatNodes := \emptyset$;
\item if $G$ contains a~node $v$ with label $\{\bot\}$ then\\
\mIndent $updateUnsatNodes(G,UnsatNodes,\{v\})$;
\item if $v_0 \in UnsatNodes$ then return {\em false};
\item let $G'$ be the maximal subgraph of $G$ without nodes from $UnsatNodes$;\\
(we have that $G'$ is a~marking of $G$)
\item construct the graph $G_t$ of traces of $G'$ in $G$;
\item while $v_0 \notin UnsatNodes$ do:
  \begin{enumerate}
  \item \label{alg1-step-comp-nodes-not-satisfying-global-consistency} let $V$ be the set of all nodes $v$ of $G'$ such that $G_t$ contains a non-productive node of the form $(v,\lDmd{\alpha}\varphi)$;
  \item \label{alg1-step-exit-true} if $V = \emptyset$ then return {\em true};
  \item $updateUnsatNodes(G,UnsatNodes,V)$;
  \item if $v_0 \in UnsatNodes$ then return {\em false};
  \item let $G'$ be the maximal subgraph of $G$ without nodes from $UnsatNodes$;\\ (we have that $G'$ is a~marking of $G$)
  \item update $G_t$ to the graph of traces of $G'$ in $G$;
  \end{enumerate}\VSpace{-0.5em}
\end{enumerate}
\end{algorithm}
\LongVersion{\smallskip}
\flushleft {\bf Procedure } $updateUnsatNodes(G,UnsatNodes,V)$\\
Input: an ``and-or'' graph $G$ and sets $UnsatNodes$, $V$ of nodes of~$G$,\\
\mbox{\hspace{2.7em}} where $V$ contains new \Unsat\ nodes.\\
Output: a~new set $UnsatNodes$.
\begin{enumerate}
\item $UnsatNodes := UnsatNodes \cup V$;
\item while $V$ is not empty do:
  \begin{enumerate}
  \item take out a~node $v$ from $V$;
  \item for every father node $u$ of $v$, if $u \notin UnsatNodes$ and either $u$ is an ``and''-node or $u$ is an ``or''-node and all the successor nodes of $u$ belong to $UnsatNodes$ then add $u$ to both $UnsatNodes$ and $V$;
  \end{enumerate}
\end{enumerate}
}\VSpace{-1em}
\caption{Algorithm for checking satisfiability of $X$ w.r.t.\ $\Gamma$.\label{alg:1}}\VSpace{-1.5em}
\end{figure}

\LongVersion{
\begin{lemma} \label{lemma: size constr and-or gr}
Let $X$ and $\Gamma$ be finite sets of traditional formulas in NNF, $G$ be an ``and-or'' graph for $(X,\Gamma)$, and $n$ be the size of $X \cup \Gamma$. Then $G$ has $2^{O(n)}$ nodes, and for each node $v$ of $G$, the sets $\mL(v)$ and $\rfs(v)$ contain at most $O(n)$ formulas and are of size $O(n^2)$.
\end{lemma}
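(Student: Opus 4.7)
The plan is to prove this by establishing one central invariant: every formula that ever appears in $\mL(v)$ or $\rfs(v)$ for any node $v$ of $G$ belongs to the Fischer--Ladner closure $FL(X \cup \Gamma)$. Once this is in hand, the three bounds follow from standard facts about $FL$ and the global caching discipline.

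First I would show the Fischer--Ladner invariant by induction on the construction of $G$. For the root $\nu$ we have $\mL(\nu) = X \cup \Gamma \subseteq FL(X \cup \Gamma)$ and $\rfs(\nu) = \emptyset$. For the induction step, I would go through the rules of \cPDL\ and check that each possible conclusion is a subset of $FL(X \cup \Gamma)$ whenever the premise is. For the static rules this is immediate from the defining clauses of $FL$, $FL^\Box$, $FL^\Dmd$ (e.g.\ $\rBoxSc$ replaces $[\alpha;\beta]\varphi$ with $[\alpha][\beta]\varphi \in FL^\Box([\alpha;\beta]\varphi)$; $\rBoxQm$ produces $\overline{\psi}$ and $\varphi$, both in $FL([\psi?]\varphi)$; and so on). For the transitional rule $\rTrans$, each conclusion has the form $\{\varphi\} \cup \{\psi : [\sigma]\psi \in Y\} \cup \Gamma$, and all these components are in $FL(X \cup \Gamma)$ because $FL$ is closed under taking the formula under a leading $[\sigma]$ or $\lDmd{\sigma}$ and because $\Gamma \subseteq FL(X \cup \Gamma)$ by definition. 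For $\rfs(v)$ the argument is even simpler, since its elements are principal formulas of static rules applied along the branch, which by the invariant already belong to $FL(X \cup \Gamma)$.

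Next I would invoke the standard fact (see~\cite{FischerLadner79,HKT00}) that $|FL(X \cup \Gamma)| = O(n)$, and observe that every formula in $FL(X \cup \Gamma)$ has length at most $O(n)$, being built from subformulas of $X \cup \Gamma$ together with at most one leading modality. Combining these facts with the invariant yields immediately that $\mL(v) \subseteq FL(X \cup \Gamma)$ contains $O(n)$ formulas, each of length $O(n)$, so $\mL(v)$ has size $O(n^2)$; the same holds for $\rfs(v)$.

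Finally, for the number of nodes, I would use the global caching discipline spelled out in Section~\ref{section: cal1}: when an expansion would create a node whose contents $(\mL(\cdot), \rfs(\cdot))$ coincide with those of an existing node, the existing node is reused, so distinct nodes of $G$ have pairwise distinct contents. Since $\mL(v)$ and $\rfs(v)$ are both subsets of a set of size $O(n)$, the number of possible contents is at most $2^{O(n)} \cdot 2^{O(n)} = 2^{O(n)}$, bounding the number of nodes of $G$. The main obstacle is really the case analysis for the Fischer--Ladner invariant under $\rTrans$ and $\rBoxSt$/$\rDmdSt$, where one must check that the rewriting stays within the closure; everything else is essentially bookkeeping on top of standard $FL$-size estimates.
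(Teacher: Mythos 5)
Your proof is correct and follows essentially the same route as the paper's: both reduce everything to the observation that $\mL(v)$ and $\rfs(v)$ are subsets of the Fischer--Ladner closure $FL(X\cup\Gamma)$, invoke the standard $O(n)$ bound on that closure, and count nodes via uniqueness of contents under global caching. The only difference is that you spell out the rule-by-rule induction that the paper leaves implicit, which is a welcome elaboration rather than a deviation.
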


\begin{proof} 
The sets $\mL(v)$ and $\rfs(v)$ of each node $v$ of $G$ are subsets of the Fischer-Ladner closure $FL(X \cup \Gamma)$. This closure contains at most $O(n)$ formulas \cite[Lemma 6.3]{HKT00}.\footnote{In~\cite{HKT00}, only $\bot$, $\to$, $[\alpha]$ are considered as primitive, while $\lnot$, $\land$, $\lor$, $\lDmd{\alpha}$ are treated as derived operators. However, the lemma still holds for our language.}
Hence $\mL(v)$ and $\rfs(v)$ contain at most $O(n)$ formulas and are of size $O(n^2)$. Since the nodes of $G$ have unique contents, $G$ has $2^{O(n)}$ nodes. \koniec
\end{proof}

\begin{lemma} \label{lemma: alg1 compl}
Algorithm~\ref{alg1} runs in exponential time in the size of $X \cup \Gamma$. 
\end{lemma}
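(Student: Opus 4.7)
The plan is to bound each component of Algorithm~\ref{alg1} separately, invoking Lemma~\ref{lemma: size constr and-or gr} to get that the ``and-or'' graph $G$ has $2^{O(n)}$ nodes with each $\mL(v)$ and $\rfs(v)$ of size $O(n^2)$, where $n$ denotes the size of $X \cup \Gamma$. First I would argue that $G$ can be built in $2^{O(n)}$ time: each rule application requires only polynomial work to form the successor contents, and global caching is implemented by indexing nodes by their contents (e.g., in a hash table or balanced tree), so testing whether a node with given contents already exists costs polynomial time per lookup. Since at most $2^{O(n)}$ distinct nodes are ever produced and each has a bounded number of successors under any single rule application, the whole construction fits in $2^{O(n)}$ time.

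Next I would analyze the trace graph $G_t$. Its vertex set consists of pairs $(v,\varphi)$ with $v$ a node of $G'$ and $\varphi \in \mL(v)$, so it has at most $2^{O(n)} \cdot O(n) = 2^{O(n)}$ vertices; the edges are determined by the rule applied at $v$ and by $\varphi$, yielding $2^{O(n)}$ edges overall. The non-productive vertices of $G_t$ (those from which no end node of $G_t$ is reachable) are computed by a single backward reachability sweep in time linear in $|G_t|$, so Step~\ref{alg1-step-comp-nodes-not-satisfying-global-consistency} costs $2^{O(n)}$ per execution. The set $V$ of nodes $v$ such that some $(v,\lDmd{\alpha}\varphi)$ is non-productive is read off from $G_t$ in the same bound.

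For the main loop, every iteration that does not exit in Step~\ref{alg1-step-exit-true} passes a nonempty $V$ to $updateUnsatNodes$, which strictly enlarges $UnsatNodes$; as $UnsatNodes \subseteq \mathrm{Vertices}(G)$, the loop executes at most $2^{O(n)}$ times. Moreover, across all calls to $updateUnsatNodes$ each node is added to $UnsatNodes$ at most once and each edge of $G$ is inspected a bounded number of times, so the cumulative cost of these calls is also $2^{O(n)}$. Multiplying the per-iteration work for Step~\ref{alg1-step-comp-nodes-not-satisfying-global-consistency} by the iteration count gives $2^{O(n)} \cdot 2^{O(n)} = 2^{O(n)}$, which dominates the other phases.

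The main obstacle is making sure that the repeated use of the trace graph $G_t$ does not blow up the total running time across iterations. This is handled by the remark in the text that $G_t$ is constructed once and then updated when $G'$ shrinks: whenever a node of $G$ becomes \Unsat\ we simply delete the associated pairs $(v,\varphi)$ from $G_t$ and rerun the backward reachability incrementally. Since each node and edge of $G_t$ is removed at most once, the amortized update cost stays within $2^{O(n)}$. Combining these bounds yields an overall running time of $2^{O(n)}$, as required.
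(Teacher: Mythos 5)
Your proposal is correct and follows essentially the same route as the paper's own proof: invoke Lemma~\ref{lemma: size constr and-or gr} to bound $G$ and hence $G_t$ by $2^{O(n)}$, observe that each iteration's work (computing $V$, running $updateUnsatNodes$) is polynomial in these sizes, and multiply by the $2^{O(n)}$ bound on the number of times $UnsatNodes$ can grow. The extra implementation detail you supply (hashing for global caching, backward reachability, amortized updates of $G_t$) only elaborates what the paper leaves implicit.
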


\begin{proof}
By Lemma~\ref{lemma: size constr and-or gr}, the graph $G$ can be constructed in $2^{O(n)}$ steps and has $2^{O(n)}$ nodes. As the label of each node of $G$ contains at most $O(n)$ formulas, each time when $UnsatNodes$ is extended $G_t$ can be constructed or updated in $2^{O(n)}$ steps. Computing the set $V$ can be done in polynomial time in the size of $G_t$, and hence also in $2^{O(n)}$ steps. An execution of $updateUnsatNodes$ is done in polynomial time in the size of $G$, and hence also in $2^{O(n)}$ steps. As the set $UnsatNodes$ is extended at most $2^{O(n)}$ times, the total time for executing Algorithm~\ref{alg1} is of rank $2^{O(n)}$.\koniec
\end{proof}
} 

\begin{theorem} \label{theorem: PDL}
Let $X$ and $\Gamma$ denote finite sets of traditional formulas in NNF. Algorithm~\ref{alg1} is an \EXPTIME\ decision procedure for checking whether $X$ is satisfiable w.r.t.\ the set $\Gamma$ of global assumptions.
\end{theorem}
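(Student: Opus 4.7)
My plan is to establish three things: (i) Algorithm~\ref{alg1} terminates within time $2^{O(n)}$, where $n$ is the size of $X \cup \Gamma$; (ii) if it returns \emph{true}, then $G$ admits a consistent marking; (iii) if it returns \emph{false}, then $G$ admits none. Given Theorem~\ref{theorem: s-c-PDL}, (ii) and (iii) together yield correctness, while (i) gives the \EXPTIME\ bound.

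For (i), I would first observe that the sets $\mL(v)$ and $\rfs(v)$ at any node $v$ of $G$ are subsets of the Fischer--Ladner closure $FL(X \cup \Gamma)$, which by a standard counting argument (cf.\ Lemma~6.3 of \cite{HKT00}) contains only $O(n)$ formulas. Hence each node's contents have size $O(n^2)$, and since global caching ensures uniqueness of node contents, $G$ has at most $2^{O(n)}$ nodes and can be built in time $2^{O(n)}$. The trace graph $G_t$ then has size $2^{O(n)}$ as well. Computing $V$ in step 7(a) amounts to a reverse-reachability search in $G_t$ from end nodes, which is polynomial in $|G_t|$; each call to $updateUnsatNodes$ is polynomial in $|G|$; and each iteration of the while loop either enlarges $UnsatNodes$ by at least one node or exits, so there are at most $2^{O(n)}$ iterations. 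The total runtime is thus $2^{O(n)}$.

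For (ii), suppose the algorithm returns \emph{true} and let $G'$ be the current subgraph of $G$ obtained by deleting $UnsatNodes$. The propagation rule of $updateUnsatNodes$ ensures $G'$ is a marking: for every retained ``and''-node all successors are retained, and for every retained ``or''-node at least one successor is retained. Local consistency holds because the unique node with label $\{\bot\}$ (if any) was inserted into $UnsatNodes$ at step~3 and cannot be in $G'$. For global consistency, observe that a path in $G_t$ from $(v, \lDmd{\alpha}\varphi)$ to an end node $(w,\psi)$ with $\psi$ not of diamond form projects, by the definition of edges in $G_t$, to a $\Dmd$-realization of $\lDmd{\alpha}\varphi$ at $v$ in $G'$; since $V = \emptyset$ at exit, every such pair $(v, \lDmd{\alpha}\varphi)$ with $v \in G'$ is productive, giving the required realization. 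Hence $G$ has a consistent marking and, by Theorem~\ref{theorem: s-c-PDL}, $X$ is satisfiable w.r.t.\ $\Gamma$.

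For (iii), I would prove by induction on the order of insertion that every node in $UnsatNodes$ is excluded from every consistent marking of $G$. The base case for the $\{\bot\}$-labeled seed is local consistency; the propagation case is immediate from the definition of marking (an ``and''-node with an excluded successor, or an ``or''-node with all successors excluded, cannot be retained). For nodes added via $V$ in step 7(a), the key observation is that a consistent marking of $G$ must be contained in the currently retained subgraph $G'$ (by induction), and any $\Dmd$-realization in such a sub-marking lifts to a directed path in the current $G_t$ ending at an end node; hence if $(v,\lDmd{\alpha}\varphi)$ is non-productive in $G_t$, no consistent marking can contain $v$. Consequently, if the algorithm returns \emph{false} then $v_0 \in UnsatNodes$, so no consistent marking of $G$ contains its root, i.e.\ $G$ admits none, and by Theorem~\ref{theorem: s-c-PDL} $X$ is unsatisfiable. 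The main obstacle is precisely this completeness step: making the greatest-fixed-point intuition rigorous, in the Pratt style, by verifying that productivity in the current trace graph is a necessary condition for membership in \emph{some} consistent marking inside the current $G'$, so that the iterative pruning never discards a node that a genuine consistent marking would keep.
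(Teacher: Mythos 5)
Your proposal is correct and follows essentially the same route as the paper: the same Fischer--Ladner counting argument for the $2^{O(n)}$ bound on $G$ and $G_t$, the same observation that at the \emph{true}-exit the retained subgraph $G'$ is a locally and globally consistent marking, and the same invariant (which the paper merely asserts is ``easy to show'') that no consistent marking can contain a node of $UnsatNodes$, which you usefully spell out as an induction on insertion order using the fact that any consistent marking sits inside the current $G'$ and hence its $\Dmd$-realizations appear as paths to end nodes in the current $G_t$. Your extra detail on that invariant fills in exactly the step the paper leaves implicit; there is no gap.
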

\VSpace{-1em}
\begin{proof}
It is easy to show that the algorithm has the invariant that a~consistent marking of $G$ cannot contain any node of
$UnsatNodes$. The algorithm returns {\em false} only when the root $v_0$ belongs to $UnsatNodes$, that is, only when
$G$ does not have any consistent marking. At Step~\ref{alg1-step-exit-true}, $G'$ is a~marking of $G$ that satisfies
the local consistency property. If at that step $V = \emptyset$ then it satisfies also the global consistency property
and is thus a~consistent marking of $G$. That is, the algorithm returns {\em true} only when $G$ has a~consistent
marking. Therefore, by Theorem~\ref{theorem: s-c-PDL}, Algorithm~\ref{alg1} is a~decision procedure for the considered
problem. 
\LongVersion{The complexity was established by Lemma~\ref{lemma: alg1 compl}.}
\ShortVersion{See the full version~\cite{pdl-tab-long} for an analysis of the complexity of the algorithm.}
\koniec
\end{proof}

\VSpace{-1em}
\subsection{Checking Satisfiability of an ABox w.r.t.\ a~TBox}\VSpace{-0.5em}

Let $\mA$ be an ABox, $\Gamma$ be a~TBox, $G$ be an ``and-or'' graph for $(\mA,\Gamma)$, and $G'$ be a~marking of $G$. The {\em graph $G_t$ of traces of $G'$ in $G$} is the largest graph such that:
\begin{itemize}
\item A node of $G_t$ is
  \begin{itemize}
  \item either a pair $(v,\varphi)$, where $v$ is a simple node of $G'$ and $\varphi \in \mL(v)$, 
  \item or a pair $(v,a\!:\!\varphi)$, where $v$ is a complex ``and''-node of $G'$ and $(a\!:\!\varphi) \in \mL(v)$.
  \end{itemize}

\item An edge of $G_t$ is
  \begin{itemize}
  \item either a pair $((v,\varphi),(w,\psi))$ such that $v$ is a simple node of $G'$, $\varphi$ is of the form $\lDmd{\alpha}\xi$, and the sequence $(v,\varphi)$, $(w,\psi)$ is a~trace of $\varphi$ in~$G'$,
  \item or a pair $((v,a\!:\!\varphi),(v,a\!:\!\psi))$ such that $v$ is a complex ``and''-node of $G'$, $\varphi$ is of the form $\lDmd{\alpha}\xi$, and the sequence $\varphi$, $\psi$ is a static trace of $a\!:\!\varphi$ at~$v$,
  \item or a pair $((v,a\!:\!\lDmd{\sigma}\varphi),(w,\varphi))$ such that $v$ is a complex ``and''-node of $G'$ and $(v,w)$ is an edge of $G'$ with $a\!:\!\lDmd{\sigma}\varphi$ as the label.
  \end{itemize}
\end{itemize}

A node of $G_t$ is an {\em end node} if it is of the form $(v,\varphi)$ or $(v,a\!:\!\varphi)$ such that $\varphi$ is not of the form $\lDmd{\alpha}\xi$. A node of $G_t$ is {\em productive} if there is a~path connecting it to an end node.

By {\em Algorithm~\ref{alg1}$'$} we refer to the algorithm obtained from Algorithm~\ref{alg1} by changing $X$ to $\mA$ and modifying Step~\ref{alg1-step-comp-nodes-not-satisfying-global-consistency} to ``let $V$ be the set of all nodes $v$ of $G'$ such that $G_t$ contains a non-productive node of the form $(v,\lDmd{\alpha}\varphi)$ or $(v,a\!:\!\lDmd{\alpha}\varphi)$''.
Algorithm~\ref{alg1}$'$ receives an ABox $\mA$ and a~TBox $\Gamma$ as input and checks whether $\mA$ is satisfiable w.r.t.~$\Gamma$.

\LongVersion{Here is a~counterpart of Lemma~\ref{lemma: size constr and-or gr}:}

\VSpace{-0.5em}
\begin{lemma} \label{lemma: size constr and-or gr ABox}
Let $\mA$ be an ABox, $\Gamma$ be a~TBox, $G$ be an ``and-or'' graph for $(\mA,\Gamma)$, and $n$ be the size of $\mA \cup \Gamma$. Then $G$ has $2^{O(n^2)}$ nodes. If $v$ is a simple node of $G$ then $\mL(v)$ and $\rfs(v)$ contain at most $O(n)$ formulas and are of size $O(n^2)$. If $v$ is a complex node of $G$ then $\mL(v)$ and $\rfs(v)$ contain at most $O(n^2)$ formulas and are of size $O(n^3)$.
\end{lemma}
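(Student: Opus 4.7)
The plan is to mimic the proof of Lemma~\ref{lemma: size constr and-or gr} on the simple-node part, and then do a careful bookkeeping of what can occur in the label of a complex node. Let $n$ denote the size of $\mA \cup \Gamma$.

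\textbf{Simple nodes.} When a complex ``and''-node is expanded by $\rTransP$, each successor is a simple node whose label is a set of traditional formulas drawn from $\{\varphi\} \cup \{\psi : a\!:\![\sigma]\psi \in \mL(v)\} \cup \Gamma$, where $a\!:\!\lDmd\sigma\varphi$ is an assertion in the parent's label. All the formulas appearing here are subformulas of formulas that occur in $\mA \cup \Gamma$, hence members of the Fischer--Ladner closure $FL(X_0)$, where $X_0$ is the set of traditional formulas occurring (with the assertion prefix stripped) in $\mA$ together with~$\Gamma$. After the first application of $\rTransP$, every subsequent simple node is obtained by applying rules of \cPDL, which only produce formulas in $FL(X_0)$. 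By~\cite[Lemma 6.3]{HKT00}, $|FL(X_0)| = O(n)$, so $\mL(v)$ and $\rfs(v)$ for a simple node $v$ are subsets of an $O(n)$-element set of formulas each of length $O(n)$, giving the stated $O(n)$/$O(n^2)$ bounds.

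\textbf{Complex nodes.} I would first argue that no new state variables and no new role assertions appear during the construction. Inspection of the prime rules shows this directly: $\rBoxP$ requires $\sigma(a,b)$ to be already present and only adds $b\!:\!\varphi$; none of the ``prime'' static rules nor $\rTransP$ introduce a role assertion. Hence, every complex node $v$ uses only the $O(n)$ state variables already appearing in $\mA$, and every role assertion in $\mL(v)$ comes from $\mA$, of which there are at most $O(n)$. For the assertions of the form $a\!:\!\varphi$, the same closure argument as above shows that $\varphi \in FL(X_0)$, so there are at most $|\mindices(\mA)| \cdot |FL(X_0)| = O(n) \cdot O(n) = O(n^2)$ candidate assertions. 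Adding the $O(n)$ role assertions yields an $O(n^2)$ bound on $|\mL(v)|$, and since each assertion has length $O(n)$, the size bound is $O(n^3)$. The set $\rfs(v)$, containing only principal formulas of previously applied static rules, is a subset of the same candidate pool and therefore admits the same bounds.

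\textbf{Global count of nodes.} Since nodes of $G$ are identified by their contents (global caching), the number of complex nodes is at most $2^{O(n^2)}$ (subsets of the pool of $O(n^2)$ candidate assertions, paired with the analogous choice for $\rfs$), and the number of simple nodes is at most $2^{O(n)}$. Summing gives $2^{O(n^2)}$ nodes in total. The only point that requires care is verifying that the pool of candidate formulas (and in particular $FL(X_0)$) is genuinely invariant under all prime rules and $\rBoxP$; this is the main obstacle, but it reduces to a routine case analysis of the rules in Section~\ref{section: ABoxes}, identical in spirit to the analysis underlying Lemma~\ref{lemma: size constr and-or gr}.\koniec
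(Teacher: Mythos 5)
Your proof is correct and follows essentially the same route as the paper's: simple-node labels are bounded as subsets of the Fischer--Ladner closure $FL(X)$ of $X = \Gamma \cup \{\varphi \mid (a\!:\!\varphi) \in \mA\}$, and complex-node labels as subsets of the product of the $O(n)$ state variables occurring in $\mA$ with $FL(X)$, together with the role assertions of $\mA$; your observation that no rule introduces new state variables or role assertions is exactly the (implicit) justification the paper relies on. The one point where you diverge is the count of complex nodes: you bound it by the number of distinct possible contents, i.e.\ pairs of subsets of the $O(n^2)$-element candidate pool, which is valid because global caching guarantees that contents are unique, whereas the paper instead argues that the applicability restrictions on the ``prime'' static rules force every path of complex nodes to have length $O(n^2)$ and derives the $2^{O(n^2)}$ bound from that. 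Both arguments are sound and equally short; the paper's version has the side benefit that the $O(n^2)$ path-length bound is precisely the fact it reuses later in the complexity analyses of Algorithm~\ref{alg1}$''$ and of the instance checking problem, so if you continued to those results you would want to extract that bound anyway.
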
\VSpace{-0.5em}

\LongVersion{
\begin{proof}
Let $S$ be the set of all state variables occurring in $\mA$ and let $X = \Gamma \cup \{ \varphi \mid (a:\varphi) \in \mA \textrm{ for some } a~\in S\}$. The sets $\mL(v)$ and $\rfs(v)$ of each simple node $v$ of $G$ are subsets of the Fischer-Ladner closure $FL(X)$. Since this closure contains at most $O(n)$ formulas \cite[Lemma 6.3]{HKT00}, the sets $\mL(v)$ and $\rfs(v)$ of each simple node $v$ of $G$ contain at most $O(n)$ formulas and are of size $O(n^2)$. Since the simple nodes of $G$ have unique contents, $G$ has $2^{O(n)}$ simple nodes. 
For each complex node $v$ of $G$ and for each $a \in S$, the set $\{\varphi \mid (a:\varphi) \in \mL(v) \cup \rfs(v)\}$ is also a~subset of $FL(X)$. Hence the sets $\mL(v)$ and $\rfs(v)$ of each complex node $v$ of $G$ contain at most $O(n^2)$ formulas and are of size $O(n^3)$. Due to the restrictions on applicability of the static ``prime'' rules of \cPDLA, each path of complex nodes in $G$ has length of rank $O(n^2)$. Hence $G$ contains $2^{O(n^2)}$ complex nodes. \koniec
\end{proof}
} 

\LongVersion{Using the proofs of Lemma~\ref{lemma: alg1 compl} and Theorem~\ref{theorem: PDL} with appropriate changes we obtain the following theorem.}
\ShortVersion{See~\cite{pdl-tab-long} for the proof of this lemma. Using this lemma and the proof of Theorem~\ref{theorem: PDL} with appropriate changes (see also~\cite[Lemma~6.2]{pdl-tab-long}) we obtain the following theorem.} 

\VSpace{-0.2em}
\begin{theorem}
Algorithm~\ref{alg1}$'$ is an \EXPTIME\ decision procedure for checking whether a~given ABox $\mA$ is satisfiable
w.r.t.\ a~given TBox $\Gamma$. \koniec
\end{theorem}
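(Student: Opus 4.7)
The plan is to mirror the proof of Theorem~\ref{theorem: PDL}, relying on Theorem~\ref{theorem: sound-compl cPDLA} for correctness and on Lemma~\ref{lemma: size constr and-or gr ABox} for the complexity bound. First I would establish the invariant that no node of $UnsatNodes$ can occur in any consistent marking of $G$. Initially this holds, because a~node with label $\{\bot\}$ violates local consistency. Each call to $updateUnsatNodes$ preserves the invariant: a new node $u$ is added only when $u$ is an ``and''-node with some child already in $UnsatNodes$ (so $u$ cannot be in a consistent marking) or an ``or''-node all of whose children are in $UnsatNodes$ (same reason). Step~\ref{alg1-step-comp-nodes-not-satisfying-global-consistency} declares unsat only those nodes $v$ carrying a formula $\lDmd{\alpha}\varphi$ or $a\!:\!\lDmd{\alpha}\varphi$ whose corresponding node in $G_t$ is not productive, which by the definition of the trace graph means that no marking restricted to the current $G'$ can realize that $\Dmd$-formula at~$v$; hence such $v$ cannot belong to a consistent marking either.

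Second, I would show that when the algorithm returns \emph{true} at Step~\ref{alg1-step-exit-true}, the current $G'$ is a consistent marking of $G$. Local consistency is immediate because any $\{\bot\}$-labelled node is in $UnsatNodes$ from the outset. Global consistency follows from $V = \emptyset$: for each simple node $v$ with $\lDmd{\alpha}\varphi \in \mL(v)$ the node $(v,\lDmd{\alpha}\varphi)$ of $G_t$ is productive and the witnessing path is a~$\Dmd$-realization in~$G'$; for each complex ``and''-node $v$ with $(a\!:\!\lDmd{\alpha}\varphi) \in \mL(v)$ the node $(v,a\!:\!\lDmd{\alpha}\varphi)$ is productive, and the initial intra-node portion of the witnessing path is a static realization for $a\!:\!\lDmd{\alpha}\varphi$ at~$v$. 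Conversely, when the algorithm returns \emph{false}, the root $v_0$ lies in $UnsatNodes$, so by the invariant $G$ has no consistent marking. Combining both directions with Theorem~\ref{theorem: sound-compl cPDLA} yields soundness and completeness of Algorithm~\ref{alg1}$'$.

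Third, for the complexity analysis, let $n$ be the size of $\mA \cup \Gamma$. By Lemma~\ref{lemma: size constr and-or gr ABox}, $G$ has $2^{O(n^2)}$ nodes, with labels of size $O(n^3)$, and can therefore be constructed in $2^{O(n^2)}$ steps. The trace graph $G_t$ has $2^{O(n^2)}$ nodes (one per pair (node, formula in its label)) and can be built and later updated in $2^{O(n^2)}$ steps. Computing $V$ at Step~\ref{alg1-step-comp-nodes-not-satisfying-global-consistency} is a reachability query on $G_t$, hence also $2^{O(n^2)}$, while each call to $updateUnsatNodes$ is polynomial in $|G|$. Since $UnsatNodes$ only grows and is bounded by $|G|$, the outer loop iterates at most $2^{O(n^2)}$ times, giving an overall $2^{O(n^2)}$ running time, which is \EXPTIME.

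I expect the main technical subtlety to lie in the global-consistency direction for complex ``and''-nodes. Unlike the pure-formula case, a $\Dmd$-requirement $a\!:\!\lDmd{\alpha}\varphi$ at such a node is discharged by a static realization whose final formula may have the shape $a\!:\!\lDmd{\sigma}\varphi'$; the realization of $\varphi'$ is then delegated to the successor simple node reached by the $\rTransP$-edge labelled $a\!:\!\lDmd{\sigma}\varphi'$. The definition of $G_t$ must therefore knit together three kinds of edges---intra-node static-trace edges at complex ``and''-nodes, intra-graph trace edges at simple nodes, and the $\rTransP$-handover edges connecting a complex ``and''-node to a simple node---so that productivity of $(v,a\!:\!\lDmd{\alpha}\varphi)$ in $G_t$ is exactly equivalent to the conjunction of a static realization at $v$ and a $\Dmd$-realization in $G'$ at the reached simple node. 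Once that equivalence is in place, the rest of the argument is a routine adaptation of the proof of Theorem~\ref{theorem: PDL}.
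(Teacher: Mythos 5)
Your proposal is correct and follows essentially the same route as the paper, which itself only states that the result is obtained from the proofs of Lemma~\ref{lemma: alg1 compl} and Theorem~\ref{theorem: PDL} ``with appropriate changes'', combined with Theorem~\ref{theorem: sound-compl cPDLA} and Lemma~\ref{lemma: size constr and-or gr ABox}. The changes you spell out -- the $2^{O(n^2)}$ bound and the correspondence between productivity in the three-edge-kind trace graph $G_t$ and the combination of static realizations at complex ``and''-nodes with $\Dmd$-realizations at simple nodes -- are exactly the intended adaptations.
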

\VSpace{-0.2em}

Algorithm~\ref{alg1}$'$ uses global caching for both complex nodes and simple nodes. What happens if we use global
caching only for simple nodes and backtracking on branchings at complex ``or''-nodes? Is the complexity still \EXPTIME?
The rest of this subsection deals with these questions.

\VSpace{-0.4em}
\begin{lemma} \label{lemma: backtrack ABox}
Let $\mA$ be an ABox, $\Gamma$ be a~TBox, and $G$ be an ``and-or'' graph for $(\mA,\Gamma)$. Then $G$ has a~consistent marking iff there exists a~complex ``and''-node $v$ of $G$ such that the subgraph generated by $v$ of $G$ (which uses $v$ as
the root) has a~consistent marking.
\end{lemma}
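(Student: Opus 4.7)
The plan is to prove the two directions separately, exploiting two structural facts about the ``and-or'' graph $G$. First, by the applicability restrictions imposed on the static ``prime'' rules (the $\rfs$-mechanism for $\rAndP$, $\rOrP$, $\rBoxScP$, $\rDmdScP$, $\rBoxCpP$, $\rDmdCpP$, $\rBoxQmP$, $\rDmdQmP$, $\rBoxStP$, $\rDmdStP$, together with the proper-superset condition for $\rBoxP$), the complex part of $G$ is acyclic and every descending chain of complex nodes is finite. Second, since $\rTransP$ is the only ``and''-rule on complex nodes and all of its successors are simple, and simple nodes never produce complex successors, any path in $G$ from the root to a complex ``and''-node $v$ consists of a (possibly empty) chain of complex ``or''-nodes followed by $v$ itself.

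For the ``only if'' direction, suppose $G'$ is a consistent marking of $G$. Starting at the root I will repeatedly choose, at each complex ``or''-node encountered, one successor that lies in $G'$. By the first structural fact this descent terminates and reaches a complex ``and''-node $v$. I will then take $G'_v$ to be the subgraph of $G'$ induced by the nodes reachable from $v$ inside $G'$ and verify that $G'_v$ is a consistent marking of $G_v$. The ``or''- and ``and''-node conditions transfer because reachability from $v$ is closed along $G'$-edges; local consistency is inherited since $G'$ contains no $\{\bot\}$-labelled node; the $\Dmd$-realization of any $\lDmd{\alpha}\varphi$ at a simple node $w\in G'_v$ is a trace traveling forward from $w$ in $G'$ and so remains inside $G'_v$; and a static realization of $a\!:\!\lDmd{\alpha}\varphi$ at a complex ``and''-node is a sequence of formulas inside the label of that single node, hence insensitive to the surrounding graph.

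For the ``if'' direction, suppose $H$ is a consistent marking of $G_v$ for some complex ``and''-node $v$. I fix a path $P$ in $G$ from the root to $v$; by the second structural fact every node of $P$ strictly before $v$ is a complex ``or''-node. Then I take $G'$ to be the union of the nodes and edges of $P$ with those of $H$. The root lies on $P$; each complex ``or''-node of $P$ has the next node of $P$ as a marked successor; the only complex ``and''-node of $G'$ is $v$ itself, whose successors in $G$ all belong to $H$ because $H$ is a marking; and the marking conditions for nodes inside $H$ are inherited. Local consistency holds because the complex nodes of $P$ all carry ABox labels different from $\{\bot\}$ and $H$ has no $\{\bot\}$-node either. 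For global consistency, the $\Dmd$-realizations at simple nodes and the static realization at $v$ come directly from $H$, while complex ``or''-nodes of $P$ carry no such obligation by the very definition of consistent marking.

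The main subtlety is precisely this asymmetry in the definition of global consistency --- it constrains only complex ``and''-nodes and simple nodes, not complex ``or''-nodes --- combined with the clean separation provided by $\rTransP$ between the complex and simple parts of $G$. Together they guarantee that in the ``if'' direction no new static-realization obligations are introduced along $P$, and that in the ``only if'' direction all the obligations of $G'$ restrict neatly to the subgraph reachable from $v$. Once this observation is isolated, the remaining verifications reduce to routine checks on ``or''-/``and''-node markedness and on the forward nature of traces.
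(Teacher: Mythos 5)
Your proof is correct and follows essentially the same route as the paper, whose entire argument is the one-line observation that the root of $G$ is a complex node and every father of a complex node must be a complex ``or''-node; your two structural facts and the explicit splicing of the path $P$ with the marking $H$ are just a careful elaboration of that observation.
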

\VSpace{-1.0em}
\begin{proof}
Just notice that the root of $G$ is a~complex node and every father node of a~complex node must be a~complex
``or''-node. \koniec
\end{proof}

By {\em Algorithm~\ref{alg1}$''$} we refer to the algorithm that checks whether a~given ABox $\mA$ is satisfiable
w.r.t. a~given TBox $\Gamma$ as follows. The algorithm ``simulates'' the tasks of constructing an ``and-or'' graph for
$(\mA,\Gamma)$ and checking whether the graph has a~consistent marking but does it as follows:
\begin{enumerate}
\item nondeterministically expand a~path from the root until reaching a~complex ``and''-node~$v$;
\item construct the full subgraph rooted at $v$;
\item check whether the subgraph has a~consistent marking (as done in the steps 2--7 of Algorithm~\ref{alg1}), and return {\em true} if it does;
\item if none of the possible executions returns {\em true} then return {\em false}.
\end{enumerate}

In practice, the first step of the above algorithm is executed by backtracking on the branchings of the applications of
``or''-rules. The algorithm does not keep all complex nodes but only the ones on the current path of complex nodes. On
the other hand, simple nodes can be globally cached. That is, simple nodes can be left through backtracking for use in
the next possible executions.

\begin{theorem}
Using backtracking to deal with nondeterminism, Algorithm~\ref{alg1}$''$ is an \EXPTIME\ decision procedure for checking whether a~given ABox $\mA$ is satisfiable w.r.t.\ a~given TBox $\Gamma$.
\end{theorem}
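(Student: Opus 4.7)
The plan is to combine Lemma~\ref{lemma: backtrack ABox} with the size bounds from Lemma~\ref{lemma: size constr and-or gr ABox} and Lemma~\ref{lemma: size constr and-or gr}, treating correctness and complexity separately. For correctness, I would observe that Algorithm~\ref{alg1}$''$ is precisely a backtracking implementation of the disjunction in Lemma~\ref{lemma: backtrack ABox}: each possible execution corresponds to a choice of a complex ``and''-node $v$, and Step~3 performs on the subgraph rooted at $v$ the same kind of consistency check as Algorithm~\ref{alg1}$'$. Hence by Lemma~\ref{lemma: backtrack ABox} together with Theorem~\ref{theorem: sound-compl cPDLA}, the algorithm returns \emph{true} iff $\mA$ is satisfiable w.r.t.~$\Gamma$.

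For the complexity bound, write $n$ for the size of $\mA \cup \Gamma$. Two observations suffice. First, by Lemma~\ref{lemma: size constr and-or gr ABox}, any path of complex nodes in the ``and-or'' graph has length $O(n^2)$; since every ``or''-rule of \cPDLA\ (that can expand a complex ``or''-node) branches into at most two alternatives, the nondeterministic exploration in Step~1 has at most $2^{O(n^2)}$ possible executions, and each branching step is processed in polynomial time in the size of a complex node, which is $O(n^3)$. Second, once a complex ``and''-node $v$ has been reached, the subgraph rooted at $v$ consists of $v$ itself together with the \emph{simple} subgraphs reachable from its $\rTransP$-successors (whose labels are sets of traditional formulas, hence simple nodes only). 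By Lemma~\ref{lemma: size constr and-or gr}, this simple part has $2^{O(n)}$ nodes and can be built, and its consistency checked exactly as in the proof of Theorem~\ref{theorem: PDL}, in time $2^{O(n)}$.

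Multiplying these bounds yields overall running time $2^{O(n^2)} \cdot 2^{O(n)} = 2^{O(n^2)}$, which is in \EXPTIME. The remark that simple nodes can be globally cached across backtrack branches is an optimization that would sharpen the simple-side work to one single $2^{O(n)}$ pass, but is not needed for the \EXPTIME\ bound.

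The main obstacle will not be the (routine) counting above, but justifying that the consistency check at $v$ is independent of the complex ancestors of $v$ in the full graph $G$. This is exactly the content of Lemma~\ref{lemma: backtrack ABox}: because every father of a complex node is a complex ``or''-node and never a complex ``and''-node, no sibling-joint satisfaction constraint crosses above $v$, so the marking problem factorizes as ``choose one complex successor at each complex ``or''-node on the way to $v$, then decide the subproblem rooted at $v$''. Once this structural localization is granted, plugging in the two size lemmas and the correctness of Algorithm~\ref{alg1}$'$ finishes the proof.
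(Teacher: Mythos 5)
Your proposal is correct and follows essentially the same route as the paper: correctness via Theorem~\ref{theorem: sound-compl cPDLA} combined with Lemma~\ref{lemma: backtrack ABox}, and the complexity bound by multiplying the $2^{O(n^2)}$ count of nondeterministic executions along $O(n^2)$-length complex paths by the $2^{O(n)}$ cost of building and checking the simple subgraph rooted at each complex ``and''-node. The extra detail you supply (the subgraph below a complex ``and''-node contains only simple nodes, and the structural localization is exactly what Lemma~\ref{lemma: backtrack ABox} provides) is a faithful unpacking of what the paper leaves implicit.
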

\VSpace{-1.5em}
\begin{proof}
\ShortVersion{(Sketch)\ }
By Theorem~\ref{theorem: sound-compl cPDLA} and Lemma~\ref{lemma: backtrack ABox}, Algorithm~\ref{alg1}$''$ is a~decision procedure for the considered problem. It remains to show that the algorithm runs in exponential time. Let $n$ be the size of $\mA \cup \Gamma$. As stated in the proof of \LongVersion{Lemma~\ref{lemma: size constr and-or gr ABox}}\ShortVersion{~\cite[Lemma 6.4]{pdl-tab-long}}, each path of complex nodes constructed by Step~1 of Algorithm~\ref{alg1}$''$ has length of rank $O(n^2)$. Analogously to the proofs of 
\LongVersion{Lemmas~\ref{lemma: alg1 compl} and \ref{lemma: size constr and-or gr ABox},}
\ShortVersion{\cite[Lemmas~6.2 and~6.4]{pdl-tab-long},}
it can be shown that Steps 2 and 3 of Algorithm~\ref{alg1}$''$ are executed in $2^{O(n)}$ steps. Hence the complexity of Algorithm~\ref{alg1}$''$ is of rank $2^{O(n^2)} \times 2^{O(n)}$, which is $2^{O(n^2)}$. \koniec
\end{proof}


\VSpace{-1.5em}
\subsection{On the Instance Checking Problem}
\label{section: icp}
\VSpace{-0.5em}

Observe that $(\mA,\Gamma) \models \varphi(a)$ iff the ABox $\mA \cup \{a:\overline{\varphi}\}$ is unsatisfiable
w.r.t.~$\Gamma$. So, the instance checking problem is reduced to the problem of checking unsatisfiability of an ABox
w.r.t.\ a~TBox. What we are interested in is the {\em data complexity} of the instance checking problem, which is
measured in the size of $\mA$ when assuming that $\mA$ is extensionally reduced and $\Gamma$, $\varphi$, $a$ are fixed.
Here, $\Gamma$, $\varphi$ and $a$ form a~fixed query, while $\mA$ varies as input data.

\begin{theorem}
The data complexity of the instance checking problem in PDL is coNP-complete.
\end{theorem}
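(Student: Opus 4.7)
The coNP-hardness is inherited from $\ALC$ via Schaerf~\cite{Schaerf94}, as noted in the introduction, so the plan focuses on the matching upper bound. By the reduction stated at the start of this subsection, it suffices to show that, with $\Gamma$ and $\overline{\varphi}$ held fixed, checking satisfiability of an extensionally reduced ABox $\mA \cup \{a\!:\!\overline{\varphi}\}$ w.r.t.\ $\Gamma$ is in NP in $|\mA|$.

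The key observation is that the Fischer-Ladner closure $C = FL(\Gamma \cup \{\overline{\varphi}\})$ has constant size, so the family $\mathcal{T}$ of ``types'' (subsets of $C$) is of constant cardinality. As query-dependent preprocessing (running in constant time), compute the subset $\mathcal{T}_0 \subseteq \mathcal{T}$ of those $X \supseteq \Gamma$ for which the ``and-or'' graph for $(X,\Gamma)$ admits a consistent marking; by Theorem~\ref{theorem: s-c-PDL}, $\mathcal{T}_0$ is exactly the collection of subsets of $C$ satisfiable w.r.t.\ $\Gamma$.

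The nondeterministic algorithm then guesses a type function $t$ assigning to each state variable $b$ occurring in $\mA$ a type $t(b) \in \mathcal{T}$ (a polynomial-size object). It verifies in polynomial time: (i)~each $t(b)$ contains $\Gamma$ and is saturated under the static rules of \cPDL\ inside $C$; (ii)~$\overline{\varphi} \in t(a)$, every $(b\!:\!p) \in \mA$ has $p \in t(b)$, and for every $\sigma(b,c) \in \mA$ and every $[\sigma]\psi \in t(b)$ one has $\psi \in t(c)$; (iii)~every $\lDmd{\sigma}\psi \in t(b)$ is either realized by some $\sigma(b,c) \in \mA$ with $\psi \in t(c)$, or its ``anonymous successor type'' $\{\psi\} \cup \{\xi \mid [\sigma]\xi \in t(b)\} \cup \Gamma$ belongs to $\mathcal{T}_0$; (iv)~every eventuality $\lDmd{\alpha^*}\psi \in t(b)$ admits a realization in the auxiliary trace graph whose vertices are pairs $(b',\varphi')$ with $b'$ a state variable of $\mA$, and whose transitions reflect the static-rule structure of $C$ together with the edges of $\mA$, with anonymous successors collapsed into the precomputed types from $\mathcal{T}_0$. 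Because $|\mathcal{T}_0|$ is constant, this trace graph has size polynomial in $|\mA|$, so condition~(iv) reduces to polynomial-time reachability.

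Soundness follows by taking, in any witnessing model $\mM$, the natural type function $b \mapsto \{\psi \in C \mid \mM, b^\mM \models \psi\}$, which meets (i)--(iv). For completeness, a type function $t$ passing the checks extends to a consistent marking of the ``and-or'' graph for $(\mA \cup \{a\!:\!\overline{\varphi}\}, \Gamma)$ by identifying each complex node with its prescribed type and attaching, at each unrealized existential, a precomputed consistent marking of the simple subgraph rooted at the successor type in $\mathcal{T}_0$; Theorem~\ref{theorem: sound-compl cPDLA} then produces a model of $(\mA \cup \{a\!:\!\overline{\varphi}\}, \Gamma)$. The main obstacle is condition~(iv), since eventualities may need to traverse the boundary between named state variables and anonymous successors; this is resolved by folding realizability of the anonymous part into the precomputed $\mathcal{T}_0$, so that only the polynomially many named variables appear explicitly in the trace graph.
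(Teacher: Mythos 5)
Your proposal is correct in substance but takes a genuinely different route from the paper. The paper keeps the whole argument inside its existing machinery: it first replaces $\varphi$ by a fresh proposition $p$ axiomatized in the TBox (setting $\Gamma' = \Gamma \cup \{\lnot p \lor \varphi,\; p \lor \overline{\varphi}\}$ and $\mA' = \mA \cup \{a\!:\!\lnot p\}$ so that the input ABox stays extensionally reduced), and then simply runs Algorithm~\ref{alg1}$''$ on $(\mA',\Gamma')$, observing that the nondeterministic Step~1 (choosing a path of complex ``or''-nodes down to a complex ``and''-node) takes time $O(n^2)\times O(n^2)$ because complex-node paths have length $O(n^2)$, while Steps 2--3 operate on simple nodes whose contents depend only on the fixed $\Gamma'$ and hence take polynomial time. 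You instead give a type-elimination argument: guess a type function on the named individuals, precompute (in query-dependent constant time) the set $\mathcal{T}_0$ of satisfiable subsets of the fixed Fischer--Ladner closure, and check coherence plus eventuality-realizability via reachability in a polynomial-size trace graph with the anonymous part folded into $\mathcal{T}_0$. Both proofs rest on the same two structural facts (the named part of the tableau is polynomial in $|\mA|$; the anonymous part depends only on the query), but your certificate is the final type assignment rather than a branch of the ``and-or'' graph, which forces you to re-establish soundness and completeness of the type conditions---you do this correctly by routing back through Theorems~\ref{theorem: s-c-PDL} and~\ref{theorem: sound-compl cPDLA}, and your treatment of eventualities crossing into the anonymous part (hand-off to a satisfiable type, where every diamond is automatically realized, with no path ever returning to the named part) is exactly right. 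Two small repairs: the ABox $\mA \cup \{a\!:\!\overline{\varphi}\}$ is not extensionally reduced when $\overline{\varphi}$ is compound (harmless, since $\overline{\varphi}$ is part of the fixed query, but this is why the paper introduces the fresh proposition $p$); and condition~(i) should additionally require each $t(b)$ to be clash-free (equivalently, $t(b) \in \mathcal{T}_0$), since saturation alone does not exclude $\{p,\lnot p\} \subseteq t(b)$ and local consistency of the resulting marking would otherwise fail.
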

\VSpace{-1em}
\begin{proof}
Let $\mA$ be an extensionally reduced ABox, $\Gamma$ be a~TBox, $\varphi$ be a~(traditional) formula in NNF, and $a$ be a~state variable. Consider the problem of checking whether $(\mA,\Gamma) \models \varphi(a)$.

Let $p$ be a~fresh proposition (not occurring in $\mA$, $\Gamma$, $\varphi$) and let $\Gamma'  =  \Gamma \cup \{\lnot p \lor \varphi$, $p \lor \overline{\varphi}\}$ and $\mA'  =  \mA \cup \{a:\lnot p\}$.

Observe that $\Gamma'$ extends $\Gamma$ with the formulas stating that $p$ is equivalent to~$\varphi$, and that
$(\mA,\Gamma) \models \varphi(a)$ iff the ABox $\mA'$ is unsatisfiable w.r.t.\ the TBox~$\Gamma'$.

Let $n$ be the size of $\mA$. The size of $\mA' \cup \Gamma'$ is thus of rank $O(n)$.

Consider an execution of Algorithm~\ref{alg1}$''$ for the pair $\mA'$ and $\Gamma'$. As stated in the proof of \LongVersion{Lemma~\ref{lemma: size constr and-or gr ABox}}\ShortVersion{~\cite[Lemma 6.4]{pdl-tab-long}}, each path of complex nodes constructed by Step~1 of Algorithm~\ref{alg1}$''$ has length of rank $O(n^2)$. The sets $\mL(v)$ and $\rfs(v)$ of each complex node contain at most $O(n^2)$ formulas. Hence a~nondeterministic execution of Step~1 of Algorithm~\ref{alg1}$''$ runs in time $O(n^2) \times O(n^2)$. Since $A'$ is extensionally reduced, The sets $\mL(v)$ and $\rfs(v)$ of each simple node $v$ depends only on $\Gamma'$. Since $\Gamma'$ is fixed, Steps 2 and 3 of Algorithm~\ref{alg1}$''$ are executed in time of rank $O(n^2)$. Hence the execution of Algorithm~\ref{alg1}$''$ for $\mA'$ and $\Gamma'$ runs nondeterministically in polynomial time the size of $\mA$, and therefore the instance checking problem $(\mA,\Gamma) \models \varphi(a)$ is in coNP.

The coNP-hardness follows from the fact that the instance checking problem in the description logic $\ALC$ is coNP-hard (see~\cite{Schaerf94}). \koniec
\end{proof}


\LongVersion{
\section{Optimizations}
\label{section: opts}

In this section we discuss optimizations for the algorithms given in the previous section. For simplicity we consider
only Algorithm~\ref{alg1}, but the optimizations are applicable also to Algorithms~\ref{alg1}$'$ and \ref{alg1}$''$.

Observe that Algorithm~\ref{alg1} first constructs an ``and-or'' graph and then checks whether the graph contains
a~consistent marking. To speed up the performance these two tasks can be done concurrently. For this we update the
structures $UnsatNodes$, $G'$, $G_t$ mentioned in the algorithm ``on-the-fly'' during the construction of $G$. The main changes are as follows:
\begin{itemize}
\item During the construction of the ``and-or'' graph $G$, each node of $G$ has status \Unexpanded, \Expanded, \Unsat\ or \Sat. The initial status of a~new node is \Unexpanded. When a~node is expanded, we change its status to \Expanded. The status of a~node changes to \Unsat\ (resp.\ \Sat) when there is an evidence that the label of the node is unsatisfiable (resp.\ satisfiable) w.r.t.\ $\Gamma$. When a~node becomes \Unsat, we insert it into the set $UnsatNodes$.
\item When a~node of $G$ is expanded or $G'$ is modified, we update $G_t$ appropriately.
\item When a~new node is created, if its label contains $\bot$ or a~clashing pair $\varphi$, $\overline{\varphi}$ then we change the status of the node to \Unsat. This is the implicit application of the rule $\rBotz$ and a~generalized form of the rule $\rBot$. Thus, we can drop the explicit rules $\rBotz$ and $\rBot$. When a~non-empty set $V$ of nodes of $G$ becomes \Unsat, we call $updateUnsatNodes(G,UnsatNodes,V)$ to update the set $UnsatNodes$.
\item When $UnsatNodes$ is modified, we update $G'$ appropriately.
\item Since $G_t$ is not completed during the construction, when computing the set $V$ of nodes of $G'$ that cause $G'$ not satisfying the global consistency property as in Step~\ref{alg1-step-comp-nodes-not-satisfying-global-consistency} of Algorithm~\ref{alg1} we treat a~node $(v,\varphi)$ of $G_t$ also as an {\em end-node} if $v$ has status \Unexpanded\ or \Sat.\footnote{Note that if $v$ has status \Unexpanded\ (resp.\ \Sat) then $(v,\varphi)$ may (resp.\ must) be a~productive node of $G_t$.} We compute such a~set $V$ occasionally, accordingly to some criteria, and when $G_t$ has been completed. The computation is done by propagating ``productiveness'' backward through the graph $G_t$. The nodes of the resulting $V$ become $\Unsat$.
\end{itemize}

During the construction of the ``and-or'' graph $G$, if a~subgraph of $G$ has been fully expanded in the sense that none of its nodes has status \Unexpanded\ or has a descendant node with status \Unexpanded\ then each node of the subgraph can be determined to be \Unsat\ or \Sat\ regardlessly of the rest of $G$. That is, if a~node of the subgraph cannot be determined to be \Unsat\ by the operations described in the above list then we can set its status to \Sat. This technique was proposed in~\cite{Nguyen08CSP-FI}.

A number of optimizations developed by previous researchers (see, e.g., \cite{HorrocksP99,donini-massacci-exptime-tableau-for-alc}) can be applied for our algorithms. Apart from that, a number of special optimization techniques for search space of the form of ``and-or'' graphs has been developed~\cite{GoreNguyenGC,Nguyen08CSP-FI}. These optimizations have been implemented and experimented with by the first author for the tableau prover TGC for checking satisfiability in $\ALC$~\cite{Nguyen08CSP-FI}.\footnote{Only a simple kind of absorption optimization has been implemented for TGC: for the case the TBox is acyclic and consists of only concept definitions of the form $A \doteq C$, ``lazy unfolding'' is used; consequently, TGC runs on the test set DL'98 T98-kb equally well as on the test set DL'98 T98-sat. For the case the TBox is acyclic and contains also concept inclusions of the form $A \sqsubseteq C$, a simple solution can be adopted: treat $A \sqsubseteq C$ as $A \doteq (C \sqcap A')$ for a new atomic concept $A'$. For the case the TBox is cyclic, one can try to divide the TBox into two parts $\mathcal{T}_1 \cup \mathcal{T}_2$, where $\mathcal{T}_1$ is a maximal acyclic sub-TBox ``not depending'' on the concepts defined in $\mathcal{T}_2$, then one can apply the mentioned ``replacing'' and ``lazy unfolding'' techniques for $\mathcal{T}_1$. Of course, more advanced absorption optimizations can also be tried for TGC. (Here, we write about TGC, but note that TGC can be extended for dealing with PDL in a natural way.)} 
The experimental results of TGC show that some of them are very useful. Most of the optimization techniques discussed in \cite{GoreNguyenGC,Nguyen08CSP-FI} can directly be applied for PDL. However, two things need be further worked out for PDL. The first one is how to efficiently compute ``\Unsat-core'' of a~node that becomes \Unsat\ because it violates the global consistency property.\footnote{An \Unsat-core of a~node is a~subset of the label of the node that causes the node \Unsat. The smaller an \Unsat-core, the better its usefulness (for subset-checking).} The second one is what normalized form should be used for formulas in PDL. It is not difficult to give some solutions for these problems, but their usefulness should be estimated by tests.
} 

\VSpace{-1.0em}
\section{Conclusions}
\label{section: conc}
\VSpace{-0.5em}

In this paper we first provided a tableau-based algorithm for checking satisfiability of a~set of formulas in PDL. We then gave an \EXPTIME\ tableau decision procedure for checking consistency of an ABox w.r.t.\ a~TBox in PDL ($\ALC_{reg}$). 

Our latter procedure is the first optimal (\EXPTIME) tableau decision procedure not based on transformation for checking consistency of an ABox w.r.t.\ a TBox in PDL. Recall that, in~\cite{DeGiacomoThesis} the ABox is encoded by nominals, while in~\cite{GiacomoL96} the ABox is encoded by a concept assertion plus terminology axioms. Note that the approach based on transformation is not efficient in practice: in the well-known tutorial ``Description Logics - Basics, Applications, and More'', Horrocks and Sattler wrote {\em``direct algorithm/implementation instead of encodings''} and {\em``even simple domain encoding is disastrous with large numbers of roles''}. 

The result that the data complexity of the instance checking problem in PDL is coNP-complete is first established in our paper. 

Combining global caching for nodes representing objects not occurring as individuals in the ABox with backtracking for nodes representing individuals occurring in the ABox to obtain another \EXPTIME\ decision procedure is first studied by us in this paper.

Despite that our decision procedure for the case without ABoxes is based on Pratt's algorithm for PDL, our formulation of the tableau calculus for the procedure and our proof of its completeness are completely different than the ones of Pratt. Our decision procedure is formulated in a much simpler way. Note that Pratt's algorithm has been considered complicated: Donini and Massacci wrote in their paper~\cite{donini-massacci-exptime-tableau-for-alc} on \EXPTIME\ tableaux for $\ALC$ that they had proposed ``the first simple tableau based decision procedure working in single exponential time'' (here, note that $\ALC$ is a sub-logic of PDL), which in turn is considered by Baader and Sattler~\cite{BaaderSattler01} still as ``rather complicated''. Also note that nobody has implemented Pratt's algorithm (except Pratt himself in the 70s, but his prototype is not available) and it is natural to ask why that algorithm, known since the 70s, remains unimplemented. 

The idea of global caching comes from Pratt's paper on PDL, but it was discussed rather informally. Donini and Massacci in the mentioned paper on $\ALC$ stated that the caching optimization technique {\em``prunes heavily the search space but its unrestricted usage may lead to unsoundness [37]. It is conjectured that `caching' leads to \EXPTIME-bounds but this has not been formally proved so far, nor the correctness of caching has been shown.''}. Gor{\'e} and Nguyen have recently formalized sound global caching \cite{GoreNguyenTab07,GoreNguyen08CSP} for tableaux in a number of modal logics without the $*$ operator. Extending sound global caching for PDL would better be ``formally proved'' as done in our paper. Our extension for PDL considerably differs from \cite{GoreNguyenTab07,GoreNguyen08CSP}~: 
\LongVersion{\begin{itemize}}
\LongVersion{\item}\ShortVersion{i)} Due to the $*$ operator we have to check not only local consistency but also global consistency of the constructed ``and-or'' graph.
\LongVersion{\item}\ShortVersion{ii)} We defined tableaux directly as ``and-or'' graphs with global caching, while in \cite{GoreNguyenTab07,GoreNguyen08CSP} Gor{\'e} and Nguyen used (traditional) tree-like tableaux and formulated global caching separately. Consequently, we do not have to prove soundness of global caching when having soundness and completeness of the calculus, while Gor{\'e} and Nguyen \cite{GoreNguyenTab07,GoreNguyen08CSP} had to prove soundness of global caching separately after having completeness of their calculi. 
\LongVersion{\end{itemize}}

Our method is applicable for other modal logics, e.g.\ CPDL and regular grammar logics with/without converse. As consequences, it can be shown that the data complexity of the instance checking problem in these logics is coNP-complete. 


\bigskip
\noindent
{\bf Acknowledgements }
We would like to thank Rajeev Gor{\'e} and Florian Widmann for pointing out a mistake in the previous version of this paper. 


\VSpace{-0.5em}


\ShortVersion{
\newpage
\appendix

\section{Examples of ``And-Or'' Graphs}
\label{appendix: examples}

\ExampleF

\ExampleS
} 

\LongVersion{
\newpage
\appendix

\section{Soundness and Completeness of \cPDL}
\label{section: proof PDL}

The alphabet $\Sigma(\alpha)$ of a~program $\alpha$ is defined as follows: $\Sigma(\sigma) = \{\sigma\}$,
$\Sigma(\varphi?) = \{\varphi?\}$, $\Sigma(\beta;\gamma) = \Sigma(\beta) \cup \Sigma(\gamma)$, $\Sigma(\beta \cup
\gamma) = \Sigma(\beta) \cup \Sigma(\gamma)$, $\Sigma(\beta^*) = \Sigma(\beta)$. Thus, $\Sigma(\alpha)$ contains not
only atomic programs but also expressions of the form~$\varphi?$.

A program $\alpha$ is a~regular expression over its alphabet $\Sigma(\alpha)$. The regular language $\mL(\alpha)$
generated by $\alpha$ is specified as follows: $\mL(\sigma) = \{\sigma\}$, $\mL(\varphi?) = \{\varphi?\}$, $\mL(\beta
\cup \gamma) = \mL(\beta) \cup \mL(\gamma)$, $\mL(\beta;\gamma) = \mL(\beta).\mL(\gamma)$, and $\mL(\beta^*) =
(\mL(\beta))^*$, where if $L$ and $M$ are sets of words then $L.M  = \{\alpha\beta \mid \alpha \in L, \beta \in M\}$
and $L^* = \bigcup_{n \geq 0} L^n$ with $L^0 = \{\varepsilon\}$ and $L^{n+1} = L.L^n$, where $\varepsilon$ is the empty
word. We treat words of $\mL(\alpha)$ also as programs, e.g.\ $\sigma_1(\varphi?)\sigma_2$ as
$(\sigma_1;\varphi?;\sigma_2)$.

By $\mG(\alpha)$ we denote the context-free grammar over alphabet $\Sigma(\alpha)$, which is specified as follows: the
grammar variables are sub-expressions of $\alpha$ that do not belong to $\Sigma(\alpha)$, the starting symbol is
$\alpha$, and the grammar rules are:
\begin{eqnarray*}
(\beta;\gamma) & \to & \beta\gamma \\
(\beta \cup \gamma) & \to & \beta \mid \gamma \\
(\beta^*) & \to & \varepsilon \mid \beta(\beta^*)
\end{eqnarray*}

Given a~modality $\triangle$ of the form $\lDmd{\alpha_1}\ldots\lDmd{\alpha_k}$ or $[\alpha_1]\ldots[\alpha_k]$ we call $\alpha_1\ldots\alpha_k$ the program sequence corresponding to~$\triangle$. We call $\lDmd{\alpha_1}\ldots\lDmd{\alpha_k}$ (resp.\ $[\alpha_1]\ldots[\alpha_k]$) the existential (resp.\ universal) modality corresponding to $\alpha_1\ldots\alpha_k$. 

\subsection{Soundness}

\begin{lemma}[Soundness] \label{lemma: soundness PDL}
Let $X$ and $\Gamma$ be finite sets of traditional formulas in NNF, and $G$ be an ``and-or'' graph for $(X,\Gamma)$. Suppose that $X$ is
satisfiable w.r.t.\ the set $\Gamma$ of global assumptions. Then $G$ has a~consistent marking.
\end{lemma}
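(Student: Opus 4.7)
My plan is to construct a consistent marking $G'$ of $G$ directly from a semantic witness. Fix a Kripke model $\mM$ and a state $w_0 \in \Delta^\mM$ such that $\mM$ validates $\Gamma$ and $\mM, w_0 \models X$. I will build $G'$ top-down, maintaining the invariant that every node $v$ included in $G'$ comes equipped with a distinguished state $w_v \in \Delta^\mM$ with $\mM, w_v \models \mL(v)$. The root is handled by taking $w_v = w_0$, which works because $\mL(\nu) = X \cup \Gamma$ and $\mM$ validates $\Gamma$. Also note that all formulas in $\rfs(v)$ are automatically satisfied at $w_v$ since they were reduced earlier along a chain of static rules, so the invariant is preserved by the way the calculus tracks previously reduced principal formulas.

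The inductive step inspects the rule used to expand $v$. For a static ``or''-rule such as $\rOr$, $\rBoxQm$, $\rDmdCp$, or $\rDmdSt$, the semantic meaning of the principal formula at $w_v$ forces one of the possible conclusions to be satisfied at the same state $w_v$; I add the corresponding edge to $G'$ and use $w_v$ as the associated state of the successor. The unary static rules and the rule $\rBoxCp$ are immediate. For the transitional rule $\rTrans$, which produces an ``and''-node, I add every edge to $G'$: for each principal formula $\lDmd{\sigma}\varphi \in \mL(v)$ there is some successor state $u$ with $\sigma^\mM(w_v,u)$ and $\mM,u \models \varphi$, and by the semantics of $[\sigma]$ every formula $\psi$ with $[\sigma]\psi \in \mL(v)$ is satisfied at $u$; together with the global assumptions in $\Gamma$ this gives the required invariant for the successor labeled by $\lDmd{\sigma}\varphi$. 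Local consistency is then immediate: a node with label $\{\bot\}$ would force $\mM, w_v \models \bot$, which is impossible; similarly the rule $\rBot$ is blocked.

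The delicate part is verifying global consistency, i.e.\ producing a $\Dmd$-realization in $G'$ for every formula of the form $\lDmd{\alpha}\varphi$ appearing in the label of a node $v \in G'$. The plan is to attach to each such pair $(v, \lDmd{\alpha}\varphi)$ a semantic witness: a word $\pi \in \mL(\alpha)$ and a run $w_v = u_0, u_1, \ldots, u_m$ in $\mM$ such that following $\pi$ transports $w_v$ to $u_m$ and $\mM, u_m \models \varphi$. Using the grammar $\mG(\alpha)$ introduced just above, I can measure the complexity of a witness by the pair (length of $\pi$, number of grammar productions needed to derive $\pi$ from $\alpha$) and choose a witness of minimum rank. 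The goal is to show that along the trace chosen by the marking this rank strictly decreases at every step: the static rules $\rDmdSc$, $\rDmdCp$, $\rDmdQm$, $\rDmdSt$ each correspond to exactly one grammar production on the outer program and therefore reduce the number of productions remaining to derive $\pi$, while the transitional rule consumes one atomic program, reducing the length of $\pi$. When the witness word reaches $\varepsilon$, the state satisfies $\varphi$, and at the next suitable reduction we can pick the base case of $\rDmdSt$ or the plain ``bottom'' of the trace, reaching a pair $(v_k,\varphi)$ which by definition is a $\Dmd$-realization. Since the rank lies in $\mathbb{N} \times \mathbb{N}$ and strictly decreases, the trace terminates in a $\Dmd$-realization after finitely many steps.

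The main obstacle I expect is the bookkeeping needed to align the syntactic choices forced on the marking at each node with the semantic choice of witness: at an ``or''-node of the form $\rDmdCp$ or $\rDmdSt$ I must select the branch whose sub-formula is consistent with the chosen minimum-rank witness rather than an arbitrary satisfiable branch, so the inductive construction of $G'$ in the second paragraph must be refined to make these selections coherently across all pending $\lDmd{\alpha}\varphi$ obligations. Handling the rule $\rDmdQm$ on $\lDmd{\psi?}\xi$ requires that the chosen witness state already satisfies $\psi$ (which it does by the semantics of the test), so the new pair in the trace is $(v_i, \xi)$ as required. The interaction of multiple diamond formulas at the same node is handled by noting that the marking makes a single choice per ``or''-node, but this single choice is compatible with all diamond obligations because their minimum-rank witnesses share the common state $w_v$ and the branch selected is the one demanded by the rule applied, not by any specific diamond; a careful but standard argument then shows that the required $\Dmd$-realizations can all be read off simultaneously. \koniec
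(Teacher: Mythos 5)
Your overall strategy is the one the paper uses: take a Kripke model witnessing satisfiability, build a marking from it, get local consistency for free, and obtain global consistency by following a witness word $\delta\in\mL(\alpha)$ via left derivations in the grammar $\mG(\alpha)$, with a well-founded rank forcing termination of the trace. However, there is a genuine gap in the termination argument, and it sits exactly where you yourself locate ``the main obstacle''. Your claim that the rank ``strictly decreases at every step'' of a trace is false: at every step where the tracked formula $\lDmd{\alpha}\varphi$ is \emph{not} the principal formula of the rule expanding the current node, the trace definition simply carries $\varphi_i=\varphi_{i-1}$ over to the successor, and your rank is unchanged. So termination additionally requires that between two consecutive principal steps for the tracked diamond only finitely many other static steps occur. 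That in turn needs the $\rfs$ mechanism for the non-diamond rules \emph{and} the fact that every other diamond formula that becomes principal in the meantime is itself reduced along its own finite derivation (the diamond rules are exempt from the $\rfs$ restriction, so without this guidance the static phase can cycle, e.g.\ on $\lDmd{(\beta^*)^*}\psi$). Deferring this to ``a careful but standard argument'' leaves out the actual content of the proof.

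The second, related problem is your marking construction: you keep a single successor per ``or''-node and attach a single state $w_v$ to each node. Because of global caching a node of $G$ is shared among many parents and many traces, so one fixed choice per ``or''-node must simultaneously serve all pending diamond obligations arriving at that node, possibly with incompatible witnesses --- this is the coherence problem you flag but do not resolve. The paper sidesteps it entirely: its marking $G'$ contains \emph{every} successor whose label is satisfiable w.r.t.\ $\Gamma$ (the definition of a marking only requires at least one successor of an ``or''-node, not exactly one), and the model states are assigned per trace via a map $f$ from trace positions to states of $\mM$, not per node. Each diamond's trace can then independently choose, at an ``or''-node where some other diamond is principal, the successor dictated by that other diamond's realization, and distinct diamonds may exit the same ``or''-node through distinct edges. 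If you want to keep your thin, one-successor-per-node marking, you must prove that minimum-rank, derivation-guided choices can be fixed once per node so that no static cycle arises and every trace still terminates; that is doable but is precisely the delicate argument your proposal leaves open.
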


\begin{proof}
We construct a~consistent marking $G'$ of $G$ as follows. At the beginning, $G'$ contains only the root of $G$. Then,
for every node $v$ of $G'$ and for every successor $w$ of $v$ in $G$, if the label of $w$ is satisfiable w.r.t.\
$\Gamma$, then add the node $w$ and the edge $(v,w)$ to $G'$. It is easy to see that $G'$ is a~marking of $G$. Also,
$G'$ clearly satisfies the local consistency property.

We now check the global consistency property of $G'$. Let $v_0$ be a~node of $G'$, $Y$ be the label of $v_0$, and
$\lDmd{\alpha}\varphi$ be a~formula of $Y$. We show that the formula has a~$\Dmd$-realization (starting from $v_0$) in $G'$. As $Y$ is
satisfiable w.r.t.\ $\Gamma$, there exists a~Kripke model $\mM$ that validates $\Gamma$ and satisfies $Y$ at a~state
$u_0$.
Since $\lDmd{\alpha}\varphi$ is satisfied at $u_0$ in $\mM$, there exist a~word $\delta = \sigma_1\ldots
\sigma_{j_1}(\psi_1?)\sigma_{j_1+1}\ldots \sigma_{j_2}(\psi_2?)\ldots \sigma_{j_k} \in \mL(\alpha)$ (with $0 \leq j_1
\leq j_2 \leq \ldots \leq j_k$) and states $u_1,\ldots,u_{j_k}$ of $\mM$ such that: $(u_{j-1},u_j) \in \sigma_j^\mM$
for $1 \leq j \leq j_k$, $u_{j_l} \in \psi_l^\mM$ for $1 \leq l \leq k-1$, and $u_{j_k} \in \varphi^\mM$. Denote this
property by~$(\star)$.

We construct a~$\Dmd$-realization $(v_0,\varphi_0),\ldots,(v_h,\varphi_h)$ in $G'$ for $\lDmd{\alpha}\varphi$ at $v_0$
and a~mapping $f : \{v_0,\ldots,v_h\} \to \{u_0,\ldots,u_{j_k}\}$ such that $f(v_0) = u_0$, $f(v_h) = u_{j_k}$, and for
every $0 \leq i < h$, if $f(v_i) = u_j$ then $f(v_{i+1})$ is either $u_j$ or $u_{j+1}$.
For $1 \leq i \leq h$, let $\triangle_i$ be the sequence of existential modal operators such that $\varphi_i =
\triangle_i\varphi$ and let $S_i$ be the program sequence corresponding to $\triangle_i$.
We maintain the following invariants for $0 \leq i \leq h$~:
\begin{enumerate}
\item[(a)] The sequence $(v_0,\varphi_0),\ldots,(v_i,\varphi_i)$ is a~trace of $\lDmd{\alpha}\varphi$ in $G'$.
\item[(b)] The label of $v_i$ is satisfied at the state $f(v_i)$ of $\mM$.
\item[(c)] If $f(v_i) = u_j$ and $j \notin \{j_1,\ldots,j_{k-1}\}$ then the suffix $\delta_i$ of $\delta$ that starts from $\sigma_{j+1}$ is derivable from $S_i$ using a~left derivation of the context-free grammar~$\mG(\alpha)$.
\item[(d)] If $f(v_i) = u_j$ and $j_{l-1} < j = j_l = j_{l+1} = \ldots = j_{l+m} < j_{l+m+1}$ then there exists $0 \leq n \leq m+1$ such that the suffix $\delta_i$ of $\delta$ that starts from $(\psi_{l+n}?)$ if $n \leq m$ or from $\sigma_{j+1}$ if $n = m+1$ is derivable from $S_i$ using a~left derivation of the context-free grammar $\mG(\alpha)$.
\end{enumerate}

With $\varphi_0 = \lDmd{\alpha}\varphi$ and $f(v_0) = u_0$, the invariants clearly hold for $i = 0$.

Set $i := 0$. While $\varphi_i \neq \varphi$ do:
\begin{itemize}
\item Case $v_i$ is expanded using a~static rule and $\varphi_i$ is the principal formula:
  \begin{itemize}

  \item Case $\varphi_i = \lDmd{\beta;\gamma}\psi$ : Let $v_{i+1}$ be the only successor of $v_i$, $\varphi_{i+1} = \lDmd{\beta}\lDmd{\gamma}\psi$, $f(v_{i+1}) = f(v_i)$, and set $i := i+1$. Clearly, the invariants still hold.

  \item Case $\varphi_i = \lDmd{\xi?}\psi$ : Let $v_{i+1}$ be the only successor of $v_i$, $\varphi_{i+1} = \psi$, and $f(v_{i+1}) = f(v_i)$. Observe that the invariant (a) clearly holds for $i+1$. As $\varphi_i$ is satisfied at $f(v_i)$, both $\xi$ and $\psi$ are satisfied at $f(v_{i+1}) = f(v_i)$. Hence, the label of $v_{i+1}$ is satisfied at $f(v_{i+1})$, and the invariant (b) holds for $i+1$. Let $f(v_i) = u_j$. By the invariants (c) and (d) for $i$, we have that $j \in \{j_1,\ldots,j_{k-1}\}$. As $\delta_i$ is derivable from $S_i = (\xi?)S_{i+1}$ using a~left derivation of the context-free grammar $\mG(\alpha)$, the word $\delta_{i+1}$ such that $\delta_i = (\xi?)\delta_{i+1}$ is derivable from $S_{i+1}$ using a~left derivation of $\mG(\alpha)$. Therefore, by setting $i := i+1$, the invariants (a)-(d) still hold (for the new $i$).

  \item Case $\varphi_i = \lDmd{\beta \cup \gamma}\psi$ : Let $\psi = \triangle_i'\varphi$ and let $S_i'$ be the program sequence corresponding to $\triangle_i'$. By the invariants (c) and (d), $\delta_i$ is derivable from $(\beta \cup \gamma)S_i'$ using a~left derivation of $\mG(\alpha)$. If the first step of that derivation gives $\beta S_i'$ then let $\varphi_{i+1} = \lDmd{\beta}\psi$ else let $\varphi_{i+1} = \lDmd{\gamma}\psi$. By $(\star)$, it follows that $\varphi_{i+1}$ is satisfied at the state $f(v_i)$. Let $v_{i+1}$ be the successor of $v_i$ such that $\varphi_{i+1}$ belongs to the label of $v_{i+1}$. Clearly, the invariant (a) holds for $i+1$. Let $f(v_{i+1}) = f(v_i)$. Thus, the invariants (b)-(d) also hold for $i+1$. Therefore, by setting $i := i+1$, the invariants (a)-(d) still hold (for the new $i$).

  \item Case $\varphi_i = \lDmd{\beta^*}\psi$ : Let $\psi = \triangle_i'\varphi$ and let $S_i'$ be the program sequence corresponding to $\triangle_i'$. By the invariants (c) and (d), $\delta_i$ is derivable from $(\beta^*)S_i'$ using a~left derivation of $\mG(\alpha)$. If the first step of that derivation gives $S_i'$ then let $\varphi_{i+1} = \psi$ else let $\varphi_{i+1} = \lDmd{\beta}\lDmd{\beta^*}\psi$. Let $v_{i+1}$ be the successor of $v_i$ such that $\varphi_{i+1}$ belongs to the label of $v_{i+1}$, let $f(v_{i+1}) = f(v_i)$, and set $i := i+1$. Similarly to the above case, the invariants (a)-(d) still hold.
  \end{itemize}

\item Case $v_i$ is expanded using a~static rule but $\varphi_i$ is not the principal formula: 
  \begin{itemize}
  \item Case the principal formula is not of the form $\lDmd{\alpha'}\varphi'$: Let $v_{i+1}$ be the successor of $v_i$ such that $(v_i,v_{i+1})$ is an edge of $G'$ and the label of $v_{i+1}$ is satisfied at the state $f(v_i)$ of $\mM$. Such a~node $v_{i+1}$ exists because the label of $v_i$ is satisfied at the state $f(v_i)$ of $\mM$. Let $\varphi_{i+1} = \varphi_i$, $f(v_{i+1}) = f(v_i)$, and set $i := i+1$. Clearly, the invariants still hold.
  \item Case the principal formula is of the form $\lDmd{\alpha'}\varphi'$: During a sequence of applications of static rules between two applications of the transitional rule, proceed as for realizing $\lDmd{\alpha'}\varphi'$ in $G'$ (like for the current $\Dmd$-realization of $\lDmd{\alpha}\varphi$ in $G'$ at $v_0$). This decides how to choose $v_{i+1}$ and has effects on terminating the trace (to obtain a $\Dmd$-realization for $\lDmd{\alpha}\varphi$ in $G'$ at $v_0$). We also choose $\varphi_{i+1} = \varphi_i$ and $f(v_{i+1}) = f(v_i)$. By setting $i := i+1$, the invariants still hold (for the new $i$).
  \end{itemize}

\item Case $v_i$ is expanded using the transitional rule: Let $f(v_i) = u_j$. Then, by the invariants (c) and (d), $\varphi_i$ must be of the form $\lDmd{\sigma_{j+1}}\psi$. Let $(v_i,v_{i+1})$ be the edge of $G$ with the label $\varphi_i$. Let $\varphi_{i+1} = \psi$ and $f(v_{i+1}) = u_{j+1}$. Clearly, the invariant (a) holds for $i+1$. By $(\star)$, $\psi$ is satisfied at the state $u_{j+1}$ of $\mM$. By the invariant (b), the other formulas of the label of $v_{i+1}$ are also satisfied at the state $u_{j+1}$ of $\mM$. That is, the invariant (b) holds for $i+1$. It is easy to see that the invariants (c) and (d) remain true after increasing $i$ by 1. So, by setting $i := i+1$, all the invariants (a)-(d) still hold.
\end{itemize}

It remains to show that the loop terminates. 

Observe that any sequence of applications of static rules that contribute to the trace $(v_0,\varphi_0),\ldots,(v_i,\varphi_i)$ of $\lDmd{\alpha}\varphi$ in $G'$ eventually ends because:
\begin{itemize}
\item each formula of the form $\psi \land \xi$, $\psi \lor \xi$, or $[\beta]\psi$ with $\beta \notin \mindices$ may be reduced at most once; 
\item each formula of the form $\lDmd{\beta}\psi$ with $\beta \notin \mindices$ of the label of any node among $v_0,\ldots,v_i$ is reduced according to some $\Dmd$-realization. 
\end{itemize}

Therefore, sooner or later either $\varphi_i = \varphi$ or $v_i$ is a~node that is expanded by the transitional rule. In the second case, if $f(v_i) = u_j$ then $f(v_{i+1}) = u_{j+1}$. As the image of $f$ is $\{u_0,\ldots,u_{j_k}\}$, the construction of the trace must end at some step (with $\varphi_i = \varphi$) and we
obtain a~$\Dmd$-realization in $G'$ for $\lDmd{\alpha}\varphi$ at $v_0$. This completes the proof.\koniec
\end{proof}


\subsection{Model Graphs}

We will prove completeness of \cPDL\ via model graphs. The technique has previously been used in~\cite{Rautenberg83,Gore99,nguyen01B5SL} for logics without the star operator. A {\em model graph} is a~tuple
$\langle W,(R_\sigma)_{\sigma \in \mindices}, H\rangle$, where $W$ is a~set of nodes, $R_\sigma$ for $\sigma \in
\mindices$ is a~binary relation on $W$, and $H$ is a~function that maps each node of $W$ to a~set of formulas. We use
model graphs merely as data structures, but we are interested in ``consistent'' and ``saturated'' model graphs defined
below.

Model graphs differ from ``and-or'' graphs in that a~model graph contains only ``and''-nodes and its edges are labeled by
atomic programs. Roughly speaking, given an ``and-or'' graph $G$ with a~consistent marking $G'$, to construct a~model graph
one can stick together the nodes in a~``saturation path'' of a~node of $G'$ to create a~node for the model graph.
Details will be given later.

A trace of a~formula $\lDmd{\alpha}\varphi$ at a~node in a~model graph is defined analogously as for the case of ``and-or'' graphs. Namely, given a~model graph $\mM = \langle W, (R_\sigma)_{\sigma \in \mindices}, H\rangle$ and a~node $v \in
W$, a~{\em trace} of a~formula $\lDmd{\alpha}\varphi \in H(v)$ (starting from $v$) is a~sequence $(v_0,\varphi_0)$, \ldots, $(v_k,\varphi_k)$ such that:
\begin{itemize}
\item $v_0 = v$ and $\varphi_0 = \lDmd{\alpha}\varphi$;
\item for every $1 \leq i \leq k$, $\varphi_i \in H(v_i)$;
\item for every $1 \leq i \leq k$, if $v_i = v_{i-1}$ then:
  \begin{itemize}
  \item if $\varphi_{i-1} = \lDmd{\beta;\gamma}\psi$ then $\varphi_i = \lDmd{\beta}\lDmd{\gamma}\psi$,
  \item else if $\varphi_{i-1} = \lDmd{\beta \cup \gamma}\psi$ then $\varphi_i = \lDmd{\beta}\psi$ or $\varphi_i = \lDmd{\gamma}\psi$,
  \item else if $\varphi_{i-1} = \lDmd{\beta^*}\psi$ then $\varphi_i = \psi$ or $\varphi_i = \lDmd{\beta}\lDmd{\beta^*}\psi$,
  \item else $\varphi_{i-1}$ is of the form $\lDmd{\psi?}\xi$ and $\varphi_i = \xi$;
  \end{itemize}
\item for every $1 \leq i \leq k$, if $v_i \neq v_{i-1}$ then:
  \begin{itemize}
  \item $\varphi_{i-1}$ is of the form $\lDmd{\sigma}\psi$ and $\varphi_i = \psi$ and $(v_{i-1},v_i) \in R_\sigma$.
  \end{itemize}
\end{itemize}

A trace $(v_0,\varphi_0)$, \ldots, $(v_k,\varphi_k)$ of $\lDmd{\alpha}\varphi$ in a~model graph $\mM$ is called a~{\em
$\Dmd$-realization} for $\lDmd{\alpha}\varphi$ at $v_0$ if $\varphi_k = \varphi$.

Similarly as for markings of ``and-or'' graphs, we define that a~model graph $\mM = \langle W, (R_\sigma)_{\sigma \in
\mindices}, H\rangle$ is {\em consistent} if:
\begin{description}
\item[local consistency:] for every $v \in W$, $H(v)$ contains neither $\bot$ nor a~clashing pair of the form $p$, $\lnot p\,$;
\item[global consistency:] for every $v \in W$, every formula $\lDmd{\alpha}\varphi$ of $H(v)$ has a~$\Dmd$-realization.
\end{description}

A model graph $\mM = \langle W,(R_\sigma)_{\sigma \in \mindices}, H\rangle$ is said to be {\em saturated} if the
following conditions hold for every $v \in W$ and $\varphi \in H(v)$~:
\begin{itemize}
\item if $\varphi = \psi \land \xi$ then $\{\psi,\xi\} \subset H(v)$,
\item if $\varphi = \psi \lor \xi$ then $\psi \in H(v)$ or $\xi \in H(v)$,
\item if $\varphi = \lDmd{\psi?}\xi$ then $\psi \in H(v)$,\footnote{The condition $\xi \in H(v)$ is taken care of by global consistency.}
\item if $\varphi = [\alpha;\beta]\psi$ then $[\alpha][\beta]\psi \in H(v)$,
\item if $\varphi = [\alpha \cup \beta]\psi$ then $\{[\alpha]\psi,[\beta]\psi\} \subset H(v)$,
\item if $\varphi = [\psi?]\xi$ then $\overline{\psi} \in H(v)$ or $\xi \in H(v)$,
\item if $\varphi = [\alpha^*]\psi$ then $\{\psi,[\alpha][\alpha^*]\psi\} \subset H(v)$,
\item if $\varphi = [\sigma]\psi$ and $(v,w) \in R_\sigma$ then $\psi \in H(w)$.
\end{itemize}

Given a~model graph $\mM = \langle W,(R_\sigma)_{\sigma \in \mindices}, H\rangle$, the Kripke model $\mM'$ defined by
$\Delta^{\mM'} = W$, $\sigma^{\mM'} = R_\sigma$ for $\sigma \in \mindices$, and $p^{\mM'} = \{w \in W \mid p \in
H(w)\}$ for $p \in \props$ is called the Kripke model corresponding to~$\mM$.

\begin{lemma} \label{lemma: gen clash}
Let $\mM = \langle W,(R_\sigma)_{\sigma \in \mindices}, H\rangle$ be a~consistent and saturated model graph and let
$\mM'$ be the Kripke model corresponding to~$\mM$. Then, for any $w \in W$, if $\mM',w \models \varphi$ then $H(w)$
does not contain $\overline{\varphi}$.
\end{lemma}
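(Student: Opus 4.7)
I would prove the lemma by structural induction on $\varphi$. The base cases $\varphi = \top$, $\varphi = \bot$, $\varphi = p$ and $\varphi = \lnot p$ are immediate from local consistency and the definition of $\mM'$: for instance, if $\mM',w \models p$ then $p \in H(w)$ by definition, so by local consistency $\lnot p \notin H(w)$. The Boolean cases $\varphi = \psi \land \xi$ and $\varphi = \psi \lor \xi$ reduce to the IH via the saturation clauses for the de Morgan duals; e.g.\ if $\mM',w \models \psi \land \xi$ and $\overline{\psi} \lor \overline{\xi} \in H(w)$, saturation gives $\overline{\psi} \in H(w)$ or $\overline{\xi} \in H(w)$, contradicting one of the IHs.

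Before attacking the modal cases, I would prove the following auxiliary transport lemma by induction on the structure of the program~$\alpha$: \emph{if $[\alpha]\xi \in H(w)$ and $(w,u) \in \alpha^{\mM'}$, then $\xi \in H(u)$.} The case $\alpha = \sigma$ is the last saturation clause. The cases $\alpha = \beta;\gamma$, $\alpha = \beta \cup \gamma$, and $\alpha = \psi?$ are handled by peeling off the outermost constructor using the corresponding box-saturation clauses and then applying the IH for $\beta,\gamma$. The case $\alpha = \beta^*$ requires a nested induction on the number $n$ of $\beta$-steps that witness $(w,u) \in (\beta^*)^{\mM'}$: for $n=0$ use the $\psi$-component of the saturation clause for $[\beta^*]\xi$; for $n+1$ use the $[\beta][\beta^*]\xi$ component plus the outer IH on $\beta$ and the inner IH on $n$. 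A symmetric point: in the test case $\alpha = \psi?$, the saturation clause for $[\psi?]\xi$ gives $\overline{\psi} \in H(w)$ or $\xi \in H(w)$; if $(w,w) \in (\psi?)^{\mM'}$ then $\mM',w \models \psi$, so by the outer IH of the main lemma applied to $\psi$ (which is structurally smaller than $[\alpha]\xi$) we rule out $\overline{\psi} \in H(w)$, leaving $\xi \in H(w)$.

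For the diamond case $\varphi = \lDmd{\alpha}\psi$, suppose $\mM',w \models \lDmd{\alpha}\psi$ and, for contradiction, $[\alpha]\overline{\psi} \in H(w)$. Pick a semantic witness $u$ with $(w,u) \in \alpha^{\mM'}$ and $\mM',u \models \psi$; the transport lemma delivers $\overline{\psi} \in H(u)$, contradicting the IH applied to $\psi$ at $u$. For the box case $\varphi = [\alpha]\psi$, suppose $\mM',w \models [\alpha]\psi$ and $\lDmd{\alpha}\overline{\psi} \in H(w)$. By global consistency there is a $\Dmd$-realization for this formula in $\mM$. I would then argue, by induction on the length of this realization (or equivalently on a well-founded ``program-position'' measure derived from the trace's program sequence $S_i$), that the sequence of transitional steps along the realization witnesses a pair $(w,u) \in \alpha^{\mM'}$ whose endpoint $u$ satisfies $\overline{\psi} \in H(u)$. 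Since $\mM',w \models [\alpha]\psi$ forces $\mM',u \models \psi$, the IH yields $\overline{\psi} \notin H(u)$, a contradiction.

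The main obstacle is the Kleene-star case on both sides: in the transport lemma we need induction on the number of $\beta$-iterations, and in the box case we must verify that the $\Dmd$-realization produced by global consistency actually synthesises a semantic $\alpha$-path—this requires the same bookkeeping invariant as in the soundness proof of Lemma~\ref{lemma: soundness PDL}, where the current trace formula $\varphi_i = \triangle_i\psi$ tracks a partially consumed word derivable from $\alpha$ in the grammar $\mG(\alpha)$. Once this invariant is in place, matching transitional edges in the realization to edges of $R_\sigma$ in $\mM'$ gives the desired $(w,u) \in \alpha^{\mM'}$, and the induction closes.
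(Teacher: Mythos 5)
Your proof is correct and follows the same route the paper indicates --- structural induction on $\varphi$ combined with local and global consistency --- except that the paper's own proof is a one-sentence sketch, so you have supplied all the substance yourself (the transport lemma for boxes, the realization-to-path argument for the diamond dual of a box, and the grammar/iteration bookkeeping for $*$), reusing essentially the same machinery the paper deploys in the adjacent truth lemma (Lemma~\ref{lemma: model graph}). The one point worth making explicit is that your transport lemma's test case and your $[\alpha]\psi$ case both invoke the main lemma's inductive hypothesis on test formulas occurring \emph{inside} $\alpha$ (and on $\overline{\psi'}$ rather than $\psi'$), so the induction must be organized on a size measure under which $\psi'$ and $\overline{\psi'}$ have equal rank and every test subformula of $\alpha$ ranks strictly below $[\alpha]\psi$; with that measure fixed, the mutual recursion between the main claim and the transport lemma closes.
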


\begin{proof}
By induction on the structure of $\varphi$, using the global consistency.
\end{proof}

\begin{lemma} \label{lemma: model graph}
Let $X$ and $\Gamma$ be finite sets of traditional formulas in NNF and let $\mM = \langle W,(R_\sigma)_{\sigma \in \mindices},
H\rangle$ be a~consistent and saturated model graph such that $\Gamma \subseteq H(w)$ for all $w \in W$, and $X
\subseteq H(\tau)$ for some $\tau \in W$. Then the Kripke model $\mM'$ corresponding to $\mM$ validates $\Gamma$ and
satisfies $X$ at~$\tau$.
\end{lemma}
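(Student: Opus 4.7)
My plan is to show, by induction on a suitable well-founded ordering on formulas in NNF, that for every $w \in W$ and every $\varphi \in H(w)$ one has $\mM', w \models \varphi$. Once this ``truth lemma'' is established, the conclusion follows immediately: $\Gamma \subseteq H(w)$ for all $w \in W$ gives $\mM',w \models \Gamma$ for every $w$, hence $\mM'$ validates $\Gamma$; and $X \subseteq H(\tau)$ gives $\mM',\tau \models X$. The atomic cases $\top$, $p$, $\lnot p$ use local consistency, and the Boolean cases $\varphi\land\psi$, $\varphi\lor\psi$ use saturation together with the inductive hypothesis on strictly smaller subformulas.

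For the diamond case $\lDmd{\alpha}\varphi \in H(v_0)$, I would invoke global consistency to obtain a $\Dmd$-realization $(v_0,\varphi_0),\ldots,(v_k,\varphi_k)$ with $\varphi_0 = \lDmd{\alpha}\varphi$ and $\varphi_k = \varphi$. By induction on $k$ (using the clauses in the definition of trace and the semantics of $;$, $\cup$, $?$, $*$), I would prove that the underlying sequence of $R_\sigma$-edges in $\mM$ witnesses a pair $(v_0^{\mM'},v_k^{\mM'}) \in \alpha^{\mM'}$: each static step corresponds to a clause in $\mL(\alpha)$ expressing the program combinator, and each transitional step contributes one atomic program to the word traced through $\alpha$. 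Since $\varphi$ is a strict subformula of $\lDmd{\alpha}\varphi$, the outer IH gives $\mM',v_k \models \varphi$, hence $\mM',v_0 \models \lDmd{\alpha}\varphi$. The case $\lDmd{\psi?}\xi$ additionally needs saturation to conclude $\psi \in H(v_0)$ and hence $\mM',v_0 \models \psi$ by IH.

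The box case $[\alpha]\varphi \in H(w)$ is the main obstacle because the Fischer--Ladner ``reductions'' of $[\alpha^*]\varphi$ produce $[\alpha][\alpha^*]\varphi$, which contains $[\alpha^*]\varphi$ as a subformula and thus defeats naive structural induction. I handle this by a nested induction on the structure of the program $\alpha$, proving that whenever $[\alpha]\varphi \in H(w)$ and $(w,y) \in \alpha^{\mM'}$, one has $\varphi \in H(y)$. The cases $\alpha = \sigma$, $\alpha = \beta;\gamma$, $\alpha = \beta\cup\gamma$, $\alpha = \psi?$ follow from saturation plus the appropriate inner IH (and the outer IH on $\overline{\psi}$ in the test case, together with Lemma~\ref{lemma: gen clash}). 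For $\alpha = \beta^*$, I prove the auxiliary claim that $[\beta^*]\varphi \in H(w)$ and $(w,y) \in (\beta^{\mM'})^n$ imply $[\beta^*]\varphi \in H(y)$, by induction on $n$: the base is trivial, and for the step I use saturation to get $[\beta][\beta^*]\varphi \in H(w)$ and then the inner IH on the program $\beta$ to push $[\beta^*]\varphi$ along the single $\beta^{\mM'}$-edge. Saturation then yields $\varphi \in H(y)$, and the outer IH on $\varphi$ gives $\mM',y \models \varphi$, so $\mM',w \models [\alpha]\varphi$.

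The well-foundedness of the outer induction is the one subtle bookkeeping point: I would use a weight measure in which $[\alpha;\beta]\varphi$, $[\alpha\cup\beta]\varphi$, $[\psi?]\varphi$ outweigh the results of their saturation reductions, and in which $[\alpha^*]\varphi$ outweighs $\varphi$ (which is all the outer induction actually needs thanks to the auxiliary claim above). The diamond cases are handled symmetrically. Lemma~\ref{lemma: gen clash} is exactly the tool that turns $\overline{\psi}$-clauses in the test cases into something usable, closing the argument.
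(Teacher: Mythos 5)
Your proof is correct, and its outer skeleton --- a truth lemma showing $\varphi \in H(w)$ implies $\mM',w\models\varphi$ by induction on formulas, with the diamond case discharged by reading a word of $\mL(\alpha)$ off a $\Dmd$-realization and using saturation plus Lemma~\ref{lemma: gen clash} for the tests --- matches the paper's. Where you genuinely diverge is the box case. The paper fixes a word $\delta=\omega_1\ldots\omega_k\in\mL(\alpha)$ witnessing $(w_0,w)\in\alpha^{\mM'}$ and then inducts along the \emph{left derivation} of $\delta$ in the context-free grammar $\mG(\alpha)$, maintaining at each intermediate state $w_i$ a universal modality $\triangle_i$ with $\triangle_i\psi\in H(w_i)$ that still generates the remaining suffix $\omega_{i+1}\ldots\omega_k$; this reuses the grammar machinery the appendix already sets up for soundness. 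You instead run a nested structural induction on the program $\alpha$ (proving that $[\alpha]\chi\in H(w)$ and $(w,y)\in\alpha^{\mM'}$ force $\chi\in H(y)$), reducing the star case to a numeric induction on the number of $\beta^{\mM'}$-steps via the invariant that $[\beta^*]\varphi$ propagates along each single step. The two arguments establish the same propagation fact; yours is more elementary and self-contained (no words, grammars, or derivations), and your explicit remark about why naive structural induction on formulas breaks at $[\alpha^*]\varphi$ --- and why the nested program induction with a weight measure repairs it --- makes precise something the paper's one-line ``induction on the construction of $\varphi$'' leaves implicit. The paper's version buys uniformity with its other proofs and handles arbitrary interleavings of tests and atomic programs in a single sweep, but both routes are sound.
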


\begin{proof}
We prove by induction on the construction of $\varphi$ that if $\varphi \in H(w_0)$ for an arbitrary $w_0 \in W$ then
$\mM',w_0 \models \varphi$. It suffices to consider only the non-trivial cases when $\varphi$ is of the form
$\lDmd{\alpha}\psi$ or $[\alpha]\psi$. Suppose that $\varphi \in H(w_0)$.

Consider the case $\varphi = \lDmd{\alpha}\psi$. Let $(w_0,\varphi_0),\ldots,(w_k,\varphi_k)$ be a~$\Dmd$-realization
for $\varphi$ at $w_0$. We have that $\varphi_0 = \varphi$ and $\varphi_k = \psi$. Let $0 \leq i_1 < \ldots < i_h < k$
be all the indices such that, for $1 \leq j \leq h$, $\varphi_{i_j}$ is of the form $\lDmd{\omega_{i_j}}\varphi_{i_j +
1}$ with $\omega_{i_j}$ of the form $\sigma_{i_j}$ or $\psi_{i_j}?$. Observe that
$\omega_{i_1}\omega_{i_2}\ldots\omega_{i_h} \in \mL(\alpha)$ and there is a~path from $w_0$ to $w_k$ in $\mM$ whose
edges are sequently labeled by those $\omega_{i_j}$ of the form $\sigma_{i_j}$. Since $\mM$ is saturated, for $1 \leq j
\leq h$, if $\omega_{i_j} = (\psi_{i_j}?)$ then $\psi_{i_j} \in H(w_{i_j})$, which, by the inductive assumption,
implies that $\mM',w_{i_j} \models \psi_j$. It follows that $(w_0,w_k) \in \alpha^{\mM'}$. Since $\psi \in H(w_k)$, by
the inductive assumption, we have $\mM',w_k \models \psi$. Therefore $\mM',w_0 \models \lDmd{\alpha}\psi$.

Consider the case $\varphi = [\alpha]\psi$. Let $w$ be an arbitrary node of $\mM$ such that $(w_0,w) \in
\alpha^{\mM'}$. We show that $\psi \in H(w)$. There exists a~word $\delta = \omega_1\ldots\omega_k \in \mL(\alpha)$
such that $(w_0,w) \in \delta^{\mM'}$. Let $1 \leq i_1 < \ldots < i_h \leq k$ be all the indices such that, for $1 \leq
j \leq h$, $\omega_{i_j}$ is of the form $\psi_{i_j}?$. For $1 \leq i \leq k$ such that $i \notin \{i_1,\ldots,i_h\}$,
let $\omega_i = \sigma_i$. There exist $w_1,\ldots,w_k \in W$ such that $w_k = w$ and, for $1 \leq i \leq k$, if
$\omega_i$ is $\sigma_i$ then $(w_{i-1},w_i) \in \sigma_i^{\mM'}$, else ($i \in \{i_1,\ldots,i_h\}$ and $\omega_i =
(\psi_i?)$ and) $w_i = w_{i-1}$ and $w_i \in \psi_i^{\mM'}$, which, by Lemma~\ref{lemma: gen clash}, implies that
$\overline{\psi_i} \notin H(w_i)$. Consider the left derivation of $\omega_1\ldots\omega_k$ from $\alpha$ using the
context-free grammar $\mG(\alpha)$. By induction along this derivation, it can be shown that, for $1 \leq i \leq k$,
there exists a~sequence $\triangle_i$ of universal modal operators such that $\triangle_i\psi \in H(w_i)$ and
$\omega_{i+1}\ldots\omega_k$ is derivable from the program sequence corresponding to $\triangle_i$ using a~left
derivation of $\mG(\alpha)$. Hence $\psi \in H(w_k)$, i.e.,\ $\psi \in H(w)$. By the inductive assumption, it follows
that $\mM',w \models \psi$. Therefore $\mM',w_0 \models \varphi$, which completes the proof.
\end{proof}


\subsection{Completeness}

Let $G$ be an ``and-or'' graph for $(X,\Gamma)$ with a~consistent marking $G'$ and let $v$ be a~node of $G'$. A~{\em saturation path} of $v$ w.r.t.\ $G'$ is a finite sequence $v_0 = v$, $v_1$, \ldots, $v_k$ of nodes of $G'$, with $k \geq 0$, such that, for every $0 \leq i < k$, $v_i$ is an ``or''-node and $(v_i,v_{i+1})$ is an edge of $G'$, and $v_k$ is an ``and''-node.

\begin{lemma} \label{lemma: existence of s-p}
Let $G$ be an ``and-or'' graph for $(X,\Gamma)$ with a~consistent marking $G'$. Then each node $v$ of $G'$ has a saturation path w.r.t.\ $G'$. 
\end{lemma}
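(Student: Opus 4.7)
My plan is to proceed by contradiction. Let $A$ denote the set of nodes $u$ of $G'$ from which a saturation path w.r.t.\ $G'$ exists, and suppose $B := G' \setminus A$ is nonempty. Every ``and''-node of $G'$ (and any end-node, viewed as a trivial instance of $(trans)$ with no premises of the form $\lDmd{\sigma}\varphi$) lies in $A$ with $k=0$, so $B$ contains only ``or''-nodes. Moreover, if $u$ is an ``or''-node in $G'$ and some $G'$-successor $w$ of $u$ lies in $A$, then $u \in A$ by prepending $u$ to $w$'s saturation path. Consequently $B$ is closed under $G'$-edges: every $G'$-successor of a node of $B$ lies again in $B$.

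To derive a contradiction, pick any $v \in B$. If $\mL(v)$ contains a formula $\lDmd{\sigma}\chi$ with $\sigma \in \mindices$ atomic, invoke global consistency to obtain a $\Dmd$-realization $(v,\varphi_0),\ldots,(v_k,\varphi_k)$ of this formula in $G'$. Since no static rule of \cPDL\ has $\lDmd{\sigma}\chi$ (with atomic $\sigma$) as principal formula, the only way the trace can reach $\varphi_k = \chi$ is via an application of $(trans)$ at some intermediate node $v_j$. That $v_j$ is then an ``and''-node reached from $v$ by $G'$-edges; by closure of $B$ it would have to lie in $B$, contradicting the fact that ``and''-nodes lie in $A$.

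The remaining case is that $\mL(v)$ contains no atomic-program diamond but $v$ is still an ``or''-node. I would then follow the maximal chain of static rule applications from $v$ that stays inside $B$. The $\rfs$-restricted rules $\rAnd, \rOr, \rBoxSc, \rBoxCp, \rBoxQm, \rBoxSt$ can each fire at most once per principal formula along the chain, since $\rfs$ grows monotonically along edges and is bounded by the Fischer-Ladner closure $FL(X\cup\Gamma)$. The non-$\rfs$-restricted rules $\rDmdSc, \rDmdCp, \rDmdQm, \rDmdSt$ act only on diamond formulas, and since by assumption no atomic-program diamond ever appears (a direct induction on $\alpha$ shows that reducing a test-only diamond produces only test-only diamonds), each such diamond is eventually dismissed by $\rDmdQm$ or the first branch of $\rDmdSt$. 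Hence the chain must reach a node whose label admits no further static reduction — an end/``and''-node lying in $A$ — contradicting $B$-closure once more.

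The main obstacle will be the cyclic behaviour of $\rDmdSt$ on eventuality formulas such as $\lDmd{(p?)^*}q$, where the principal formula is regenerated along a purely static chain and the simple $\rfs$-based termination argument breaks down. Here I plan to invoke consistency a second time: the $\Dmd$-realization of the eventuality forces the $\psi$-branch of $\rDmdSt$ to be present as a $G'$-edge at some node on the candidate cycle, and this edge produces an exit out of $B$ into $A$. Making this ``exit-edge'' argument precise, and pairing it with the induction on program structure sketched above, is what fills in the contradiction and completes the proof.
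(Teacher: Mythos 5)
Your reduction to a successor-closed ``bad'' set $B$ of ``or''-nodes is sound, and your Case~1 is a correct use of global consistency: a $\Dmd$-realization of $\lDmd{\sigma}\chi$ with $\sigma$ atomic can only discharge that formula via $\rTrans$, so it passes through an ``and''-node reachable from $v$ inside $B$, a contradiction. The genuine gap is exactly where you locate it, and your plan for closing it cannot work as stated. Having proved that $B$ is closed under $G'$-edges, you cannot then claim that the $\psi$-branch of $\rDmdSt$ ``produces an exit out of $B$ into $A$'': if that edge lies in $G'$, it leads to another node of $B$ by your own closure property. Mere presence of the good branch proves nothing; you must follow an entire $\Dmd$-realization of the principal formula and show that the resulting walk through $B$ terminates or meets an ``and''-node. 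Your Case~2 also misstates the situation slightly: labels in $B$ may contain diamonds such as $\lDmd{\sigma^*}p$ that are neither atomic-program diamonds nor ``test-only''; what Case~1 really yields is only that the $G'$-successors chosen inside $B$ never expose an atomic-program diamond.

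The paper dispenses with the contradiction framework and builds the saturation path directly. Starting from $v$, while the current node $v_i$ is an ``or''-node: if the principal formula of the rule expanding $v_i$ is not of the form $\lDmd{\alpha}\varphi$, take any $G'$-successor; if it is $\lDmd{\alpha}\varphi$, invoke global consistency to obtain a $\Dmd$-realization of $\lDmd{\alpha}\varphi$ at $v_i$ and follow its longest prefix consisting of ``or''-nodes. Either that prefix is cut short because the realization enters an ``and''-node (and the saturation path is complete), or the finite realization is traversed entirely and the construction continues from its endpoint. Termination then rests on the $\rfs$ mechanism: each formula not of the form $\lDmd{\alpha}\varphi$ is reduced at most once along a static chain, and each diamond principal is dispatched by a single finite realization rather than step by step, which is precisely what defuses the regeneration of $\lDmd{\alpha^*}\varphi$ by $\rDmdSt$. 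This ``follow a complete realization of the principal formula'' step is the ingredient missing from your Case~2.
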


\begin{proof}
We construct a saturation path $v_0, v_1, \ldots$ of $v$ w.r.t.\ $G'$ as follows. Set $v_0 = v$ and $i = 0$. While $v_i$ is not an ``and''-node do: 
\begin{itemize}
\item If the principal of the static rule expanding $v_i$ is not of the form $\lDmd{\alpha}\varphi$ then let $v_{i+1}$ be any successor of $v_i$ that belongs to $G'$ and set $i := i+1$. 
\item If the principal of the static rule expanding $v_i$ is of the form $\lDmd{\alpha}\varphi$ then:
  \begin{itemize}
  \item let $v_{i+1},\ldots,v_j$ be the longest sequence of ``or''-nodes of $G'$ such that there exist formulas $\varphi_{i+1}$, \ldots, $\varphi_j$ such that the sequence $(v_i,\varphi_i)$, \ldots, $(v_j,\varphi_j)$ is a prefix of a $\Dmd$-realization in $G'$ for $\lDmd{\alpha}\varphi$ at $v_i$; 
  \item set $i := j$.
  \end{itemize}
\end{itemize}

The loop terminates because each formula not of the form $\lDmd{\alpha}\varphi$ may be reduced at most once.\koniec
\end{proof}

\begin{lemma}[Completeness] \label{lemma: comp PDL}
Let $X$ and $\Gamma$ be finite sets of traditional formulas in NNF, and let $G$ be an ``and-or'' graph for $(X,\Gamma)$. Suppose that
$G$ has a~consistent marking $G'$. Then $X$ is satisfiable w.r.t.\ the set $\Gamma$ of global assumptions.
\end{lemma}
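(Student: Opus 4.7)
The plan is to construct a consistent and saturated model graph $\mM$ from $G'$ and then invoke the earlier lemma on model graphs to extract a Kripke model that validates $\Gamma$ and satisfies $X$. First, using Lemma~\ref{lemma: existence of s-p}, I would fix for every node $v$ of $G'$ a saturation path $v = v_0, v_1, \ldots, v_{k(v)}$ ending at an ``and''-node $v_{k(v)}$ that is either expanded by $\rTrans$ or is an end node with no $\Dmd$-formula left to realize. The nodes $W$ of the model graph will be exactly the ``and''-nodes $u$ of $G'$ that appear as the terminal point of some fixed saturation path (including the saturation path from the root of $G$), and for each such $u$ I set $H(u) = \bigcup_{i=0}^{k(u)} \mL(v_i)$ where $v_0, \ldots, v_{k(u)} = u$ is the chosen saturation path starting at $u$ (more precisely, I take the path starting from any node that saturates to $u$, i.e.\ I pool the labels collected along the static-rule descent).

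The edges would be defined from the transitional rule: if $u \in W$ is expanded by $\rTrans$ and produces a successor $w$ in $G'$ via the principal formula $\lDmd{\sigma}\varphi$, I connect $u$ to the terminal ``and''-node $w'$ of the chosen saturation path starting at $w$, and put $(u,w') \in R_\sigma$. Local consistency is immediate: if some $H(u)$ contained $\bot$ or a clashing pair $p, \lnot p$, then the rules $\rBotz$ or $\rBot$ would have forced a node with label $\{\bot\}$ in $G'$, contradicting the local consistency of the marking. Saturation is verified by case analysis on each static rule: whenever $H(u)$ contains a formula whose principal decomposition is required (e.g.\ $\psi \land \xi$, $[\alpha\cup\beta]\psi$, $[\alpha^*]\psi$, $\lDmd{\psi?}\xi$, $[\sigma]\psi$ combined with an edge), the corresponding static rule was applied somewhere on the saturation path (unary rules are forced, and ``or''-rule choices made inside $G'$ supply the relevant conclusion), so the required subformula lies in $H(u)$; for $[\sigma]\psi \in H(u)$ with $(u,w') \in R_\sigma$, the $\rTrans$ rule in \cPDL\ ensures $\psi$ is added to the conclusion whose saturation produces $H(w')$.

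The main obstacle is global consistency of $\mM$, namely that every $\lDmd{\alpha}\varphi \in H(u)$ admits a $\Dmd$-realization in $\mM$. The idea is to translate a $\Dmd$-realization in $G'$ into a $\Dmd$-realization in $\mM$: starting from $u$, follow the trace $(v_0,\varphi_0), \ldots, (v_h,\varphi_h)$ in $G'$ provided by the global consistency of $G'$; whenever the trace moves by a static rule, the resulting formula $\varphi_i$ still appears in the same $H(u)$-node of $\mM$ (by the way $H$ pools labels along saturation paths), so the model-graph trace stays at $u$ and only updates the formula according to the allowed static steps of the definition of trace in a model graph; whenever the trace crosses a $\rTrans$-edge labeled $\lDmd{\sigma}\xi$ from an ``and''-node of $G'$ to a successor $w$, the model-graph trace moves along the corresponding $R_\sigma$-edge from that ``and''-node to the terminal of the saturation path starting at $w$, taking $\varphi_i = \xi$. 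Because $G'$ may not directly hit an ``and''-node between two trace positions, some care is needed to show that any static-rule step in $G'$ produces a legal static trace step in the model-graph definition, and that the formula $\varphi_i$ ends up in the label that is pooled into $H$; this is the part requiring the most careful bookkeeping. Once global consistency is established, Lemma~\ref{lemma: model graph} applied with the root $\tau$ being the terminal of the saturation path of the root of $G$ (for which $X \cup \Gamma \subseteq H(\tau)$) yields a Kripke model that validates $\Gamma$ and satisfies $X$ at $\tau$, completing the proof.
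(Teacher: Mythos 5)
Your construction of the model graph, the saturation check, and the local-consistency check are all sound and essentially match the paper's: your pooled set $\bigcup_i \mL(v_i)$ along a saturation path is exactly the paper's $\mL(v_k)\cup\rfs(v_k)$. The gap is precisely the step you flag as ``requiring the most careful bookkeeping,'' and it is not a bookkeeping issue but a structural one. You fix \emph{one} saturation path per node of $G'$ in advance and then try to realize an eventuality $\lDmd{\alpha}\varphi\in H(u)$ by ``following the trace in $G'$ provided by global consistency of $G'$.'' But that trace makes its own choices at the ``or''-nodes of $G'$ (e.g.\ at an application of $\rDmdSt$ it may pass to the conclusion containing $\psi$ rather than the one containing $\lDmd{\beta}\lDmd{\beta^*}\psi$), and a marking may contain \emph{both} successors of an ``or''-node; whenever the realizing trace leaves your fixed saturation path, the formula it produces need not occur in any label pooled into $H(u)$, so the corresponding static step is simply unavailable in your model graph. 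All you can actually do is follow the decompositions your fixed paths happened to make, and nothing guarantees these ever reach $\varphi$: for $\lDmd{\sigma^*}p$, say, the fixed paths may take the $\lDmd{\sigma}\lDmd{\sigma^*}p$ disjunct at every node they visit, yielding an $R_\sigma$-cycle in $\mM$ on which the eventuality is postponed forever even though it has a $\Dmd$-realization in $G'$ along the branches your paths did not take. So global consistency of $\mM$ can fail, and Lemma~\ref{lemma: model graph} cannot be invoked.

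The paper avoids this by not fixing saturation paths per node: it processes the eventualities of each model-graph node one at a time, and for each formula $\lDmd{\sigma}\lDmd{\alpha_1}\ldots\lDmd{\alpha_h}\varphi\in H(w_0)$ it lays down a fresh chain of model-graph nodes obtained from the concatenation of $\Dmd$-realizations in $G'$ for $\varphi_0,\varphi_1,\ldots,\varphi_h$ (followed by a saturation path), so that the choices at ``or''-nodes are dictated by the realization of that particular eventuality and the chain provably ends where $\varphi$ is reached; nodes are shared only when their $\mL\cup\rfs$ sets coincide, which keeps the graph finite. To repair your proof you would have to either adopt this per-eventuality construction or add a fairness/scheduling argument showing the saturation paths can be chosen so that every pending eventuality is eventually served; neither is present in the proposal.
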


\begin{proof}
We construct a~model graph $\mM = \langle W,(R_\sigma)_{\sigma \in \mindices}, H\rangle$ as follows:
\begin{enumerate}
\item
Let $v_0$ be the root of $G'$ and $v_0,\ldots,v_k$ be a~saturation path of $v_0$ w.r.t.\ $G'$. Set $R_\sigma = \emptyset$ for all $\sigma \in \mindices$ and set $W = \{\tau\}$, where $\tau$ is a~new node. Set $H(\tau) := \mL(v_k) \cup \rfs(v_k)$. Mark $\tau$ as {\em unresolved} and set $f(\tau) = v_k$. (Each node of $\mM$ will be marked either as unresolved or as resolved, and $f$ will map each node of $\mM$ to an ``and''-node of $G'$.)

\item
While $W$ contains unresolved nodes, take one unresolved node $w_0$ and do:
 \begin{enumerate}
 \item For every $\lDmd{\sigma}\lDmd{\alpha_1}\ldots\lDmd{\alpha_h}\varphi \in H(w_0)$, where $\varphi$ is not of the form $\lDmd{\beta}\psi$,~do:
  \begin{enumerate}
  \item Let $\varphi_0 = \lDmd{\sigma}\lDmd{\alpha_1}\ldots\lDmd{\alpha_h}\varphi$, $\varphi_i = \lDmd{\alpha_i}\ldots\lDmd{\alpha_h}\varphi$ for $1 \leq i \leq h$, and $\varphi_{h+1} = \varphi$. Let $u_0 = f(w_0)$. (As a~maintained property of $f$, $\varphi_0$ belongs to the label of $u_0$.) Let the sequence $(u_0,\varphi_0)$, $(u_1,\varphi_1)$ be a~$\Dmd$-realization in $G'$ for $\varphi_0$ at $u_0$. Let $i_1 = 1$. For $1 \leq l \leq h$, let the sequence $(u_{i_l},\varphi_l),\ldots,(u_{i_{l+1}},\varphi_{l+1})$ be a~$\Dmd$-realization in $G'$ for $\varphi_l$ at $u_{i_l}$. Let $u_{i_{h+1}},\ldots,u_m$ be a~saturation path of $u_{i_{h+1}}$ w.r.t.~$G'$.

  \item Let $j_0 = 0 < j_1 < \ldots < j_{n-1} < j_n = m$ be all the indices such that, for $0 \leq j \leq m$, $u_j$ is an ``and''-node of $G$ iff $j \in \{j_0,\ldots,j_n\}$. For $0 \leq s \leq n-1$, let $\lDmd{\sigma_s}\psi_s$ be the label of the edge $(u_{j_s},u_{j_s + 1})$ of $G'$. (We have that $\sigma_0 = \sigma$.)

  \item For $1 \leq s \leq n$ do:
     \begin{enumerate}
     \item Let $Z_s = \mL(u_{j_s}) \cup \rfs(u_{j_s})$.
     \item If there does not exist $w_s \in W$ such that $H(w_s) = Z_s$ then: add a~new node $w_s$ to $W$, set $H(w_s) = Z_s$, mark $w_s$ as unresolved, and set $f(w_s) = u_{j_s}$.
     \item Add the pair $(w_{s-1},w_s)$ to $R_{\sigma_{s-1}}$.
     \end{enumerate}
  \end{enumerate}

 \item Mark $w_0$ as resolved.
 \end{enumerate}
\end{enumerate}

As $H$ is a~one-to-one function and $H(w)$ of each $w \in W$ is a~subset of $FL(X \cup \Gamma)$, the above construction terminates and results in a~finite model graph.

Observe that, in the above construction we transform the sequence $u_0,\ldots,u_m$ of nodes of $G'$, which is a~trace of $\varphi_0$ at $u_0$ that ends with $\varphi$ at $u_m$, to a~sequence $w_0,\ldots,w_n$ of nodes of $\mM$ by sticking together nodes in every maximal saturation path and using both the sets $\mL(u_i)$ and $\rfs(u_i)$. Hence, $\mM$ is saturated and satisfies the local and global consistency properties. That is, $\mM$ is a~consistent and saturated model graph.

Consider Step 1 of the construction. As the label of $v_0$ is $X \cup \Gamma$, we have that $X \subseteq H(\tau)$ and $\Gamma \subseteq H(\tau)$. Consider Step 2(a)iii of the construction, as $u_{j_{s-1}}$ is an ``and''-node and $u_{j_{s-1} + 1}$ is a~successor of $u_{j_{s-1}}$ that is created by the transitional rule, the label of $u_{j_{s-1} + 1}$ contains $\Gamma$, and hence the set $\mL(u_{j_s}) \cup \rfs(u_{j_s})$ also contains $\Gamma$. Hence $\Gamma \subseteq H(w_s)$ for every $w_s \in W$. By Lemma~\ref{lemma: model graph}, the Kripke model corresponding to $\mM$ validates $\Gamma$ and satisfies $X$ at $\tau$. Hence, $X$ is satisfiable w.r.t.~$\Gamma$.
\end{proof}


\section{Soundness and Completeness of \cPDLA}
\label{section: proof PDLA}

\begin{lemma}[Soundness] \label{lemma: soundness PDLA}
Let $\mA$ be an ABox and $\Gamma$ be a~TBox such that $\mA$ is satisfiable w.r.t.\ $\Gamma$. Then any ``and-or'' graph for $(\mA,\Gamma)$ has a~consistent marking.
\end{lemma}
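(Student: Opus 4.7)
The approach is to adapt the proof of Lemma~\ref{lemma: soundness PDL}, accounting for the two kinds of nodes --- complex and simple --- that appear in an ``and-or'' graph for $(\mA,\Gamma)$. Fix a Kripke model $\mM$ that satisfies $\mA$ and validates $\Gamma$. The plan is to build a consistent marking $G'$ top-down from the root of $G$, maintaining the invariant that every included node $v$ carries a witness of satisfiability from $\mM$: for a complex node $v$, the witness is simply $\mM$ itself (with its built-in interpretation $a \mapsto a^\mM$ of state variables), so that $\mM, a^\mM \models \psi$ for every $(a\!:\!\psi) \in \mL(v)$; for a simple node $v$, the witness is a designated state $u_v \in \Delta^\mM$ with $\mM, u_v \models \mL(v)$. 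The root is included because, by assumption, $\mM$ satisfies $\mA$ and $\Gamma$ is validated by $\mM$, so the initial label is satisfied by $\mM$ under the identity interpretation of state variables.

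The inductive construction handles the different node types as follows. At a complex ``or''-node (any of the additional ``prime'' static rules other than $\rTransP$, $\rBotzP$, $\rBotP$), the rule is sound, so at least one of the possible conclusions is still satisfied by $\mM$ with the same variable interpretation, and the corresponding successor is added to $G'$. At a complex ``and''-node $v$ (one to which $\rTransP$ is applied), every successor is a simple node corresponding to some $(a\!:\!\lDmd{\sigma}\varphi) \in \mL(v)$; all are added to $G'$, and each is assigned a designated state that is a $\sigma$-successor of $a^\mM$ in $\mM$ witnessing $\lDmd{\sigma}\varphi$ --- such a state also satisfies every $\psi$ with $(a\!:\![\sigma]\psi) \in \mL(v)$ and every formula of $\Gamma$, which justifies the invariant for the new simple nodes. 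Simple ``or''- and ``and''-nodes are handled exactly as in Lemma~\ref{lemma: soundness PDL}, using the designated state $u_v$ to drive the choices.

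Local consistency is immediate, since $\{\bot\}$ is unsatisfiable and hence no node of $G'$ has that label. Global consistency splits into two parts. For a simple node $v \in G'$ and $\lDmd{\alpha}\varphi \in \mL(v)$, the construction of a $\Dmd$-realization is exactly the one given in Lemma~\ref{lemma: soundness PDL}, driven by the witness word $\delta \in \mL(\alpha)$ supplied by $\mM$ at $u_v$. For a complex ``and''-node $v \in G'$ and $(a\!:\!\lDmd{\alpha}\varphi) \in \mL(v)$, a static realization is needed. The key observation is that because $v$ is an ``and''-node and static rules have strictly higher priority than $\rTransP$, and since the $\rfs(v)$ mechanism only blocks re-reduction of \emph{already-reduced} principal formulas, the label $\mL(v)$ is saturated under the complex static rules for formulas of this form: whenever $(a\!:\!\lDmd{\beta;\gamma}\psi) \in \mL(v)$ then $(a\!:\!\lDmd{\beta}\lDmd{\gamma}\psi) \in \mL(v)$, and analogously for $\cup$, $?$, and $^*$, with at least one branch of the non-unary ``or''-rules present thanks to the choices made on the path from an ancestor complex ``or''-node. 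Starting from $\varphi_0 = \lDmd{\alpha}\varphi$, one unfolds step by step, at each non-deterministic point choosing the disjunct consistent with the $\mM$-witness for $\lDmd{\alpha}\varphi$ at $a^\mM$.

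The main obstacle is termination of the static trace in the presence of the $*$ operator: one must not pick the $\lDmd{\beta}\lDmd{\beta^*}\psi$ branch forever when unfolding $\lDmd{\beta^*}\psi$. This is handled by fixing a \emph{minimal} $\mM$-witness for the top-level $\lDmd{\alpha}\varphi$, as in the proof of Lemma~\ref{lemma: soundness PDL}: each occurrence of a starred subformula $\lDmd{\beta^*}\psi$ is witnessed by some minimal $n \geq 0$ with $(a^\mM,u) \in (\beta^n)^\mM$ and $u \in \psi^\mM$, and the unfolding picks the $\psi$ branch exactly when $n = 0$. Combined with the fact that the total program-size strictly decreases at every other unfolding step, this guarantees termination with either $\varphi_k = \varphi$ or $\varphi_k = \lDmd{\sigma}\psi'$ for some $\sigma \in \mindices$, matching the two terminal conditions in the definition of static realization. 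The remaining bookkeeping --- arguing that the top-down construction halts and produces a finite subgraph by appealing to uniqueness of node contents and to the Fischer-Ladner closure --- parallels the proof of Lemma~\ref{lemma: soundness PDL}.
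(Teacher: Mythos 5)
Your overall architecture matches the paper's: drive a path $v_0,\ldots,v_k$ of complex nodes from the root down to a complex ``and''-node while preserving satisfaction in a fixed model $\mM$, hang the simple part of the marking off $v_k$, and reuse the argument of Lemma~\ref{lemma: soundness PDL} for the $\Dmd$-realizations at simple nodes. The gap is in how you obtain the static realizations at $v_k$. Because the ``prime'' static rules are monotonic and each principal formula is reduced at most once (via $\rfs$), the only disjunct of $a\!:\!\lDmd{\beta\cup\gamma}\psi$ or $a\!:\!\lDmd{\beta^*}\psi$ available in $\mL(v_k)$ is the one committed to when that formula was reduced higher up the path. You make that commitment generically (``at least one of the possible conclusions is still satisfied by $\mM$''), and only afterwards, when building the static trace at $v_k$, invoke the minimal $\mM$-witness to decide which branch to follow. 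But a static trace is confined to formulas actually present in $\mL(v_k)$, so it cannot follow the witness if the witness-dictated branch was not the one chosen. Concretely: suppose $a\!:\!\lDmd{(q?)^*}p$ occurs on the path and $q\land p$ holds at $a^\mM$. Both conclusions of $\rDmdStP$ are then satisfied by $\mM$; if you pick the branch adding $a\!:\!\lDmd{q?}\lDmd{(q?)^*}p$, then $a\!:\!p\notin\mL(v_k)$, and every static trace of $a\!:\!\lDmd{(q?)^*}p$ at $v_k$ cycles between $\lDmd{(q?)^*}p$ and $\lDmd{q?}\lDmd{(q?)^*}p$ forever, never ending in $p$ nor in a formula $\lDmd{\sigma}\varphi'$ with $\sigma\in\mindices$. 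No static realization exists, so the marking you build need not be consistent.

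The paper closes exactly this hole with the partial function $Next$: the first time $a\!:\!\lDmd{\alpha}\varphi$ becomes principal, it fixes a witness word $\delta\in\mL(\alpha)$ together with a left derivation of $\delta$ in the grammar $\mG(\alpha)$, and precomputes the entire chain of reducts down to the empty modality or to one beginning with an atomic $\lDmd{\sigma}$; every subsequent ``or''-node choice for that formula and its reducts is then forced to follow this chain, and the same chain \emph{is} the static realization at $v_k$ (termination coming from finiteness of the derivation). Your proposal contains the right ingredients --- the witness word and the minimal unfolding --- but they must govern the choices at the moment each ``or''-rule is applied, not be deferred to $v_k$; as written, the two halves of your argument are not guaranteed to cohere.
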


\begin{proof}
Let $G$ be an ``and-or'' graph for $(\mA,\Gamma)$ and let $v_0$ be the root of $G$. Clearly, $\mL(v_0)$ is satisfiable w.r.t.\ $\Gamma$. Let $\mM$ be a Kripke model that satisfies $\mL(v_0)$ and validates $\Gamma$.
We first construct a sequence $v_0, \ldots, v_k$ of nodes of $G$ such that:
\begin{enumerate}
\item[(1)] for $1 \leq i \leq k$, $v_i$ is a successor of $v_{i-1}$ and $\mM$ satisfies $\mL(v_i)$; 
\item[(2)] for $0 \leq i < k$, $v_i$ is an ``or''-node; 
\item[(3)] $v_k$ is an ``and''-node; 
\item[(4)] each formula of the form $a\!:\!\lDmd{\alpha}\varphi$ of $\mL(v_k)$ has a static realization at $v_k$. 
\end{enumerate}

Set $i := 0$. While $v_i$ is not an ``and''-node, do: 
\begin{itemize}
\item If the static rule expanding $v_i$ is not one of $\rDmdScP$, $\rDmdCpP$, $\rDmdQmP$, $\rDmdStP$, then let $v_{i+1}$ be any successor of $v_i$ such that $\mM$ satisfies $\mL(v_{i+1})$, and set $i := i+1$.
\item Else: Let the principal formula of rule applied to $v_i$ be $a\!:\!\lDmd{\alpha}\varphi$. 
  \begin{itemize}
  \item If $Next(a\!:\!\lDmd{\alpha}\varphi)$ is not defined (i.e.\ no strategy for reducing $a\!:\!\lDmd{\alpha}\varphi$ has been established): We have that $a^\mM \in (\lDmd{\alpha}\varphi)^\mM$. Thus, there exists a word $\delta \in L(\alpha)$ and a state $u \in \varphi^\mM$ such that $(a^\mM,u) \in \delta^\mM$. Since $\delta \in \mL(\alpha)$, $\delta$ is derivable from $\alpha$ using a left derivation of the context-free grammar~$\mG(\alpha)$. Set $\beta := \alpha$ and set $\triangle$ to the existential modality corresponding to $\beta$. While $\triangle$ is not the empty modality and does not start with a modal operator of the form $\lDmd{\sigma}$ and $Next(a\!:\!\triangle\varphi)$ is not defined: set $\beta$ to the next expression in the mentioned derivation of $\delta$; let $\triangle'$ be the existential modality corresponding to $\beta$; set $Next(a\!:\!\triangle\varphi) := (a\!:\!\triangle'\varphi)$; and set $\triangle := \triangle'$. It is easy to see that $a^\mM \in (\triangle\varphi)^\mM$ is an invariant of this loop. 

  \item Let $Next(a\!:\!\lDmd{\alpha}\varphi) = (a\!:\!\triangle\psi)$. Note that $a\!:\!\triangle\psi$ must belong to the label of one of the successors of $v_i$ as a formula obtained from $a\!:\!\lDmd{\alpha}\varphi$. Let $v_{i+1}$ be such a successor of $v_i$. By the above mentioned invariant, $a^\mM \in (\triangle\psi)^\mM$. Hence, $\mM$ satisfies $\mL(v_{i+1})$. Set $i := i+1$ to continue the main loop.  
  \end{itemize}
\end{itemize}

The loop must terminate because all paths of complex nodes are finite (see the proof of Lemma~\ref{lemma: size constr and-or gr ABox}). Set $k := i$. The sequence $v_0, \ldots, v_k$ clearly satisfies Conditions (1)-(3). We show that it also satisfies Condition (4). Let $a\!:\!\lDmd{\alpha}\varphi \in \mL(v_k)$. We prove that $a\!:\!\lDmd{\alpha}\varphi$ has a static realization at $v_k$. Since $v_k$ is an ``and''-node, either $\alpha$ is an atomic program (and it is done) or $a\!:\!\lDmd{\alpha}\varphi \in \rfs(v_k)$. Consider the second case. There must exist $0 \leq i < k$ such that $a\!:\!\lDmd{\alpha}\varphi$ is the principal formula of the tableau rule applied to $v_i$. The partial function $Next$ determines a static realization for $a\!:\!\lDmd{\alpha}\varphi$ at~$v_k$.  

We construct a consistent marking $G'$ of $G$ as follows. At the beginning, $G'$ contains the nodes $v_0,\ldots,v_k$ and the edges $(v_i,v_{i+1})$ for $0 \leq i < k$. Next, add to $G'$ all successors $v$ of $v_k$, which are simple nodes of $G$, together with the edges $(v_k,v)$. Then, for every simple node $v$ of $G'$ and for every successor $w$ of $v$ in $G$, if $\mM$ satisfies $\mL(w)$ then add the node $w$ and the edge $(v,w)$ to $G'$. 

It is easy to see that $G'$ is a~marking of $G$. Also, $G'$ clearly satisfies the local consistency property. The first condition of the global consistency property of $G'$ holds due to the construction of the sequence $v_0,\ldots,v_k$. For the second condition of the global consistency property of $G'$, we can use the same proof as for Lemma~\ref{lemma: soundness PDL}. Therefore $G'$ is a consistent marking of~$G$. This finishes the proof.\koniec
\end{proof}

The definition of ``saturation path'' remains unchanged for the case with ABoxes. The counterpart of Lemma~\ref{lemma: existence of s-p} about existence of saturation paths for the case with ABoxes also holds. For this, one can use the same argumentation as in the proof of Lemma~\ref{lemma: existence of s-p} together with the fact that all paths of complex nodes are finite. 

\begin{lemma}[Completeness]
Let $\mA$ be an ABox, $\Gamma$ a~TBox, and $G$ an ``and-or'' graph for $(\mA,\Gamma)$. Suppose that $G$ has a~consistent
marking $G'$. Then $\mA$ is satisfiable w.r.t.~$\Gamma$.
\end{lemma}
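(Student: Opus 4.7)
The plan is to adapt the model-graph construction from Lemma~\ref{lemma: comp PDL} so as to produce a Kripke model that, in addition to validating~$\Gamma$, interprets each state variable of~$\mA$. The model graph $\mM = \langle W, (R_\sigma)_{\sigma \in \mindices}, H\rangle$ is built in two layers: a complex layer of nodes $\tau_a$, one for each state variable $a$ occurring in~$\mA$, and a simple layer grown via $\rTransP$-successors using the construction of Lemma~\ref{lemma: comp PDL}.

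First I~would use the ABox counterpart of Lemma~\ref{lemma: existence of s-p} to follow, from the root of $G'$, a saturation path to a complex ``and''-node $v^*$. Because all additional static rules of \cPDLA\ other than $\rBotzP$ and $\rBotP$ are monotonic, the label $Z^* := \mL(v^*)$ contains $\mA$, contains $a\!:\!\varphi$ for every $\varphi \in \Gamma$ and every state variable $a$ occurring in $\mA$, and is closed under all the prime static rules. In particular, the saturation of $\rBoxP$ at $v^*$ ensures that whenever $a\!:\![\sigma]\varphi \in Z^*$ and $\sigma(a,b) \in Z^*$, we also have $b\!:\!\varphi \in Z^*$. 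I~would then add, for each state variable $a$ occurring in $Z^*$, a node $\tau_a \in W$ with $H(\tau_a) = \{\varphi \mid (a\!:\!\varphi) \in Z^*\}$ and, for each $\sigma(a,b) \in Z^*$, the pair $(\tau_a,\tau_b)$ to $R_\sigma$. Next, for each existential assertion $a\!:\!\lDmd{\sigma}\varphi \in Z^*$ whose static realization at $v^*$ terminates at some $\lDmd{\sigma}\varphi'$ (not at~$\varphi$ itself), the rule $\rTransP$ at $v^*$ creates a simple successor $w$ in $G'$ with $\mL(w) = \{\varphi'\} \cup \{\psi \mid (a\!:\![\sigma]\psi) \in Z^*\} \cup \Gamma$; I~would apply the construction of Lemma~\ref{lemma: comp PDL} to the sub-``and-or''-graph of $G'$ rooted at $w$, producing a simple sub-model-graph whose root~$\tau$ satisfies $\mL(w) \subseteq H(\tau)$, and then attach it to $\tau_a$ by adding $(\tau_a,\tau)$ to $R_\sigma$.

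Verification that $\mM$ is consistent and saturated proceeds layer by layer. Local consistency of complex nodes follows because $\rBotzP$ and $\rBotP$ never fire in $G'$; local consistency of the simple part is inherited from Lemma~\ref{lemma: comp PDL}. Saturation of each $\tau_a$ under the connectives, test and $[\sigma]$ is immediate from the closure of $Z^*$ under the prime static rules together with the $\rBoxP$-saturation noted above and the construction of $\mL(w)$ at each $\rTransP$-edge; saturation of simple nodes is again inherited. For global consistency, a formula $a\!:\!\lDmd{\alpha}\varphi \in H(\tau_a)$ is realised by first unrolling its static realization at $v^*$ (a trace that stays at $\tau_a$) and, in the case it ends at $\lDmd{\sigma}\varphi'$, following the bridging edge $(\tau_a,\tau) \in R_\sigma$ and then appending the $\Dmd$-realization of $\varphi'$ in the simple layer rooted at~$\tau$. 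A mild generalisation of Lemma~\ref{lemma: model graph} that additionally interprets each state variable $a$ as $\tau_a$ then yields a Kripke model validating~$\Gamma$ (present in every $H(w)$) and satisfying $\mA$ (since $(a\!:\!\varphi) \in \mA \subseteq Z^*$ implies $\varphi \in H(\tau_a)$).

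The main obstacle is the careful splicing of three different notions of ``trace'': the static trace at the single complex node $v^*$, the bridging $\rTransP$-edge, and the $\Dmd$-trace inside the simple sub-graph. One must check that the $\lDmd{\sigma}\varphi'$ terminating a static realization at $v^*$ is precisely the label of the $\rTransP$-edge leaving $v^*$ for the simple successor $w$, and that $\mL(w)$ carries all the universal formulas $\psi$ with $a\!:\![\sigma]\psi \in Z^*$ so that the attached simple sub-model-graph is compatible with the $[\sigma]$-saturation demanded at $\tau_a$. Once this bookkeeping is in place, global consistency and saturation of $\mM$ follow routinely, and the theorem is obtained by invoking the Kripke-model correspondence of Lemma~\ref{lemma: model graph}.
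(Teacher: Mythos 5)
Your proposal is correct and follows essentially the same route as the paper's proof: a saturation path from the root of $G'$ to a complex ``and''-node, one model-graph node per state variable built from that node's label (with $R_\sigma$ given by the role assertions and $\rBoxP$-saturation handling the $[\sigma]$-conditions among them), simple sub-model-graphs attached along the $\rTransP$-edges via the construction of Lemma~\ref{lemma: comp PDL}, and global consistency obtained by splicing a static realization with a bridging edge and a $\Dmd$-realization in the simple layer, before invoking Lemma~\ref{lemma: model graph}. The bookkeeping concerns you flag are exactly the points the paper's construction handles.
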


\begin{proof}
We construct a~model graph $\mM = \langle W,(R_\sigma)_{\sigma \in \mindices}, H\rangle$ as follows:
\begin{enumerate}
\item
Let $v_0$ be the root of $G'$ and $v_0,\ldots,v_k$ be a~saturation path of $v_0$ w.r.t.\ $G'$.
Let $W_0$ to the set of all state variables occurring in $\mA$ and set $W = W_0$. For each $a \in W_0$, let $H(a)$ be the set of all $\varphi$ such that $a:\varphi$ belongs to the label of $v_k$, and mark $a$ as {\em unresolved}. (Each node of $\mM$ will be marked either as unresolved or as resolved.) For each $\sigma \in \mindices$, set $R_\sigma = \{(a,b) \mid \sigma(a,b) \in \mA\}$.

\item
While $W$ contains unresolved nodes, take one unresolved node $w_0$ and do:
 \begin{enumerate}
 \item For every $\lDmd{\sigma}\lDmd{\alpha_1}\ldots\lDmd{\alpha_h}\varphi \in H(w_0)$, where $\varphi$ is not of the form $\lDmd{\beta}\psi$,~do:
  \begin{enumerate}
  \item
     \begin{enumerate}
     \item Let $\varphi_0 = \lDmd{\sigma}\lDmd{\alpha_1}\ldots\lDmd{\alpha_h}\varphi$, $\varphi_i = \lDmd{\alpha_i}\ldots\lDmd{\alpha_h}\varphi$ for $1 \leq i \leq h$, and $\varphi_{h+1} = \varphi$.

     \item If $w_0 \in W_0$ then:
        \begin{itemize}
        \item Let $u_0 = v_k$.
        \item Let $u_1$ be the node of $G'$ such that the edge $(u_0,u_1)$ is labeled by $(w_0 : \varphi_0)$. (Recall that $w_0$ is a~state variable and note that $\varphi_1$ belongs to the label of $u_1$.)
        \end{itemize}

     \item Else:
        \begin{itemize}
        \item Let $u_0 = f(w_0)$. ($f$ is a~constructed mapping that maps each node of $\mM$ not belonging to $W_0$ to an ``and''-node of $G'$. As a~maintained property of $f$, $\varphi_0$ belongs to the label of $u_0$.)
        \item Let $u_1$ be the node of $G'$ such that the edge $(u_0,u_1)$ is labeled by $\varphi_0$. (Note that $\varphi_1$ belongs to the label of $u_1$.)
        \end{itemize}

     \item Let $i_1 = 1$. For $1 \leq l \leq h$, let the sequence $(u_{i_l},\varphi_l)$, \ldots, $(u_{i_{l+1}},\varphi_{l+1})$ be a~$\Dmd$-realization in $G'$ for $\varphi_l$ at $u_{i_l}$. Let $u_{i_{h+1}}$, \ldots, $u_m$ be a~saturation path of $u_{i_{h+1}}$ w.r.t.~$G'$.
     \end{enumerate}

  \item Let $j_0 = 0 < j_1 < \ldots < j_{n-1} < j_n = m$ be all the indices such that, for $0 \leq j \leq m$, $u_j$ is an ``and''-node of $G$ iff $j \in \{j_0,\ldots,j_n\}$. Let $\sigma_0 = \sigma$. For $1 \leq s \leq n-1$, let $\lDmd{\sigma_s}\psi_s$ be the label of the edge $(u_{j_s},u_{j_s + 1})$ of $G'$.

  \item For $1 \leq s \leq n$ do:
     \begin{enumerate}
     \item Let $Z_s = \mL(u_{j_s}) \cup \rfs(u_{j_s})$.
     \item If there does not exist $w_s \in W$ such that $H(w_s) = Z_s$ then: add a~new node $w_s$ to $W$, set $H(w_s) = Z_s$, mark $w_s$ as unresolved, and set $f(w_s) = u_{j_s}$.
     \item Add the pair $(w_{s-1},w_s)$ to $R_{\sigma_{s-1}}$.
     \end{enumerate}
  \end{enumerate}

 \item Mark $w_0$ as resolved.
 \end{enumerate}
\end{enumerate}

Note that the above construction differs from the construction given in the proof of Lemma~\ref{lemma: comp PDL} mainly
by Steps 1 and 2(a)iB.

The above construction terminates and results in a~finite model graph because that: for every $w,w' \in W \setminus W_0$, $w \neq w'$ implies $H(w) \neq H(w')$, and for every $w \in W$, $H(w)$ is a~subset of $FL(X)$, where $X = \Gamma \cup \{ \varphi \mid (a:\varphi) \in \mA \textrm{ for some } a\}$.

Similarly as for the construction given in the proof of Lemma~\ref{lemma: comp PDL}, it can be seen that $\mM$ is
saturated and satisfies the local consistency property. The global consistency condition clearly holds for nodes from
$W \setminus W_0$. For $w \in W_0$ and $\lDmd{\alpha}\varphi \in H(w)$, observe that the formula has a~trace ending at
some node of $W\setminus W_0$, which then continues to form a~$\Dmd$-realization for $\lDmd{\alpha}\varphi$ at~$w$.
Hence, $\mM$ is a~consistent and saturated model graph.

By the definition of ``and-or'' graphs for $(\mA,\Gamma)$ and monotonicity of the ``prime'' static rules of \cPDLA\ except $\rBotzP$ and $\rBotP$: if $(a:\varphi) \in \mA$ then $\varphi \in H(a)$; if $\sigma(a,b) \in \mA$ then $(a,b) \in R_\sigma$; and $\Gamma \subseteq H(a)$ for all $a \in W_0$. We also have that $\Gamma \subseteq H(w)$ for all $w \in W\setminus W_0$. Hence, by Lemma~\ref{lemma: model graph}, the Kripke model corresponding to $\mM$ validates $\Gamma$ and satisfies~$\mA$. Thus $\mA$ is satisfiable w.r.t.~$\Gamma$.
\end{proof}
} 


\end{document}